\documentclass[final,reqno]{siamltex}

\textwidth 6.0truein
\oddsidemargin 0.25truein\evensidemargin 0.25truein
\usepackage{amsmath,amssymb}
\usepackage{bm,cite,mathrsfs,nicefrac,upgreek}
\usepackage[dvips]{graphicx}
\usepackage{color}
\usepackage{sgame}
\raggedbottom
\newcommand{\cA}{{\mathcal{A}}}
\newcommand{\oA}{{\Bar{\mathcal{A}}}}
\newcommand{\cS}{{\mathcal{S}}}
\newcommand{\oS}{{\Bar{\mathcal{S}}}}
\newcommand{\Cc}{{\mathcal{C}}}
\newcommand{\cX}{{\mathcal{X}}}
\newcommand{\Pm}{{\mathcal{P}}}
\newcommand{\Bor}{{\mathcal{B}}}
\newcommand{\cI}{\mathcal{I}}
\newcommand{\df}{\triangleq}
\newcommand{\Prob}{{\mathbb{P}}} 
\newcommand{\Exp}{{\mathbb{E}}} 
\newcommand{\Ind}{{\bm{1}}} 
\newcommand{\sF}{{\mathfrak{F}}} 
\newcommand{\abs}[1]{\lvert#1\rvert}
\newcommand{\babs}[1]{\bigl\lvert#1\bigr\rvert}
\newcommand{\supnorm}[1]{{\lVert}#1{\rVert}_{\infty}}
\newcommand{\magn}[1]{\left\vert #1 \right\vert}
\newcommand{\D}{\mathrm{d}}
\newcommand{\NN}{\mathbb{N}}
\newtheorem{claim}{Claim}[section]
\newtheorem{remark}{Remark}[section]

\title{Aspiration Learning in Coordination Games
\thanks{This work was supported by
ONR project \#N00014-09-1-0751 and AFOSR project
\#FA9550-09-1-0538.}}

\author{Georgios C. Chasparis\thanks{Department of Automatic Control,
Lund University, 221 00-SE Lund,
Sweden (georgios.chasparis@control.lth.se).
http://www.control.lth.se/chasparis.}
\and Ari Arapostathis\thanks{Department of Electrical and Computer
Engineering, The University of Texas at Austin, 1 University Station,
Austin, TX 78712 (ari@mail.utexas.edu).
http://www.ece.utexas.edu/\~\relax ari.
This author's  work was supported in part by the Office of Naval Research
through the Electric Ship Research and Development Consortium.}
\and Jeff S. Shamma\thanks{School of Electrical and
Computer Engineering, Georgia Institute of Technology, Atlanta, GA
30332 (shamma@gatech.edu). http://www.prism.gatech.edu/\~\relax jshamma3.}}

\begin{document}
\maketitle

\begin{abstract}
We consider the problem of distributed convergence to efficient
outcomes in coordination games through dynamics based on aspiration learning.
Under aspiration learning, 
a player continues to play an action as long as the rewards received 
exceed a specified aspiration level. Here, the aspiration level is a 
fading memory average of past rewards, and these levels also are 
subject to occasional random perturbations. A player becomes dissatisfied 
whenever a received reward is less than the aspiration level, in which case 
the player experiments with a probability proportional to the degree of
dissatisfaction. 
Our first contribution is the characterization
of the asymptotic behavior of the induced Markov chain of the
iterated process in terms of an equivalent finite-state Markov chain. We
then characterize explicitly the behavior of the proposed aspiration
learning in a generalized version of coordination games,
examples of which include network formation and common-pool games.
In particular, we show that in generic coordination games the
frequency at which an efficient action profile is played
can be made arbitrarily large. Although convergence to efficient
outcomes is desirable, in several coordination games, such as
common-pool games, attainability of fair outcomes, i.e., sequences
of plays at which players experience highly rewarding returns with
the same frequency, might also be of special interest. To this end,
we demonstrate through analysis and simulations that aspiration
learning also establishes fair outcomes in all symmetric
coordination games, including common-pool games.
\end{abstract}

\begin{keywords}
coordination games, aspiration learning, game theory
\end{keywords}

\markboth{G.~C.~Chasparis, A.~Arapostathis and J.~S.~Shamma}
{Aspiration Learning in Coordination Games}

\begin{AMS}
68T05, 91A26, 91A22, 93E35, 60J05, 91A80
\end{AMS}

\section{Introduction}

Distributed coordination is of particular interest in many
engineering systems. Two examples are distributed overlay routing
or network formation \cite{Chun04} and medium access control
\cite{Inaltekin05} in wireless communications. In either case, nodes
need to utilize their resources \emph{efficiently} so that a
desirable global objective is achieved. For example, in network formation, 
nodes need to choose their immediate links so that connectivity is achieved with a 
minimum possible communication cost, i.e., minimum number of links.
Similarly, in medium access control, users need to establish a
\emph{fair} scheduling of accessing a shared communication channel
so that collisions (i.e., situations at which two or more users
access the common resource) are avoided. In these scenarios, achieving 
\emph{coordination} in a distributed and adaptive
fashion to an \emph{efficient} 
outcome is of special interest.

The distributed yet coupled nature of these problems, combined with a 
desire for online adaptation, motivates using models based on game 
theoretic learning \cite{FudenbergLevine98,Young04,Sandholm10}.
In game theoretic learning, each agent is endowed with a set of 
actions and a utility/reward function that depends on that agent's 
and other agents' actions. Agents then learn which action to play
based only on their own previous experience of the game (actions
played and utilities received).
A major challenge in this setting is that explicit utility function 
optimization may be impractical. This may be due to inherent complexity 
(e.g., a large number of players or actions), or the lack of any closed 
form expression for the utility function. Rather, rewards can be measured 
online. In terms of game theoretic learning, this eliminates adaptation 
based on an ability to compute a ``best reply''.  Another obstacle to utility 
maximization is that from any agent's perspective, the environment includes 
other adapting agents, and hence is nonstationary. Consequently, actions 
that may have been effective in the past need not continue to be effective.

Motivated by these issues, this paper considers a form of distributed
learning dynamics known as \emph{aspiration learning}, where agents
``satisfice" rather than ``optimize". The
aspiration learning scheme is based on a simple principle of
``win-stay, lose-shift" \cite{Posch99}, according to which
a successful action is repeated while an unsuccessful action is
dropped. The success of an action is determined by a simple
comparison test of its performance with the player's desirable
return (\emph{aspiration level}). The aspiration level is updated
to incorporate prior experience into the agent's success criterion.
Through this learning scheme, agents \emph{learn} to play their
``best" action.

The history of aspiration learning schemes starts with the
pioneering work of \cite{Simon55}, where satisfaction seeking
behavior was used to explain social decision making.
A simple aspiration learning model is presented in \cite{Posch99},
where games of two players and two actions are considered, and decisions
are taken based on the ``win-stay, lose-shift'' rule. 
In the special case of two-player/two-action \emph{mutual 
interest games} and \emph{symmetric coordination games}, respectively, references, 
\cite{Pazgal95} and \cite{Kim99} show that the
payoff-dominant action profile is selected with probability close to one.
Similar are the results in
\cite{Karandikar98,ChoMatsui05}. However, contrary to
\cite{Pazgal95} and \cite{Kim99}, both models incorporate a small
perturbation either in the aspiration update \cite{Karandikar98} or
in the action update \cite{ChoMatsui05}. 

Recent research efforts on equilibrium selection in games
have focused on achieving distributed convergence to
\textit{Pareto-efficient} payoff profiles, i.e., payoff profiles at
which no action change can make a player better off while not
making some other player worse off. For example,
reference \cite{MardenYoungPao11} introduced an aspiration learning
algorithm that converges (in distribution) to action
profiles that maximize social welfare in multiple player games.
Some key characteristics of
this algorithm is that agents keep track of their most recent
satisfactory action and satisfactory payoff (benchmark action and payoff), and
they update their actions by following a ``win-stay lose-shift''
rule, where the aspiration level is defined as the benchmark payoff.
Convergence to the Pareto-efficient payoffs in two player games also has
been investigated by \cite{ArieliBabichenko11}. The learning
algorithm considered in \cite{ArieliBabichenko11} has two
distinctive features: a) agents commit on playing a series of actions
for a $k$-period interval, and b) agents make decisions according to
a ``win-stay lose-shift'' rule, where aspiration levels are computed
as the running average payoff over all the previous $k$-period
intervals. It is shown that, in two player games, the agents' payoffs
converge to a small neighborhood of the set of the Pareto-efficient
payoffs almost surely if $k$ is sufficiently large.

In this paper, we also focus on achieving convergence to efficient
payoff profiles (also part of the Pareto-efficient payoff profiles)
in coordination games of large number of players and actions. Agents
apply an aspiration learning scheme that is motivated by
\cite{Karandikar98}. Our goal is to a) characterize explicitly the
asymptotic behavior of the process for generic games of
multiple players and actions, and b) derive conditions under which
\emph{efficient} payoffs are selected in large coordination games. Our
main contribution is the characterization of the asymptotic behavior
of the induced Markov chain by means of the invariant distributions of
an equivalent finite-state Markov
chain, whenever the experimentation probability becomes
sufficiently small.
This equivalence simplifies the analysis of what would otherwise
be an infinite state Markov process.
These results extend prior analysis on this type of
aspiration learning schemes to games of multiple players and
actions. We also specialize the results for a class of games that
is a generalized version of so-called \emph{coordination games}. In
particular, we show that, in these games, the unique invariant
distribution of the equivalent finite-state Markov chain puts
arbitrarily large weight on the \emph{payoff-dominant} action
profiles if the step size of the aspiration-level update becomes
sufficiently small. We finally demonstrate the utility of the
learning scheme to network formation games, which is of independent
interest, since prior learning schemes on network formation are
primarily based on best-response dynamics, e.g., \cite{Bala00}.

While convergence to payoff-dominant action profiles in
coordination games is desirable, another desirable property is a
notion of fairness. 
In particular, for some coordination games where coincidence of
interests is not so strong, such as the \emph{Battle of the Sexes} (cf.,
\cite[Section~2.3]{Osborne94}), convergence to a single action
profile might not be \emph{fair} for all agents that would probably
rather be in a different action profile. Instead, an alternation
between several action profiles might be more desirable, usually 
described through distributions in the \textit{joint} action
space. An example of a class of such coordination games is 
so-called common-pool games, where multiple users need to coordinate
on utilizing a limited common resource. The proposed aspiration
learning algorithm also may provide a distributed and adaptive
approach for convergence to fair outcomes in such symmetric coordination
games, such as common-pool games. This property is of independent interest,
since it is relevant to several scenarios of distributed resource
allocation, such as medium access control in wireless communications
\cite{Inaltekin05}.

In comparison to prior and other current work, this paper develops 
(and corrects) the specific  model of aspiration learning in
\cite{Karandikar98} beyond two player games.
The paper goes on to derive specialized results for coordination 
games involving convergence to efficient action profiles and fairness
in symmetric games.
The results in \cite{MardenYoungPao11} use a simpler finite state model of
aspiration learning and are applicable to almost all games.
The results in \cite{MardenYoungPao11} establish convergence to efficient
action profiles, but as yet do not specify selection/fairness 
among these action profiles.
The model of \cite{ArieliBabichenko11} is more closely related to the present 
model, but with a different definition of aspiration levels and a different
mechanism to perturb aspirations.
The results of convergence to efficiency in \cite{ArieliBabichenko11}
extend beyond coordination games while requiring two player games and do
not specify fairness/selection among efficient profiles.

The remainder of the paper is organized as follows. Section~\ref{S2}
defines coordination games and presents two special cases of coordination
games, namely network formation and common-pool games.
Section~\ref{S3} presents the aspiration learning algorithm and its
convergence properties in games of multiple players and actions.
Section~\ref{S4} specializes the convergence analysis to
coordination games and establishes convergence to efficient
outcomes. It also demonstrates the results through simulations in
network formation games. Section~\ref{S5} extends the convergence
analysis to symmetric coordination games and establishes conditions
under which convergence to fair outcomes is also established.
Finally, Section~\ref{S6} presents concluding remarks.

\emph{Terminology:} We consider the standard setup of finite
strategic-form games. There is a finite set of \emph{agents} or \emph{players},
$\cI=\{1,2,\dotsc,n\}$, and each agent has a finite set of
actions, denoted by $\cA_{i}$. The set of action profiles is
the Cartesian product $\cA\df \cA_{1}\times\dotsb\times\cA_{n}$;
$\alpha_{i}\in\cA_{i}$ denotes an \textit{action} of agent $i$; and
$\alpha=(\alpha_{1},\dotsc,\alpha_{n})\in\cA$ denotes the \textit{action
profile} or \textit{joint action} of all agents.
The \emph{payoff/utility function} of
player $i$ is a mapping $u_{i}:\cA\rightarrow\mathbb{R}$.
A strategic-form game, denoted $\mathscr{G}$, consists of 
the sets $\mathcal{I}$, $\mathcal{A}$ and the preference relation induced
by the utility functions $u_i$, $i\in\mathcal{I}$. An
action profile $\alpha^*\in\cA$ is a \emph{(pure) Nash
equilibrium} if
\begin{equation}\label{E-Nash}
u_{i}(\alpha_{i}^*,\alpha_{-i}^*) \geq u_{i}(\alpha_{i}',\alpha_{-i}^*)
\end{equation}
for all $i\in\cI$ and $\alpha_{i}'\in\cA_{i}$, where
$-i$ denotes the complementary set $\cI\setminus\{i\}$.
We denote the set of pure Nash equilibria by $\cA^*$. In
case the inequality \eqref{E-Nash} is strict, the Nash
equilibrium is called a \emph{strict Nash equilibrium}. For the
remainder of the paper, the term ``Nash
equilibrium" always refers to a ``pure Nash equilibrium."

\section{Coordination Games}\label{S2}

\subsection{Definitions}

Before defining coordination games, we first need to define the
notion of \emph{better reply}:

\begin{definition}[Better reply]
The better reply of agent $i\in\cI$ to an action
profile $\alpha=(\alpha_{i},\alpha_{-i})\in\cA$ is a set
valued map $\mathrm{BR}_{i}:\cA\to2^{\cA_{i}}$ such that
for any $\alpha_{i}^*\in\mathrm{BR}_{i}(\alpha)$ we have
$u_{i}(\alpha_{i}^*,\alpha_{-i}) > u_{i}(\alpha_{i},\alpha_{-i})$.
\end{definition}

A \emph{coordination game} is defined as follows:

\begin{definition}[Coordination game]\label{D2.2}
A game of two or more agents is a
coordination game if there exists
$\oA\subset\cA$ such that the following
conditions are satisfied:
\begin{enumerate}
\item[\upshape{(a)}]
for any $\Bar{\alpha}\in\oA$ and
$\alpha\notin\oA$,
\begin{equation}\label{E-CG1}
u_{i}(\Bar{\alpha}) \geq u_{i}(\alpha)\quad \text{~for all~} i\in\cI\,,
\end{equation}
i.e., $\oA$ payoff-dominates $\cA\setminus\oA\,$;

\item[\upshape{(b)}]
for any
$\alpha\in\cA\setminus(\cA^*\cup\oA)$,
there exist $i\in\cI$ and action
$\alpha_{i}'\in \mathrm{BR}_{i}(\alpha)$ such that
\begin{equation}\label{E-CG2}
u_{j}(\alpha_{i}',\alpha_{-i})\geq u_{j}(\alpha_{i},\alpha_{-i})\quad
\text{~for all~} j\neq{i}\,;
\end{equation}

\item[\upshape{(c)}]
for any
$\alpha^*\in\cA^*\setminus\oA$
(if non-empty), there exist an action profile $\Tilde{\alpha}\in\cA$
and a sequence of distinct agents $j_{1},\dotsc,j_{n-1}\in\cI$, such that
\begin{equation*}
u_{i}\left(\Tilde{\alpha}_{j_{1}},\dotsc,\Tilde{\alpha}_{j_{\ell}},
\alpha^*_{-\{j_{1},\dotsc,j_{\ell}\}}\right)< u_{i}(\alpha^*)
\end{equation*}
for all $i\in\{j_{1},j_2,\dotsc,j_{\ell+1}\}$, $\ell=1,2,\dotsc,n-1$.

\end{enumerate}
\end{definition}

A \emph{strict coordination game} refers to a coordination
game with the inequality \eqref{E-CG1} being
strict.

The conditions of a coordination game establish a weak form of
``coincidence of interests'' and define a larger class of games than
the ones traditionally considered as coordination games, e.g.,
\cite{Vanderschraaf01,Lewis02}. For example, according to
\cite{Lewis02}, one of the conditions that a coordination game needs
to satisfy is that payoff differences among players at any action
profile are much smaller than payoff differences among different
action profiles.
This condition reflects a form of coincidence of interests.
Definition~\ref{D2.2}~(b) also
establishes a similar form of coincidence of interests, but weaker
in the sense that it holds for at least one direction of action change.


Note also that existence of Nash equilibria is not necessary for a
game to be a coordination game.
Furthermore, if $\cA^*\subset\oA$,
then Definition~\ref{D2.2} can be written solely
with respect to the desirable set of profiles $\oA$.
In that case,
Definition~\ref{D2.2}~(c) becomes vacuous since
$\cA^*\setminus\oA=\varnothing$.

A trivial example of a coordination game is the Stag-Hunt Game of
Table~\ref{Tb:SHG}.

\begin{table}[!ht]
\centering
\begin{game}{2}{2}
 & A & B\\
A &$4,4$ &$0,2$\\
B &$2,0$ &$3,3$\\
\end{game}\\[5pt]
\caption{The Stag-Hunt Game}\label{Tb:SHG}
\end{table}

First, there exists a payoff-dominant profile, namely $(A,A)$, that
can be identified as the desirable set $\oA$, and
satisfies Definition~\ref{D2.2}~(a).
Also, from any action profile outside
$\cA^*\cup\oA$, namely $(A,B)$ or
$(B,A)$, there is a better reply that improves the payoff for all
agents (i.e., Definition~\ref{D2.2}~(b) holds).
Lastly, for any Nash equilibrium
profile outside $\oA$, i.e., $(B,B)$, there is a
player (row or column) and an action which makes everyone worse off
(i.e., Definition~\ref{D2.2}~(c) holds).
Thus, the Stag-Hunt game satisfies all
the conditions of Definition~\ref{D2.2}.

Note finally that in some games, there might be multiple choices
for the selection of the desirable set $\oA$. For
example, in the Stag-Hunt game of Table~\ref{Tb:SHG}, an
alternative selection of $\oA$ corresponds to the
union of the action profiles $(A,A)$ and $(B,B)$.
In that case, both
properties~(a) and (b) of Definition~\ref{D2.2} hold,
while property~(c) is vacuous. In other words, the Stag-Hunt
game is also a coordination game with respect to the new selection
of the desirable set $\oA\,$.

\begin{claim}\label{CL2.1}
In any coordination game and for any action profile
$\alpha\notin\cA^*\cup\oA$ there exists a
sequence of action profiles $\{\alpha^{k}\}$, such that
$\alpha^{0}=\alpha$ and $\alpha^{k}_{i}\in\mathrm{BR}_{i}(\alpha^{k-1})$
for some $i$, terminates at an action profile in $\cA^*\cup\oA\,$.
\end{claim}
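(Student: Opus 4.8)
The plan is to build the sequence greedily by repeatedly invoking part~(b) of Definition~\ref{D2.2}, using the total payoff as a potential function to guarantee that the construction halts, and halts only inside $\cA^*\cup\oA$. Set $\alpha^{0}=\alpha$. Suppose $\alpha^{k}\notin\cA^*\cup\oA$ has been constructed. Then part~(b) supplies an agent $i$ and an action $\alpha_{i}'\in\mathrm{BR}_{i}(\alpha^{k})$ with $u_{j}(\alpha_{i}',\alpha^{k}_{-i})\geq u_{j}(\alpha^{k})$ for every $j\neq i$; setting $\alpha^{k+1}=(\alpha_{i}',\alpha^{k}_{-i})$ produces exactly the one-agent better-reply move required by the claim, namely $\alpha^{k+1}_{i}\in\mathrm{BR}_{i}(\alpha^{k})$. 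By the definition of better reply, $u_{i}(\alpha^{k+1})>u_{i}(\alpha^{k})$, while $u_{j}(\alpha^{k+1})\geq u_{j}(\alpha^{k})$ for all $j\neq i$; in other words the payoff vector moves strictly up in the weak Pareto order at each stage.

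Next I would introduce the potential $\Phi(\beta)\df\sum_{j\in\cI}u_{j}(\beta)$ and observe that each step strictly increases it, $\Phi(\alpha^{k+1})>\Phi(\alpha^{k})$, since one summand strictly increases and none decreases. Because $\cA$ is finite, $\Phi$ attains only finitely many values, so a strictly increasing chain $\Phi(\alpha^{0})<\Phi(\alpha^{1})<\dotsb$ cannot persist indefinitely; in particular, no action profile can be revisited along the sequence. Hence the recursive construction must stop after finitely many steps, at some terminal profile $\alpha^{K}$.

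Finally I would argue that the construction can stop only at a profile in $\cA^*\cup\oA$. Indeed, the sole obstruction to extending the sequence by one more step is the failure of part~(b) of Definition~\ref{D2.2} to apply, and part~(b) applies to \emph{every} profile of $\cA\setminus(\cA^*\cup\oA)$. Therefore $\alpha^{K}\in\cA^*\cup\oA$, which is precisely the assertion of the claim.

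The only delicate point, and the one I would treat most carefully, is the termination step: one must verify that the payoff vector advances strictly in the weak Pareto order at every stage so that no cycle among the finitely many action profiles is possible, which is what makes the potential $\Phi$ strictly monotone. Everything else is a direct unwinding of the definitions, and the single-agent structure of each move is supplied verbatim by part~(b).
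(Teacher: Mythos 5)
Your proposal is correct and follows essentially the same argument as the paper: repeated application of Definition~\ref{D2.2}~(b) to build the sequence, with the total payoff $\sum_{j\in\cI}u_{j}(\alpha)$ serving as a strictly increasing potential that, by finiteness of $\cA$, forces termination at a profile in $\cA^*\cup\oA$. The extra care you devote to the termination step (no revisits, strict Pareto improvement) is exactly the content of the paper's observation that the map $\alpha\mapsto\sum_{i\in\cI}u_{i}(\alpha)$ is strictly monotone along the sequence.
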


\begin{proof}
By Definition~\ref{D2.2}~(b)
there exists an agent $i\in\cI$ and an action
$\alpha_{i}^{1}\in\mathrm{BR}_{i}(\alpha^0)$, such that
$u_{i}\left(\alpha_{i}^1,\alpha_{-i}^0\right) >
u_{i}\left(\alpha_{i}^0,\alpha_{-i}^0\right)$ and
$u_{s}\left(\alpha_{i}^1,\alpha_{-i}^0\right)\ge
u_{s}\left(\alpha_{i}^0,\alpha_{-i}^0\right)$ for all
$s\ne{i}\,\,.$ Define $\alpha^1\df (\alpha^1_{i}, \alpha^0_{-i})$.
Unless $\alpha^1\in \cA^*\cup\oA$, we can repeat the same argument
to generate an action profile $\alpha^2$ and so on. Thus, we
construct a sequence $(\alpha^0,\alpha^1,\alpha^2,\dotsc)$ along
which the map $\alpha\mapsto\sum_{i\in\cI} u_{i}(\alpha)$ is
strictly monotone. However, since $\cA$ is finite, the sequence must
necessarily terminate at some $\alpha^k\in\cA^*\cup\oA$ for
$k<\abs{\cA}$.
\end{proof}

Note that when $\oA\subseteq\cA^*$, then
a direct consequence of Claim~\ref{CL2.1}
is that coordination games are weakly acyclic games
(cf.,~\cite{Young04}).

\subsection{Network Formation Games}\label{Sec:NFG}

Network formation games are of particular interest in wireless
communications due to their utility in modeling distributed topology
control \cite{Santi05} and overlay routing \cite{Chun04}. Recent
developments in distributed learning dynamics, e.g.,
\cite{ChasparisShamma_CDC08}, have also provided the tools for
computing efficient solutions for these games in a distributed
manner.

To illustrate how a network formation game can be modeled as a
coordination game, we introduce a simple network formation game
motivated by \cite{JacksonWolinsky96}.
Let us consider $n$ nodes deployed on the plane and assume that the
set of actions of each agent $i$, $\cA_{i}$, contains all
possible combinations of neighbors of $i$, denoted $\mathcal{N}_{i}$,
with which a link can be established, i.e.,
$\cA_{i}=2^{\mathcal{N}_{i}}$. Links are considered
unidirectional, and a link established by node $i$ with node $s$,
denoted $(s,i)$, starts at $s$ with the arrowhead pointing to $i$.
A \emph{graph} $G$ is defined as a collection of nodes and
directed links. Define also a \emph{path} from $s$ to $i$ as a
sequence of nodes and directed links that starts at $s$ and ends to
$i$ following the orientation of the graph, i.e.,
$$(s\rightarrow{i}) =
\bigl\{s=s_{0},(s_{0},s_{1}),s_{1},\dotsc,(s_{m-1},s_{m}),s_{m}=i\bigr\}$$
for some positive integer $m$.
In a \emph{connected} graph, there is a path from any
node to any other node.

Let us consider the utility function
$u_{i}:\cA\rightarrow\mathbb{R}$, $i\in\cI$, defined by
\begin{equation}\label{E-NFG}
 u_{i}(\alpha) \df
 \sum_{s\in\cI\setminus\{i\}}\chi_{\alpha}(s\rightarrow{i})
 - c \magn{\alpha_{i}},
\end{equation}
where $\magn{\alpha_{i}}$ denotes the number of links corresponding
to $\alpha_{i}$ and $c$ is a constant in $(0,1)$. Also,
\begin{equation*}
 \chi_{\alpha}(s\to{i}) \df \begin{cases}
 1 & \mbox{if } (s\to{i})\subseteq G_{\alpha}\,,\\
 0 & \mbox{otherwise,}
 \end{cases}
\end{equation*}
where $G_\alpha$ denotes the graph induced by joint action $\alpha$.
The resulting Nash equilibria are usually called \emph{Nash networks}
\cite{Bala00}.
As it was shown in Proposition~4.2 in
\cite{ChasparisShamma_CDC08}, a network $G^*$ is a Nash network if
and only if it is \emph{critically connected}, i.e., i) it is
connected, and ii) for any $(s,i)\in{G}$, $(s\rightarrow i)$ is the unique path
from $s$ to $i$. For example, the resulting Nash networks for $n=3$
agents and unconstrained neighborhoods are shown in
\figurename~\ref{Fig:NN}.

\begin{figure}[!ht]
\centering
\includegraphics[width=2.2in]{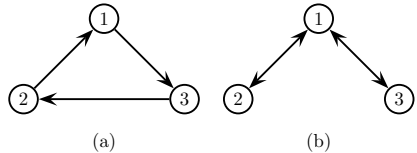}
\caption{Nash networks in case of $n=3$ agents and $0<c<1$.}
\label{Fig:NN}
\end{figure}

Let us define $\oA$ to be the following set of
action profiles
\begin{equation*}
 \oA \df \{\alpha^{*}\in\cA:
 u_{i}(\alpha^*) = \max_{\alpha\in\cA}\;u_{i}(\alpha)
 \text{~for all~} i\in\cI\}\,,
\end{equation*}
which corresponds to the set of \emph{payoff-dominant} networks.
Note that payoff-dominant networks (if they exist) are connected
with minimum number of links.
Also, \emph{not} all Nash networks are necessarily payoff-dominant.
For example, in \figurename~\ref{Fig:NN}(a),
assuming that $0<c<1$, all players realize the same utility, which
is equal to $2 - c$. This is a strict Nash network since each agent
can only be worse off by unilaterally changing its links. It is also
the payoff-dominant network. On the other hand, \figurename~\ref{Fig:NN}(b)
is a non-strict Nash network and is payoff-dominated by
\figurename~\ref{Fig:NN}(a).

The utility function \eqref{E-NFG}
corresponds to the \emph{connections model} of
\cite{JacksonWolinsky96} and has been used to describe various
economic and social contexts such as transmission of information. It
has also been applied for distributed topology control in wireless
networks \cite{Komali08}. Practically, it constitutes a measure of
network connectivity, since the maximum utility for node $i$ is
achieved when there is a path from any other node to $i$.

\begin{claim}\label{CL2.2}
The network formation game defined by
\eqref{E-NFG} is a coordination
game, provided the set of payoff-dominant networks is non-empty.
\end{claim}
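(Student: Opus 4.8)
The plan is to verify the three defining conditions of Definition~\ref{D2.2} directly, taking $\oA$ to be the set of payoff-dominant networks (non-empty by hypothesis). Two elementary monotonicity facts about the connectivity payoff \eqref{E-NFG} carry most of the weight. First, adding a directed link to $G_{\alpha}$ can only create new paths, never destroy existing ones, so $\chi_{\alpha}(s\to j)$ is nondecreasing in the link set for every ordered pair $(s,j)$; hence a single link addition weakly raises the connectivity term of \emph{every} player, while the cost $c$ is charged only to the player who adds it. Second, if a link $(s,i)\in\alpha_{i}$ is \emph{redundant}, meaning $s$ still reaches $i$ after its removal, then any walk that used $(s,i)$ can be rerouted through the surviving $s$-to-$i$ path; consequently removing a redundant link leaves $\chi_{\alpha}(w\to j)$ unchanged for every pair, keeping all players' connectivity terms fixed while strictly lowering the owner's cost by $c$.

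Condition~(a) is then immediate: for $\Bar{\alpha}\in\oA$ we have $u_{i}(\Bar{\alpha})=\max_{\alpha}u_{i}(\alpha)\ge u_{i}(\alpha)$ for all $i$ and all $\alpha$, which is exactly \eqref{E-CG1}. For condition~(b), fix $\alpha\notin\cA^{*}\cup\oA$. Since $\alpha$ is not a Nash network, Proposition~4.2 of \cite{ChasparisShamma_CDC08} shows that $G_{\alpha}$ fails to be critically connected. If some present link is not the unique path between its endpoints, it is redundant, and removing it is a better reply for its owner (cost drops by $c$, connectivity unchanged) that leaves every other player's payoff untouched, giving \eqref{E-CG2}. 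Otherwise every present link is a unique path, so criticality can only fail through connectivity, i.e.\ some ordered pair is not joined by a path. I then rule out the ``saturated'' situation in which, for every $j$ and every admissible neighbor $t\in\mathcal{N}_{j}$, $t$ already reaches $j$: chaining the saturation hypothesis along any path of a connected payoff-dominant network (which exists by assumption) would force every pair to be joined, contradicting the connectivity failure. Hence some player $j$ has an admissible link $t\to j$ whose tail cannot yet reach $j$; adding it raises $j$'s connectivity by at least one unit, a net gain of at least $1-c>0$, and weakly helps everyone else, again yielding \eqref{E-CG2}.

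Condition~(c) is where the real work lies. Given $\alpha^{*}\in\cA^{*}\setminus\oA$, the network $G_{\alpha^{*}}$ is critically connected, hence strongly connected with a \emph{unique} directed path between every ordered pair, and being payoff-dominated it is not a single spanning cycle. I take $\Tilde{\alpha}$ to be the empty network, $\Tilde{\alpha}_{i}=\varnothing$ for all $i$, and choose the elimination order $j_{1},\dotsc,j_{n-1}$ by a pre-order traversal of a spanning out-arborescence of $G_{\alpha^{*}}$ rooted at a carefully chosen leftover agent $j_{n}=r$. Any agent $j_{k}$ that has already switched to $\Tilde{\alpha}_{j_{k}}=\varnothing$ has no incoming links, is reached by nobody, and collects payoff $0<u_{j_{k}}(\alpha^{*})=(n-1)-c\magn{\alpha^{*}_{j_{k}}}$, so emptied agents are always worse off. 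For the next agent $j_{\ell+1}$, the pre-order guarantees that its parent in the arborescence has already been emptied, so the unique path by which $r$ reached $j_{\ell+1}$ is severed at that parent; thus $j_{\ell+1}$ loses $r$ as a reacher while its own cost is unchanged, making it strictly worse off, as required.

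The main obstacle is precisely this last construction: proving that every critically connected network admits a root $r$ of in-degree strictly less than $n-1$ that has a single out-neighbor reaching all of $\cI\setminus\{r\}$ without passing through $r$, so that the arborescence has a single child of the root and every other node's parent is distinct from $r$. The in-degree bound is what forces $r$ itself, as the final leftover agent, to become strictly worse off once all of its potential reach is stripped away; the single-child property is what prevents the pre-order argument from stalling on a second child of the root. Establishing the existence of such a root requires exploiting the rigid ``tree-of-cycles'' structure that the unique-path property imposes on strongly connected graphs, and this structural lemma, rather than conditions~(a) or~(b), is the crux of the proof.
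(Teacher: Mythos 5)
Your handling of conditions (a) and (b) is sound, and in fact more careful than the paper's own proof: where the paper disposes of (b) in one line by asserting that a best reply ``always retains connectivity,'' your redundant-link / missing-link dichotomy (remove a link whose endpoints stay connected, or add a link from a node that cannot yet reach you, ruling out the saturated case by chaining along a connected payoff-dominant network) makes the mechanism explicit and verifiable. The gap is in condition (c). You set up the elimination scheme correctly (empty actions, pre-order of the arborescence of unique paths from a reserved root $r$), and you correctly isolate exactly what it requires: a root with in-degree strictly less than $n-1$, so that the leftover agent is strictly hurt at the final stage, and with a single child in the arborescence, so that the severing argument never stalls on a node whose parent is $r$. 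But you then declare the existence of such a root to be ``the crux'' and stop---you never prove it. As written, this is a reduction of the claim to an unproved structural lemma, not a proof, and the lemma is where the real content of (c) lies. It is true and provable along the lines you gesture at: uniqueness of paths forces any two directed cycles of a critically connected graph to share at most one vertex and every edge to lie on exactly one cycle, so the cycles form a tree; take a leaf cycle and let $r$ be one of its non-shared vertices; then $r$ has in-degree and out-degree $1$, its unique out-neighbor reaches every other node without passing through $r$, and since condition (c) is vacuous for $n=2$ (there the two-cycle is payoff-dominant, so $\cA^{*}\setminus\oA=\varnothing$), the bound $1<n-1$ holds whenever the condition must be checked. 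Until something of this kind is supplied, condition (c) is not established.

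For comparison, the paper proves (c) with the same basic construction ($\Tilde{\alpha}$ equal to ``no links,'' propagated along links of $G_{\alpha^{*}}$), but far more cavalierly: it asserts that for \emph{any} starting agent $j_{1}$ some $j_{2}$ with $(j_{1},j_{2})\in G_{\alpha^{*}}$ becomes strictly worse off, and then says ``continue in the same manner.'' Your caution is warranted, because that assertion is false: in the critically connected network on three nodes with arrows $1\to 2$, $2\to 1$, $2\to 3$, $3\to 2$, emptying agent $1$ removes only the arrow $2\to 1$, after which agents $2$ and $3$ retain their full connectivity and only agent $1$ is hurt; the elimination order genuinely must be chosen (here one must start at agent $2$). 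So you have located a real soft spot that the paper's own argument glosses over---but locating the obstacle and resolving it are different things, and your write-up does the former without the latter.
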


\begin{proof}
For a joint action $\alpha\notin\cA^*$ suppose that an agent $i$
picks the \emph{best reply} in
$\mathrm{BR}_{i}(\alpha)\neq\varnothing$ (i.e., the most profitable
better reply). Then no other agent becomes worse off, since a best
reply for $i$ always retains connectivity. Note that this is not
necessarily true for any other better reply. Thus,
Definition~\ref{D2.2}~(b) is satisfied. In order to show
property~(c), consider any joint action $\alpha$ that is a Nash
network. If any one agent $j_{1}$ selects the action
$\Tilde{\alpha}_{j_{1}}$ of establishing ``no links", then there
exists at least one other agent $j_{2}\neq{j_{1}}$ whose payoff
becomes strictly less than the equilibrium payoff (e.g., pick
$j_{2}$ such that $(j_{1},j_{2})\in G_\alpha$). This is due to the
fact that $\alpha$ is critically connected. Continue in the same
manner by selecting $\Tilde{\alpha}_{j_{2}}$ to be the action of
establishing ``no links", and so on. This way, we may construct a
sequence of agents and an action profile which satisfies
Definition~\ref{D2.2}~(c) of a coordination game.
\end{proof}

The condition that payoff-dominant networks exist is not
restrictive. For example, if
$\mathcal{N}_{i}=\cI\setminus\{i\}$ for all $i$, then the set of
\emph{wheel networks} (cf.,~\cite{ChasparisShamma_CDC08}) is payoff dominant.

In a forthcoming section, we present a distributed optimization
approach for achieving convergence to payoff-dominant networks
through aspiration learning which is of independent interest.

\subsection{Common-Pool Games}\label{Sec:CPG}

Common-pool games refer to strategic interactions where two or more
agents need to decide unilaterally whether or not to utilize a
limited common resource. In such interactions, each agent would
rather use the common resource by itself than share it with another
agent, which is usually penalizing for both.

We define common-pool games as follows:

\begin{definition}[Common-Pool Game]
A common-pool game is a strategic-form game such that for each agent
$i\in\cI$, $\cA_{i}=\{p_{0},p_{1},\dotsc,p_{m-1}\}$,
with $0\le p_{0}<p_{1}<\dotsb<p_{m-1}$, and
\begin{equation*}
u_{i}(\alpha) \df \begin{cases}
1-c_{j}\,, & \text{if~}\alpha_{i}=p_{j} \text{~and~} \alpha_{i}
> \max_{\ell\neq{i}}\alpha_{\ell}\,,\\[3pt]
-c_{j} + \tau_{j}\,, & \text{if~}\alpha_{i}=p_{j} \text{~and~}
\exists s\in\cI\setminus\{i\}
\mbox{~s.t.~} \alpha_{s} > \max_{\ell\neq{s}}\alpha_{\ell}\,,\\[3pt]
-c_{j}\,, & \text{if~}\alpha_{i}=p_{j} \text{~and~}
\nexists s\in\cI \mbox{~s.t.~} \alpha_{s} > \max_{\ell\neq{s}}\alpha_{\ell}\,,
 \end{cases}
\end{equation*}
where $0\le c_{0}< \dotsb < c_{m-1} < 1$, $\tau_{j} > 0$ for all
$j=0,1,\dotsc,m-2$, and
$$-c_{0} < -c_{m-1} + \tau_{m-1} < \dotsb < -c_{0}+\tau_{0} < 1-c_{m-1}\,.$$
\end{definition}

This definition of a common-pool game can be viewed as a
finite-action analog of continuous-action common-pool games defined
in \cite{Meinhardt99}.
Table~\ref{Tb:CPG} presents an example of a common-pool game of 2 players
and 3 actions.
\begin{table}[!ht]
\centering\footnotesize
\begin{game}{3}{3}
 & $p_{0}$ & $p_{1}$ & $p_{2}\strut$\\
$p_{0}$ & $-c_{0},-c_{0}$ & $-c_{0}+\tau_{0},1-c_{1}$ &
$-c_{0}+\tau_{0}, 1-c_{2}\strut$ \\
$p_{1}$ & $1-c_{1},-c_{0}+\tau_{0}$ & $-c_{1},-c_{1}$ &
$-c_{1}+\tau_{1}, 1-c_{2}\strut$ \\
$p_{2}$ & $1-c_{2},-c_{0}+\tau_{0}$ & $1-c_{2},-c_{1}+\tau_{1}$ &
$-c_{2},-c_{2}\strut$ \\
\end{game}\\[5pt]
\caption{A common-pool game of 2 players and 3 actions.}\label{Tb:CPG}
\end{table}

We call ``successful'' any action profile in which one player's
action is strictly greater than any other player's action. Any other
situation corresponds to a ``failure.'' In common-pool games, we
define the set of desirable action profiles $\oA$, as the set of
successful action profiles, i.e.,
\begin{equation}\label{E-CPG}
\oA \df \Bigl\{\alpha\in\cA: \exists i\in\cI \mbox{ s.t. } \alpha_{i} >
\max_{\ell\neq{i}}\;\alpha_{\ell}\Bigr\}\,.
\end{equation}

For example, this set of joint actions corresponds to the
off-diagonal action profiles in Table~\ref{Tb:CPG}.
Moreover, the set $\oA$ payoff-dominates the set $\cA\setminus\oA\,$.

\begin{claim}
Any common-pool game is a strict coordination game.
\end{claim}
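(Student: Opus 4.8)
The plan is to verify the three conditions (a), (b), (c) of Definition~\ref{D2.2} for a common-pool game, with $\oA$ taken to be the set of successful action profiles from~\eqref{E-CPG}, and to check that the payoff dominance in~\eqref{E-CG1} is \emph{strict} so that the game qualifies as a \emph{strict} coordination game. Condition~(a) is essentially already observed in the text: I would confirm that every successful profile $\Bar\alpha\in\oA$ gives the unique ``winner'' payoff $1-c_j \geq 0$ and gives every other agent $-c_\ell + \tau_\ell$, whereas any failure profile $\alpha\notin\oA$ gives every agent a strictly negative payoff $-c_\ell$. Using the chain of inequalities $-c_0 < -c_{m-1}+\tau_{m-1} < \dotsb < -c_0+\tau_0 < 1-c_{m-1}$ in the definition, every payoff appearing in a successful profile strictly exceeds every payoff appearing in a failure profile, which yields the strict form of~\eqref{E-CG1} directly.

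\smallskip

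For condition~(b), I would take an arbitrary failure profile $\alpha\in\cA\setminus(\cA^*\cup\oA)$ and exhibit a single agent whose better reply makes no one worse off. In a failure profile either two or more agents tie for the maximum, or all agents would have to change to reach success; the natural move is to let one agent among those currently at (or tied at) the top raise its action by a single step to $p_{j+1}$, or more directly, to have one agent jump to an action strictly above all others so that the profile becomes successful. That deviating agent goes from $-c_j$ to $1-c_{j'}\geq 0$, a strict improvement, while every other agent moves from the failure payoff $-c_\ell$ to the success payoff $-c_\ell+\tau_\ell$, which is strictly larger since $\tau_\ell>0$; hence~\eqref{E-CG2} holds (in fact strictly). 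I would need to check that such an upward move is available, i.e.\ that the top agent is not already playing $p_{m-1}$ in a way that blocks improvement, but a tie at the top can always be broken upward unless every agent plays $p_{m-1}$, and that all-top profile is itself a failure from which a downward move by all-but-one is the relevant escape; I would handle this boundary case explicitly.

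\smallskip

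For condition~(c), I would characterize $\cA^*\setminus\oA$, the pure Nash equilibria that are \emph{not} successful. Here the main conceptual work lies: I expect the only such equilibria to be the ``all-play-the-same-maximal-action'' profiles, in particular the profile where every agent plays $p_0$ (no one can profitably deviate because raising one's action to win costs $c_j-c_0$ while gaining $1$, which must be weighed against the parameters), so I would first pin down exactly which failure profiles are Nash. For each such $\alpha^*$, I must produce an auxiliary profile $\Tilde\alpha$ and an ordering of agents $j_1,\dotsc,j_{n-1}$ so that successively overwriting coordinates of $\alpha^*$ by the corresponding coordinates of $\Tilde\alpha$ strictly lowers the payoff of each agent in the growing prefix. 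The construction mirrors the one used in the network formation proof: I would let $\Tilde\alpha$ have one distinguished agent ``win'' and route the sequence so that each newly added agent, together with those already switched, sees a strictly worse payoff than at $\alpha^*$.

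\smallskip

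The hard part will be condition~(c): unlike (a) and (b), which follow almost mechanically from the payoff ordering, (c) requires correctly identifying the non-successful Nash equilibria and then engineering a descent sequence $j_1,\dotsc,j_{n-1}$ that satisfies the strict inequality for \emph{every} index $i\in\{j_1,\dotsc,j_{\ell+1}\}$ at \emph{every} stage $\ell$. Getting the order of the agents right, and verifying that each partial switch keeps the still-unswitched coordinates arranged so the targeted agents are strictly penalized relative to $u_i(\alpha^*)$, is the delicate bookkeeping step. If $\oA\supseteq\cA^*$ — i.e.\ if every pure Nash equilibrium is already successful — then $\cA^*\setminus\oA=\varnothing$ and~(c) is vacuous, so I would also check whether that shortcut applies here; if it does not, the explicit descent construction above is required.
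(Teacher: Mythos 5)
Your conditions (a) and (b) are essentially sound (and (a) matches the paper's argument), but there are two genuine problems. The main one is your treatment of condition (c). You ``expect'' the all-$p_0$ profile to be a non-successful Nash equilibrium and therefore plan an elaborate descent construction as ``the hard part.'' In fact, in a common-pool game \emph{no} failure profile is a Nash equilibrium, so $\cA^*\subseteq\oA$, $\cA^*\setminus\oA=\varnothing$, and (c) is vacuous --- this is exactly the paper's proof of (c). Your doubt about all-$p_0$ is unfounded: an agent raising from $p_0$ to any $p_k$ becomes the sole winner and gets $1-c_k>0\ge -c_0$, a strict improvement no matter how the parameters are ``weighed,'' because the defining chain of inequalities forces every success payoff to exceed every failure payoff. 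More to the point, your own argument for (b) already gives the needed fact: the deviation you exhibit from an arbitrary failure profile is \emph{strictly} profitable for the deviator, hence every failure profile fails to be Nash. You list the vacuity shortcut only as something to ``check,'' never establish it, and instead commit the bulk of the plan to a construction that is unnecessary; the missing idea is precisely the connection between your (b) argument and the emptiness of $\cA^*\setminus\oA$.

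The second problem is the boundary case in (b). When two or more agents are tied at $p_{m-1}$, your proposed escape --- ``a downward move by all-but-one'' --- is not admissible: condition (b) requires a \emph{single} agent $i$ and a better reply $\alpha_i'\in\mathrm{BR}_{i}(\alpha)$, not a coordinated multi-agent deviation. The correct fix is a single tied agent moving down to some $p_k$ with $k<m-1$: if exactly two agents were tied, the other one becomes successful, the deviator's payoff rises from $-c_{m-1}$ to $-c_k+\tau_k>-c_0>-c_{m-1}$, and everyone else also improves; if three or more were tied, the deviator's payoff rises from $-c_{m-1}$ to $-c_k$ and the others are unchanged, which still satisfies \eqref{E-CG2}. (This downward move is in fact the paper's primary device: it moves a top agent down whenever its action exceeds $p_0$, and moves up only from the all-$p_0$ profile, whereas you lead with upward moves --- both work once the tie-at-the-top case is handled by a legal single-agent deviation.)
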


\begin{proof}
Let $\oA$ be defined as in \eqref{E-CPG}.
Note first that for any $\alpha^*\in\oA$ and $\alpha\in\cA\setminus\oA\,$, we have
$u_{i}(\alpha^*) > u_{i}(\alpha)$ for all $i\in\cI$.
In other words, Definition~\ref{D2.2}~(a) is satisfied.

Moreover, note that any $\alpha\notin\oA$ is not a Nash equilibrium.
For any action profile $\alpha\notin\oA\,$, pick an agent $i$ such that
$i\in\arg\max_{s\in\cI}\alpha_{s}$.
Let us also assume that $\alpha_{i}=p_{j}$ for some $j\in\{0,1,\dotsc,m-1\}$.
If $j>0$, then agent $i$ can increase its utility by selecting action $p_{k}$
for any $k<j$.
In that case, the utility of any other agent either increases or remains the same.
If, instead, $j=0$, then agent $i$ can increase its utility by selecting
action $p_{k}$ for any $k>j$.
In this case, the utility of any other agent increases.
Thus, Definition~\ref{D2.2}~(b) is also satisfied.

Lastly, note that $\cA^*\subseteq\oA$.
To check this, consider any $\alpha\notin\oA$.
As the previous discussion revealed, there always exist an agent and a
better reply for that agent, i.e., $\cA^*\subseteq\oA$.
Thus, Definition~\ref{D2.2}~(c) is trivially satisfied.
\end{proof}

If we imagine that a common-pool game is played repeatedly over
time, it would be desirable that i) failures are avoided, and ii)
agents manage to equally share the time they succeed (i.e., access
the common resource). In other words, convergence to a successful
state may not be sufficient. Instead, a (possibly time-dependent)
solution that equally divides the time-slots that each user utilizes
the common resource would seem more appropriate.

Distributed convergence to such solutions is currently an open issue
in packet radio multiple-access protocols (see, e.g.,
\cite[Chapter~5]{Han08}). In these scenarios, there are multiple
users that compete for access to a single communication channel.
Each user needs to decide whether or not to occupy the channel in a
given time-slot based only on \emph{local} information. If more than
one user is occupying the channel, then a \emph{collision} occurs
and the user needs to resubmit the data. An example of such
multiple-access protocol is the Aloha protocol \cite{Abramson70},
where users decide on transmitting a packet according to a
probabilistic pattern. In this line of work, the action space of
each user consists of multiple power levels of transmission
\cite{Tembine09}. If a user transmits with a power level that is
strictly larger than the power level of any other user, then it is
able to transmit successfully, otherwise a collision occurs and
transmission is not possible. This game can be formulated in a
straightforward manner as a common-pool game.

In a forthcoming section we provide a distributed solution to this
problem using aspiration learning which is of independent interest.

\section{Aspiration Learning}\label{S3}

In this section, we define aspiration learning, motivated by
\cite{Karandikar98}. For some constants $\zeta>0$, $\epsilon>0$,
$\lambda\geq{0}$, $c>0$, $0<h<1$, and
$\underline{\rho},\overline{\rho}\in\mathbb{R}$, such that
$$-\infty < \underline{\rho} <
\min_{\alpha\in\cA,\,i\in\cI}\;u_{i}(\alpha)\le
\max_{\alpha\in\cA,\,i\in\cI}\;u_{i}(\alpha) <
\overline{\rho} < \infty\,,$$ the aspiration learning iteration
initialized at $(\alpha(0),\rho(0))$ is described in
Table~\ref{Tb:AL}.

\begin{table}[!ht]
\fbox{ {\small
\begin{minipage}{0.96\textwidth}
At every $t=0,1,\dotsc$, and for each $i\in\cI$

\begin{enumerate}
\item
Agent $i$ plays $\alpha_{i}(t)$ and measures utility
 $u_{i}(\alpha(t))$.

\item
Agent $i$ updates its aspiration level according to
\begin{equation*}
\rho_{i}(t+1) = \mathtt{sat}
\bigl[\rho_{i}(t) + \epsilon[u_{i}(\alpha(t)) - \rho_{i}(t)]
+ r_{i}(t)\bigr]\,,
\end{equation*}
where
\begin{equation*}
r_{i}(t) \df \begin{cases}
0\,, & \mbox{ w.p. } 1-\lambda\,, \\[2pt]
\mathtt{rand}[-\zeta,\zeta]\,, & \mbox{ w.p. } \lambda\,,
\end{cases}
\end{equation*}
and
\begin{equation*}
\mathtt{sat}[\rho] \df \begin{cases}
\overline{\rho}\,, & \text{if~}\rho>\overline{\rho}\,,\\
\rho\,, & \text{if~}\rho\in[\underline{\rho},\overline{\rho}]\,,\\
\underline{\rho}\,, & \text{if~}\rho < \underline{\rho}\,.
\end{cases}
\end{equation*}

\item
Agent $i$ updates its action:
\begin{equation*}
\alpha_{i}(t+1) = \begin{cases}
\alpha_{i}(t) & \mbox{w.p.~} \phi\bigl(u_{i}(\alpha(t))-\rho_{i}(t)\bigr)\,,
\\[2pt]
\mathtt{rand}(\cA_{i}\setminus\alpha_{i}(t)) & \mbox{w.p.~}
1-\phi\bigl(u_{i}(\alpha(t))-\rho_{i}(t)\bigr)\,,
\end{cases}
\end{equation*}
where
\begin{equation*}
\phi(z) \df \begin{cases}
1\,, & \text{if~}z\ge0\,,\\[2pt]
\max(h,1+cz)\,, & \text{if~}z<0\,.\end{cases}
\end{equation*}

\item
Agent $i$ updates the time and repeats.
\end{enumerate}
\end{minipage}
} }\\[5pt]
\caption{Aspiration Learning}\label{Tb:AL}
\end{table}

According to this algorithm, each agent $i$ keeps track of an
\emph{aspiration level} $\rho_{i}$, which measures player $i$'s
desirable return and is defined as a perturbed fading memory
average of its payoffs throughout the history of play.

Given the current aspiration level $\rho_{i}(t)$, agent $i$ selects
a new action $\alpha_{i}(t+1)$. If the previous action $\alpha_{i}(t)$
provided utility at least $\rho_{i}(t)$, then the agent is
``satisfied" and repeats the same action, i.e.,
$\alpha_{i}(t+1)=\alpha_{i}(t)$. Otherwise, $\alpha_{i}(t+1)$ is
selected randomly over all available actions, where the probability
of selecting again $\alpha_{i}(t)$ depends on the level of discontent
measured by the difference $u_{i}(\alpha(t))-\rho_{i}(t)<0$. The random
variables $\{r_{i}(t): t\ge0\,,\;i\in\cI\}$ are independent,
identically distributed and are referred to as the ``tremble.''

Let $\cX\df \cA \times[\underline{\rho},\overline{\rho}]^{n}$, i.e.,
pairs of joint
actions $\alpha$ and vectors of aspiration levels, $\rho_{i}$, $i\in\cI$.
The set $\cA$ is endowed with the product topology,
$[\underline{\rho},\overline{\rho}]$ with its usual Euclidean topology,
and $\cX$ with the corresponding product topology.
We also let $\Bor(\cX)$ denote the Borel $\sigma$-field of $\cX$, and
$\Pm(\cX)$ the set of probability measures on $\Bor(\cX)$ endowed with the Prohorov topology, i.e.,
the topology of weak convergence.
The algorithm in Table~\ref{Tb:AL} defines
an $\cX$-valued Markov chain.
Let $P_{\lambda}:\cX\times\Bor(\cX)\to[0,1]$ denote its transition
probability function, parameterized by $\lambda>0$.
We refer to the process with $\lambda>0$
as the \emph{perturbed process}.

We let $\Cc(\cX)$ denote the Banach space of real-valued
continuous functions on $\cX$ under the sup-norm (denoted
by $\supnorm{\,\cdot\,}$) topology.
For $f\in\Cc(\cX)$ we define
$$P_{\lambda} f(x) \df \int_{\cX}P_{\lambda} (x,\D{y})f(y)\qquad\text{and}
\qquad
\mu[f]\df\int_{\cX}\mu(\D{x})f(x)\,,\quad \mu\in\Pm(\cX)\,.$$
It is straightforward to verify that $P_{\lambda}$ has the Feller property,
i.e., $P_{\lambda} f\in\Cc(\cX)$ for all $f\in\Cc(\cX)$.
Recall that $\mu_{\lambda}\in\Pm(\cX)$ is called an invariant probability
measure for $P_{\lambda}$ if
$$ (\mu_{\lambda}P_{\lambda})(A)\df\int_{\cX}
\mu_{\lambda}(\D{x}) P_{\lambda}(x,A)
= \mu_{\lambda}(A)\qquad \forall A\in\Bor(X)\,.$$
Since $\cX$ is a compact metric space
and $P_{\lambda}$ has the Feller property it
admits an invariant probability measure $\mu_{\lambda}$
\cite[Theorem~7.2.3]{Lerma03}.

We are interested in the asymptotic behavior of the aspiration learning
algorithm
as the ``experimentation probability'' $\lambda$ approaches zero.
We say that a state $x\in\cX$ is \emph{stochastically stable} if any
collection of invariant probability measures
$\{\mu_{\lambda}\in\Pm(\cX) : \mu_{\lambda}P_{\lambda}
=\mu_{\lambda}\,,\; \lambda>0\}$ satisfies
$\liminf_{\lambda\downarrow{0}}\;\mu_{\lambda}(x)>0$. It
turns out that the stochastically stable states comprise a finite
subset of $\cX$ which is defined next.

\begin{definition}
A \emph{pure strategy state} is a state
$s=(\alpha,\rho)\in\cX$ such that for all $i\in\cI$,
$u_{i}(\alpha)=\rho_{i}$.
The set of pure strategy states is denoted by $\cS$
and $\magn{\cS}$ denotes its cardinality.
\end{definition}

Note that the set $\cS$ is isomorphic to $\cA$ and can be identified
as such.

As customary, the Dirac measure in $\Pm(\cX)$ supported at $x\in\cX$ is denoted by
$\delta_{x}$.
The objective in this section is to characterize the set of
stochastically stable states.
Our main result is summarized in the following theorem:

\begin{theorem}\label{T3.1}
There exists a unique probability vector
$\pi=(\pi_{1},\dotsc,\pi_{\abs{\cS}})$ such that for any
collection of invariant probability measures
$\{\mu_{\lambda}\in\Pm(\cX) : \mu_{\lambda}P_{\lambda}
=\mu_{\lambda}\,,\;\lambda>0\}$, we have
\begin{equation*}
\lim_{\lambda\downarrow0}\; \mu_{\lambda}(\cdot)
= \Hat{\mu}(\cdot) \df \sum_{s\in\cS}{\pi}_{s}\delta_{s}(\cdot)\,,
\end{equation*}
where convergence is in the weak$^{*}$ sense.
\end{theorem}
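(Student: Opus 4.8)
The plan is to reduce the analysis of the infinite-state perturbed chains $P_{\lambda}$ to that of a finite-state Markov chain on the set $\cS$ of pure strategy states, and to exhibit $\Hat{\mu}$ as the image of that chain's unique stationary distribution. Since $\cX$ is a compact metric space, $\Pm(\cX)$ is weak$^{*}$ compact, so any collection $\{\mu_{\lambda}\}$ of invariant measures has weak$^{*}$ limit points along sequences $\lambda_{k}\downarrow0$. The theorem will follow once I show that (i) every such limit point is supported on $\cS$, (ii) its restriction to $\cS$ equals one fixed probability vector $\pi$, independent of the chosen collection, and (iii) $\pi$ is unique; for then every subsequential limit coincides with $\Hat{\mu}=\sum_{s\in\cS}\pi_{s}\delta_{s}$, and compactness upgrades this to convergence of the whole net.

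First I would study the unperturbed kernel $P_{0}$ obtained by setting $\lambda=0$ (no tremble). A direct computation identifies $\cS$ as its absorbing set: at $s=(\alpha,\rho)$ with $u_{i}(\alpha)=\rho_{i}$ one has $u_{i}(\alpha)-\rho_{i}=0$, so $\phi(\cdot)=1$ and every agent repeats, while $\mathtt{sat}[\rho_{i}+\epsilon(u_{i}(\alpha)-\rho_{i})]=\rho_{i}$. The main dynamical lemma I would prove is that $P_{0}$ is absorbed into $\cS$ from every initial condition almost surely: while the action profile is frozen the aspirations contract geometrically toward the current utilities $u_{i}(\alpha)$ (the update is a convex combination with weight $\epsilon$ on $u_{i}(\alpha)$), so once a profile on which all agents are satisfied is reached the state converges to the corresponding element of $\cS$; and from any non-absorbed configuration the experimentation rule $\mathtt{rand}(\cA_{i}\setminus\alpha_{i}(t))$ together with the aspiration contraction gives a uniformly positive probability, over a bounded number of steps, of reaching such a satisfied profile. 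Hence every $P_{0}$-invariant measure is supported on $\cS$. To transfer this to the $\lambda\downarrow0$ limit I would use the Feller property and the norm continuity $\supnorm{P_{\lambda}f-P_{0}f}\le C\lambda\supnorm{f}$ (the kernels differ only through the tremble, active with probability $\lambda$), and pass to the limit in $\mu_{\lambda_{k}}[P_{\lambda_{k}}f]=\mu_{\lambda_{k}}[f]$ to conclude that any weak$^{*}$ limit $\mu_{0}$ satisfies $\mu_{0}P_{0}=\mu_{0}$; combined with the support statement, this establishes (i).

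Next I would construct the finite chain and identify $\pi$. Fix disjoint neighborhoods $U_{s}$ of the points $s\in\cS$; the concentration result gives $\mu_{\lambda}\bigl(\cX\setminus\bigcup_{s}U_{s}\bigr)\to0$. Escape from $U_{s}$ requires a tremble, after which the unperturbed flow carries the state into the basin of some $s'\in\cS$; to leading order in $\lambda$ this defines a finite stochastic matrix $\Hat{P}(s,s')=\lim_{\lambda\downarrow0}\Prob\{\text{reach }U_{s'}\mid\text{escape }U_{s}\}$, whose entries I would compute from the tremble law $\mathtt{rand}[-\zeta,\zeta]$ and the induced experimentation probabilities. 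Integrating $\mu_{\lambda}P_{\lambda}=\mu_{\lambda}$ against each $U_{s'}$ and normalizing by the escape rate shows that the lumped masses $\nu_{\lambda}(s)\df\mu_{\lambda}(U_{s})$ satisfy the balance equations of $\Hat{P}$ up to errors vanishing as $\lambda\downarrow0$, so any limit of $\nu_{\lambda}$ is a stationary distribution of $\Hat{P}$. Because experimentation lets every action profile be reached from every other, the perturbed dynamics connect all pure strategy states, so the limiting finite chain admits a unique stationary distribution $\pi$; this gives (ii)--(iii) and therefore the claimed convergence.

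The hardest part is this last step: rigorously controlling the continuous ``microscopic'' excursions near each $s$ so that the neighborhood-to-neighborhood transition probabilities have well-defined leading orders and the balance-equation errors are genuinely $o(1)$. In particular, when a single bounded tremble of size $\zeta$ cannot dislodge the state from some $s\in\cS$ (for instance a strict equilibrium in $\cA^{*}$), escape needs several trembles and the holding times at different states scale like different powers of $\lambda$; the weights $\pi$ are then governed by a Freidlin--Wentzell / minimum-resistance-tree ordering, and one must verify that these orders are well defined and yield a \emph{unique} limiting distribution. Establishing the uniform-in-$\lambda$ estimates that make the lumping asymptotically exact is the principal technical obstacle.
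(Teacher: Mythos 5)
Your skeleton matches the paper's strategy: reduce to a finite chain on $\cS$ built from ``one tremble followed by unperturbed absorption,'' show that limit points of $\mu_{\lambda}$ are supported on $\cS$, and get uniqueness of $\pi$ from connectivity. Your step (i) and your matrix $\Hat{P}(s,s')$ are essentially the paper's Propositions~\ref{P3.1}--\ref{P3.2} and its $\Hat{P}_{ss'}=\lim_{t\to\infty}QP^{t}(s,N_{\varepsilon}(s'))$. However, the step you explicitly defer --- making the neighborhood lumping ``asymptotically exact'' --- is the crux of the theorem, and the paper closes it with an idea your proposal lacks. Instead of approximate balance equations over neighborhoods $U_{s}$ (whose error analysis must, among other things, control the $O(\varphi(\lambda))$ mass in transit between neighborhoods relative to the $O(\varphi(\lambda))$ escape fluxes, so the errors are not obviously negligible), the paper uses the resolvent $R_{\lambda}=\varphi(\lambda)\sum_{t\ge0}(1-\varphi(\lambda))^{t}P^{t}$ of the unperturbed kernel and the lifted kernel $P_{\lambda}^{L}=Q_{\lambda}R_{\lambda}$, where $P_{\lambda}=(1-\varphi(\lambda))P+\varphi(\lambda)Q_{\lambda}$. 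The algebraic identity $P_{\lambda}R_{\lambda}=R_{\lambda}-\varphi(\lambda)I+\varphi(\lambda)P_{\lambda}^{L}$ shows that every invariant measure of $P_{\lambda}$ is \emph{exactly} invariant for $P_{\lambda}^{L}$ --- there are no $o(1)$ errors to control --- and then two sup-norm convergence facts ($P^{t}f\to\Pi f$, hence $R_{\lambda}f\to\Pi f$ and $P_{\lambda}^{L}f\to Q\Pi f$) give that every weak$^{*}$ limit point of $\mu_{\lambda}$ is invariant for $Q\Pi$ (Proposition~\ref{P3.3}). Without this device, or a rigorous substitute for your flux estimates, your argument identifies the obstacle but does not overcome it, so as written the proof is incomplete.

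Moreover, your diagnosis of that obstacle rests on a misreading of the model. You anticipate that escaping some pure strategy states (e.g., strict Nash equilibria) requires several trembles, so that holding times scale like different powers of $\lambda$ and $\pi$ must be determined by a Freidlin--Wentzell minimum-resistance-tree ordering. In this algorithm the tremble perturbs the \emph{aspiration level}, not the action: at any pure strategy state $s=(\alpha,\rho)$ with $\rho_{i}=u_{i}(\alpha)$, a single tremble with $r_{i}>0$ lifts $\rho_{i}$ above $u_{i}(\alpha)$, after which agent $i$ experiments with probability at least $\min\{1-h,\,c\,r_{i}\}>0$, irrespective of whether $\alpha$ is a strict equilibrium, since dissatisfaction is measured against the aspiration rather than against best-response improvement. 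Combined with $\inf_{x}\Prob_{x}(B_{\infty})>0$ (Proposition~\ref{P3.1}), every single-deviation entry of $Q\Pi$ is bounded below by a constant independent of $\lambda$, so $(Q\Pi)^{n}$ restricted to $\cS$ has all entries positive: the limiting finite chain is irreducible with order-one transition probabilities, and uniqueness of $\pi$ is elementary (Proposition~\ref{P3.4}). No resistance hierarchy arises in the $\lambda\downarrow0$ limit; Freidlin--Wentzell graphs appear in the paper only in Section~\ref{S5}, for the different purpose of analyzing $\pi$ itself in symmetric games.
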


As we show later, $\pi$ in Theorem~\ref{T3.1} is the unique
invariant distribution of a finite-state Markov chain.

\begin{remark}	\label{Rm:Ergodicity}
The expected asymptotic behavior of aspiration learning can be
characterized by $\Hat{\mu}$ and, therefore, $\pi$. In particular,
by Birkhoff's individual ergodic theorem, e.g.,
\cite[Theorem~2.3.4]{Lerma03}, and the weak convergence of
$\mu_{\lambda}$ to $\Hat{\mu}$, the expected percentage of time that
the process spends in any $B\in\mathcal{B}(\cX)$ such that
$\partial{B}\cap\oS\ne\varnothing$ is given by $\Hat{\mu}(B)$ as the
experimentation probability $\lambda$ approaches zero and time
increases, i.e.,
\begin{equation*}
\lim_{\lambda\downarrow{0}}\left(\lim_{t\to\infty}\;
\frac{1}{t}\sum_{k=0}^{t-1}P_{\lambda}^{k}(x,B)\right) = \Hat{\mu}(B)\,.
\end{equation*}
\end{remark}

The proof of Theorem~\ref{T3.1} requires a
series of propositions, which comprise the remaining of this section.

Let $P(\cdot\,,\cdot)$ denote the transition probability function
on $\cX\times\Bor(\cX)$ corresponding to $\lambda=0$.
We refer to the process $\{X_{t} : t\ge0\}$ governed by $P$ as the
\emph{unperturbed process}.
Let $\Omega\df\cX^{\infty}$ denote the canonical path space, i.e., an
element $\omega\in\Omega$ is a sequence $\{\omega(0),\omega(1),\dotsc\}$, with
$\omega(t)= (\alpha(t),\rho(t))\in\cX$.
We use the same notation for the elements $(\alpha,\rho)$ of the space
$\cX$ and for the coordinates of the process $X_{t}=(\alpha(t),\rho(t))$.
Let also $\Prob_{x}$ denote the unique
probability measure induced by $P$ on the
product $\sigma$-algebra of $\cX^{\infty}$,
initialized at $x=(\alpha,\rho)$,
and $\Exp_{x}$ the corresponding expectation operator.
Let also $\sF_{t}\df \sigma(X_{\tau}\,,~ \tau\le t)\,,$ $t\geq{0}$, denote 
the $\sigma$-algebra generated by $\{X_{\tau},~\tau\le{t}\}$.

For $t\ge0$ define the sets
\begin{align*}
A_{t} &\df
\left\{\omega\in\Omega:\alpha(\tau)=\alpha(t)\,, \text{~for all~} \tau
\geq t \right\}\,,\\[5pt]
B_{t} &\df \{\omega\in\Omega:\alpha(\tau)=\alpha(0)\,, \text{~for all~}
0\le\tau\le{t}\}\,.
\end{align*}
Note that $\{B_{t}:t\ge0\}$ is a non-increasing sequence, i.e.,
$B_{t+1}\subseteq B_{t}$, while $\{A_{t}:t\ge0\}$ is non-decreasing.
Recall that the shift operator $\theta_{t}:\Omega\to\Omega$, $t\ge0$,
satisfies $X_{s}(\theta_{t}(\omega))=X_{s+t}(\omega)$.
Therefore $A_{t} = \theta^{-1}_{t}(B_{\infty})$.
Let $A_{\infty} \df\bigcup_{t=0}^{\infty}A_{t}$ and
$B_{\infty}\df\bigcap_{t=1}^{\infty}B_{t}$.
The set $A_{\infty}$ is the event that agents
eventually play the same action profile, while $B_{\infty}$
is the event that agents never change their actions.
For $D\in\Bor(\cX)$ we let
$\uptau(D)$ denote the first hitting time of $D$, i.e.,
\begin{equation}\label{E-uptau}
\uptau(D) \df \inf\; \{t\ge 0 : X_{t}\in D\}\,.
\end{equation}

\begin{proposition}\label{P3.1}
It holds that
$$\inf_{x\in\cX}\;\mathbb{P}_{x}(B_{\infty})>0\qquad\text{and}\qquad
\inf_{x\in\cX}\;\Prob_{x}(A_\infty)=1\,.$$
\end{proposition}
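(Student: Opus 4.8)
The plan is to prove the two assertions in turn, exploiting that along $B_\infty$ the action profile is frozen, so that the aspiration dynamics become deterministic and explicitly solvable. Fix $x=(\alpha,\rho)$ and note that on $B_t$ each agent's realized payoff is the constant $u_i(\alpha)$, so the aspiration update collapses to $\rho_i(\tau+1)=\mathtt{sat}\bigl[(1-\epsilon)\rho_i(\tau)+\epsilon\,u_i(\alpha)\bigr]$. Since $u_i(\alpha)\in(\underline{\rho},\overline{\rho})$, this recursion keeps $\rho_i(\tau)$ in range and gives $d_i(\tau)\df u_i(\alpha)-\rho_i(\tau)=(1-\epsilon)^{\tau}d_i(0)$, hence $\abs{d_i(\tau)}\le(\overline{\rho}-\underline{\rho})(1-\epsilon)^{\tau}$ uniformly in $x$. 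Conditioning successively on the action staying fixed (each agent retains its action with probability $\phi(d_i(\tau))$, independently across agents), the Markov property yields
\[
\Prob_x(B_\infty)=\prod_{\tau=0}^{\infty}\prod_{i\in\cI}\phi\bigl(d_i(\tau)\bigr),
\]
where each factor equals $1$ when $d_i(\tau)\ge0$ and is at least $\max\bigl(h,\,1-c\abs{d_i(\tau)}\bigr)$ otherwise.

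To bound this product below uniformly in $x$ I would split the time index. Because $\abs{d_i(\tau)}\le(\overline{\rho}-\underline{\rho})(1-\epsilon)^{\tau}$, there is a finite $T_0$ depending only on $c,\epsilon$ and $\overline{\rho}-\underline{\rho}$ (not on $x$) such that $c\abs{d_i(\tau)}\le\tfrac12$ for all $\tau\ge T_0$ and all $i$. For $\tau<T_0$ I use the crude bound $\phi\ge h$, contributing at least $h^{nT_0}$. For $\tau\ge T_0$ I use $\phi(d_i(\tau))\ge1-c\abs{d_i(\tau)}$, the elementary inequality $\log(1-s)\ge-2s$ valid for $s\in[0,\tfrac12]$, and the summability $\sum_{\tau\ge0}\sum_{i\in\cI}\abs{d_i(\tau)}\le n(\overline{\rho}-\underline{\rho})/\epsilon$, which together give a tail factor at least $\exp\bigl(-2cn(\overline{\rho}-\underline{\rho})/\epsilon\bigr)$. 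Multiplying,
\[
\Prob_x(B_\infty)\ge h^{nT_0}\exp\Bigl(-\tfrac{2cn(\overline{\rho}-\underline{\rho})}{\epsilon}\Bigr)\df p_0>0,
\]
with $p_0$ independent of $x$. I expect this to be the main obstacle: the conclusion hinges on the aspiration gap decaying geometrically, hence summably, at a rate uniform over all initial states, so that the infinite product cannot collapse to zero.

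For the second assertion I would pass from ``freeze from time $t$ on'' to ``freeze eventually'' via the Markov property. Since $A_t=\theta_t^{-1}(B_\infty)$ and the sets $A_t$ increase to $A_\infty$, the event $A_\infty$ is shift-invariant, i.e.\ $\theta_t^{-1}(A_\infty)=A_\infty$. Hence the Markov property gives, almost surely and for every $t$,
\[
\Prob_x(A_\infty\mid\sF_t)=\Prob_{X_t}(A_\infty)\ge\Prob_{X_t}(B_\infty)\ge p_0,
\]
where the middle inequality uses $A_0=B_\infty\subseteq A_\infty$ together with the uniform bound $p_0$ from the first part. The process $M_t\df\Prob_x(A_\infty\mid\sF_t)$ is a bounded martingale, so by the martingale convergence theorem (equivalently, L\'evy's zero--one law) it converges almost surely to $\Ind_{A_\infty}$. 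Since $M_t\ge p_0>0$ for all $t$, the limit satisfies $\Ind_{A_\infty}\ge p_0$; being $\{0,1\}$-valued, it must equal $1$ almost surely. Therefore $\Prob_x(A_\infty)=1$ for every $x\in\cX$, and in particular $\inf_{x\in\cX}\Prob_x(A_\infty)=1$.
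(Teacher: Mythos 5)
Your proof is correct, and the two halves deserve separate verdicts. For the first claim you follow essentially the paper's own route: the same product formula for $\Prob_x(B_t)$ over frozen-action paths, the same split at a time $T_0$ (crude bound $h$ before $T_0$, the linear bound $1-c\abs{d_i(\tau)}$ after), and the same geometric summability of the aspiration gap; the only difference is the elementary inequality used to control the tail product ($\log(1-s)\ge -2s$ versus the paper's $\prod(1-a_\tau)\ge 1-\sum a_\tau$), which is immaterial. For the second claim, however, your argument is genuinely different from the paper's. The paper proceeds quantitatively: it introduces the sets $D_\ell$ of states with aspiration gap at most $(1-\epsilon)^\ell(\overline{\rho}-\underline{\rho})$, shows via a block argument that $\Prob_x\bigl(\uptau(D_\ell)>k\ell\bigr)\le q_0^k$, and then applies the strong Markov property at $\uptau(D_\ell)$ together with the fact that $\inf_{y\in D_\ell}\Prob_y(B_\infty)\to1$ as $\ell\to\infty$, obtaining the explicit bound $\Prob_x(A_{\ell^2})\ge(1-q_0^\ell)\inf_{y\in D_\ell}\Prob_y(B_\infty)$, \emph{uniformly} in $x$. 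You instead observe that $A_\infty$ is shift-invariant, so $\Prob_x(A_\infty\mid\sF_t)=\Prob_{X_t}(A_\infty)\ge p_0$, and conclude by L\'evy's zero--one law that $\Ind_{A_\infty}\ge p_0$ a.s., hence $\Prob_x(A_\infty)=1$. This is a clean and standard ``uniformly positive harmonic function of an invariant event'' argument; it is shorter and needs nothing beyond the uniform bound $p_0$ from the first part. What it gives up is quantitativity and uniformity: your argument yields $\Prob_x(A_t)\to1$ pointwise in $x$ with no rate, whereas the paper's estimate gives $\sup_{x\in\cX}\Prob_x(A_t^c)\to0$. That uniform version is what is actually invoked later, in the proof of Proposition~3.2 (the term $2\,\Prob_x(A_t^c)\supnorm{f}$ must vanish after taking the sup-norm over $x$), so while your proof fully establishes the proposition as stated, it would not by itself support that downstream step, and the paper's more laborious hitting-time construction is doing extra work there.
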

\begin{proof}
Assume that the process is initialized at $X_{0}=x=(\alpha,\rho)$.
Note that $B_{t}$ consists of those sample paths which satisfy
\begin{equation*}
\rho_{i}(\tau) =
u_{i}(\alpha) - (1-\epsilon)^{\tau} \bigl(u_{i}(\alpha)-\rho\bigr)\,,
\quad 0\le\tau<t\,,\quad i\in\cI\,.
\end{equation*}
Therefore, we have:
\begin{equation}\label{E-P3.1a}
\mathbb{P}_{x}(B_{t}) = \prod_{0\le\tau<t}\;\prod_{i\in\cI}
\max\left\{h,1-c(1-\epsilon)^\tau\bigl(\rho_{i}-u_{i}(\alpha)\bigr)^{+}\right\},
\end{equation}
where
\begin{eqnarray*}
(x)^{+} \df \begin{cases} x\,, & \text{if~} x\ge0\,,\\
0\,, & \text{othewise.}
\end{cases}
\end{eqnarray*}
Let $T_{0}$ satisfy
$c(1-\epsilon)^{T_{0}}(\overline{\rho}-\underline{\rho})\le
\min\;\{1- h, \epsilon\}\,.$
Then
\begin{align*}
\mathbb{P}_{x}(B_{t}) &\ge h^{nT_{0}} \prod_{i\in\cI}\;\prod_{T_{0}<\tau<t}
\Bigl(1-c(1-\epsilon)^\tau
\bigl(\rho_{i}-u_{i}(\alpha)\bigr)^{+}\Bigr)\notag\\[5pt]
&\ge h^{nT_{0}} \prod_{i\in\cI}\;\Biggl(1- c\bigl(\rho_{i}-u_{i}(\alpha)\bigr)^{+}
\sum_{\tau=T_{0}+1}^{t} (1-\epsilon)^\tau\Biggr)\notag\\[5pt]
&\ge h^{nT_{0}}
\prod_{i\in\cI}\;\left( 1- (1-\epsilon)\,
\frac{\bigl(\rho_{i}-u_{i}(\alpha)\bigr)^{+}}
{\overline{\rho}-\underline{\rho}}\right)\qquad \forall t>T_{0}\,,
\end{align*}
and since the sequence $\{B_{t}\}$ is non-increasing, also for all $t\ge0$.
Therefore, by continuity from above, we obtain
$\inf_{x\in\cX}\; \Prob_{x} (B_{\infty}) \ge
\epsilon^{n}h^{nT_{0}}\,,$
which proves the first claim.

Next, define the set
\begin{equation*}
D_{\ell} \df \left\{(\alpha,\rho)\in\cX :
\rho_{i} - u_{i}(\alpha)
\le (1-\epsilon)^{\ell}\bigl(\overline{\rho}-\underline{\rho}\bigr)\,,
~\forall i\in\cI\right\}\,,\qquad \ell\ge0\,,
\end{equation*}
and note that
$\Prob_{x}(B_{\ell}) \le P^{\ell}(x,D_{\ell})$,
where $P^{t}$, $t\ge0$, denotes the multistage transition probability function
defined by the recursion $P^{t}=P^{t-1}P$ and $P^{0}=I$.
Thus, using the Markov property over $k$ time blocks of length $\ell$, we obtain
the rough estimate
\begin{align}\label{E-P3.1b}
\Prob_{x}(\uptau(D_{\ell})>k\ell) &\le
\Prob_{x}(X_{j\ell}\in D^{c}_{\ell}\,,~j=1,\dotsc,k)\nonumber\\
&\le \Prob_{x}(X_{j\ell}\in D^{c}_{\ell}\,,~j=1,\dotsc,k-1)\,
\biggl(\sup_{z\in D^{c}_{\ell}}\;P^{\ell}(z,D^{c}_{\ell})\biggr)\nonumber\\
&\le \left(1-\inf_{z\in\cX}\;\Prob_{z}(B_{\ell})\right)\,
\Prob_{x}(X_{j\ell}\in D^{c}_{\ell}\,,~j=1,\dotsc,k-1)\,.
\end{align}
Let $q_{0} \df 1-\inf_{z\in\cX}\;\Prob_{z}(B_{\infty})$.
We have already shown that $q_{0}<1$.
Finite induction on \eqref{E-P3.1b} yields
\begin{equation*}
\Prob_{x}(\uptau(D_{\ell})>k\ell) \le
\Bigl(1 - \inf_{z\in\cX}\;\Prob_{z}(B_{\ell})\Bigr)^k\le q_{0}^{k}\,.
\end{equation*}
We have
\begin{equation*}
\Prob_{x}(A_{k\ell}) \ge \sum_{t=1}^{k\ell}
\Prob_{x}\bigl(\uptau(D_{\ell})=t,\; X\circ\theta_{t}\in B_\infty\bigr)
\end{equation*}
and thus using the Markov property together with the fact that
$X_{\uptau(D_{\ell})}\in D_{\ell}$ a.s.\ on $\{\uptau(D_{\ell})<\infty\}$,
and setting $k=\ell$,
we obtain
\begin{align}\label{E-P3.1c}
\Prob_{x}(A_{\ell^{2}}) &\ge \sum_{t=1}^{\ell^{2}}
\Prob_{x}\bigl(\uptau(D_{\ell})=t\bigr)
\inf_{y\in D_{\ell}}\;\Prob_{y}(B_\infty)\notag\\[5pt]
&\ge\Bigl(1-\Prob_{x}(\uptau\bigl(D_{\ell})>\ell^{2}\bigr)\Bigr)
\inf_{y\in D_{\ell}}\;\Prob_{y}(B_\infty)\notag\\[5pt]
&\ge\left(1-q_{0}^{\ell}\right)
\inf_{y\in D_{\ell}}\;\Prob_{y}(B_\infty)\,.
\end{align}
It is clear by \eqref{E-P3.1a} that
$\inf_{x\in D_{\ell}}\;\Prob_{x}(B_{\infty})\to1$ as $\ell\to\infty$.
Therefore both terms on the right hand side of \eqref{E-P3.1c} converge
to $1$ as $\ell\to\infty$, and the proof is complete.
\end{proof}

\begin{proposition}\label{P3.2}
There exists a transition probability function $\Pi$ on $\cX\times\Pm(\cX)$
that has the Feller property and $\Pi(x,\cdot)$ is supported on $\cS$
for all $x\in\cX$, and such that
\begin{itemize}
\item[\upshape{(i)}]
For all $f\in\Cc(\cX)$,
$\lim_{t\rightarrow\infty}\;\supnorm{P^{t}f-\Pi{f}}=0$.
\item[\upshape{(ii)}]
If $R_{\lambda}$ is a resolvent of $P$, defined by
$$R_{\lambda} \df
\varphi(\lambda)\sum_{t=0}^{\infty}(1-\varphi(\lambda))^{t}P^{t}\,,$$
where $\varphi(\lambda)\in(0,1)$, $\lambda>0$, and
$\lim_{\lambda\rightarrow{0}}\varphi(\lambda)=0$, then
$$\lim_{\lambda\rightarrow{0}}\;\supnorm{R_{\lambda}{f}-\Pi{f}}=0\qquad
\forall f\in\Cc(\cX)\,.$$
\end{itemize}
\end{proposition}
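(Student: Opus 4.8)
The plan is to construct $\Pi$ from the almost-sure limit of the unperturbed process and then to read off both convergence statements from the quantitative bounds already in Proposition~\ref{P3.1}. By Proposition~\ref{P3.1} we have $\Prob_x(A_\infty)=1$ for every $x$, so under $\Prob_x$ the action profile eventually freezes at some (random) $\Bar{\alpha}\in\cA$. With $\lambda=0$ there is no tremble, and since $u_i(\Bar{\alpha})\in[\underline{\rho},\overline{\rho}]$ the saturation is inactive; hence once the action is frozen the update $\rho_i(t+1)=\rho_i(t)+\epsilon(u_i(\Bar{\alpha})-\rho_i(t))$ drives $\rho_i(t)$ geometrically to $u_i(\Bar{\alpha})$. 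Consequently $X_t$ converges $\Prob_x$-a.s.\ to a limit $X_\infty=(\Bar{\alpha},u(\Bar{\alpha}))\in\cS$. First I would define $\Pi(x,\cdot)$ to be the law of $X_\infty$ under $\Prob_x$; it is automatically supported on $\cS$. For $f\in\Cc(\cX)$ bounded convergence then gives the pointwise identity $P^tf(x)=\Exp_x[f(X_t)]\to\Exp_x[f(X_\infty)]=:\Pi f(x)$.

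The substance of part~(i) is upgrading this pointwise limit to a uniform one. Fix $f$, set $M\df\supnorm{f}$, and let $\omega_f$ be the modulus of continuity of $f$, which tends to $0$ at $0$ because $f$ is uniformly continuous on the compact space $\cX$. For $t\ge s$ I would split over the freezing event $A_s$,
\[
\babs{P^tf(x)-\Pi f(x)}\le \Exp_x\bigl[\Ind_{A_s}\babs{f(X_t)-f(X_\infty)}\bigr]+2M\,\Prob_x(A_s^c)\,.
\]
On $A_s$ the action is already frozen at time $s$, so $X_t$ and $X_\infty$ share the same action coordinate while their aspiration coordinates differ by at most $(1-\epsilon)^{t-s}(\overline{\rho}-\underline{\rho})$; thus $X_t$ and $X_\infty$ lie within $C(1-\epsilon)^{t-s}$ of each other and the first term is at most $\omega_f\bigl(C(1-\epsilon)^{t-s}\bigr)$. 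For the second term I would reuse the estimate from the proof of Proposition~\ref{P3.1}, namely $\Prob_x(A_{\ell^2})\ge(1-q_0^\ell)\inf_{y\in D_\ell}\Prob_y(B_\infty)$, whose right-hand side does not depend on $x$ and tends to $1$; together with the monotonicity of $\{A_t\}$ this gives $\sup_x\Prob_x(A_s^c)\to0$ as $s\to\infty$. Given $\eta>0$, I would first pick $s$ with $2M\sup_x\Prob_x(A_s^c)<\eta/2$ and then $T\ge s$ with $\omega_f\bigl(C(1-\epsilon)^{t-s}\bigr)<\eta/2$ for $t\ge T$, yielding $\supnorm{P^tf-\Pi f}<\eta$.

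Because $P$ has the Feller property (verified exactly as for $P_\lambda$), every $P^tf$ lies in $\Cc(\cX)$, so $\Pi f$, being a uniform limit of these, is continuous. Hence for each $x$ the map $f\mapsto\Pi f(x)$ is a positive, unit-norm linear functional on $\Cc(\cX)$ and, by the Riesz representation theorem, is represented by a probability measure $\Pi(x,\cdot)$; the continuity of $\Pi f$ then says precisely that $\Pi$ is a Feller transition kernel supported on $\cS$. For part~(ii) I would write, using $\sum_{t}\varphi(\lambda)(1-\varphi(\lambda))^t=1$,
\[
\supnorm{R_\lambda f-\Pi f}\le\sum_{t=0}^{\infty}\varphi(\lambda)(1-\varphi(\lambda))^t\,\varepsilon_t\,,\qquad \varepsilon_t\df\supnorm{P^tf-\Pi f}\to0\,.
\]
This is a Toeplitz/Abel summation: splitting the sum at a large index $T$ bounds the tail by $\sup_{t\ge T}\varepsilon_t$ and the head by $\bigl(1-(1-\varphi(\lambda))^T\bigr)\,2M\le 2M\,\varphi(\lambda)\,T$, which vanishes as $\varphi(\lambda)\to0$.

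The one genuinely non-trivial step is the uniformity in $x$ in part~(i): the pointwise limit is immediate from bounded convergence, but the uniform bound hinges on combining the $x$-free tail estimate for the freezing time from Proposition~\ref{P3.1} with the geometric contraction of the aspiration levels after freezing and the uniform continuity of $f$ on the compact set $\cX$. Given part~(i), the construction of $\Pi$, its Feller property, and part~(ii) are essentially formal.
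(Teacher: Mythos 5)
Your proof is correct, and its analytic core coincides with the paper's: both arguments hinge on splitting expectations over the freezing event ($A_s$ in your notation, $A_t$ in the paper's), the explicit geometric relaxation of the aspiration levels once the action is frozen, uniform continuity of $f$ on the compact space $\cX$, and the uniform-in-$x$ tail bound on $\Prob_x(A_t^c)$ extracted from the proof of Proposition~\ref{P3.1}. Where you genuinely differ is in how the limit kernel $\Pi$ is produced. The paper never constructs $X_\infty$: it shows that $\{P^{t}f\}$ is Cauchy in $\bigl(\Cc(\cX),\supnorm{\,\cdot\,}\bigr)$ by comparing $P^{2t}f$ with $P^{2t+t'}f$ via the Markov property (conditioning on $\sF_t$ and shifting to $B_\infty$), and then obtains $\Pi(x,\cdot)$ abstractly from the limiting positive functional through the Riesz representation theorem. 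You instead build $\Pi(x,\cdot)$ concretely as the law of the almost-sure limit $X_\infty\in\cS$ and prove uniform convergence of $P^{t}f$ to $\Pi f$ directly against this limit. Your route has a tangible advantage: the claim that $\Pi(x,\cdot)$ is supported on $\cS$ --- which is part of the statement and is invoked later in Proposition~\ref{P3.4} --- is immediate from the construction, whereas in the paper's Riesz-based proof it is left implicit (one must observe separately that $X_t$ approaches $\cS$ in probability, so the limit functional annihilates functions vanishing near $\cS$); also, once $\Pi$ is defined as a genuine law, your appeal to Riesz is redundant and the Feller property follows simply from $\Pi f$ being a uniform limit of the continuous functions $P^t f$. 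Conversely, the paper's Cauchy argument never needs to discuss almost-sure convergence of the trajectory, only the estimates of Proposition~\ref{P3.1}. Your part (ii) (Abel summation of $\supnorm{P^t f-\Pi f}$, splitting the geometric sum at a large index $T$) is essentially identical to the paper's.
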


\begin{proof}
For $f\in\Cc(\cX)$ and $x\in\cX$, we have
$\Exp_{x}[f(X_{t})]= P^{t} f(x)$.
Since $A_{t} = \theta^{-1}_{t}(B_{\infty})$,
then using the Markov property we obtain
that, for any positive $t$ and $t'$,
\begin{align}\label{E-P3.2a}
\babs{P^{2t}f(x)-P^{2t+t'} f(x)}
&= \babs{\Exp_{x}\bigl[f(X_{2t})-f(X_{2t+t'})\bigr]}\notag\\[5pt]
&= \babs{\Exp_{x}\bigl[\bigl(f(X_{2t})-f(X_{2t+t'})\bigr)
\mathbf{1}_{A_{t}}\bigr]}
+ \babs{\Exp_{x}\bigl[\bigl(f(X_{2t})-f(X_{2t+t'})\bigr)
\mathbf{1}_{A^{c}_{t}}\bigr]}\notag\\[5pt]
&\le\Bigl\lvert
\Exp_{x}\Bigl[\Exp\bigl[\bigl(f(X_{2t})-f(X_{2t+t'})\bigr)
\mathbf{1}_{A_{t}}\bigm| \mathfrak{F}_{t}\bigr]\Bigr]\Bigr\rvert
+2 \mathbb{P}_{x}(A^{c}_{t}) \supnorm{f}\notag\\[5pt]
&\le \Exp_{x}\Bigl[\Exp_{X_{t}}
\bigl[\abs{f(X_{2t})-f(X_{2t+t'})}\,\mathbf{1}_{A_{t}}\bigr]\Bigr]
+2 \mathbb{P}_{x}(A^{c}_{t})\supnorm{f}\notag\\[5pt]
&\le \sup_{z\in\cX}\;\Exp_{z}\bigl[\abs{f(X_{t})-f(X_{t+t'})}\,
\mathbf{1}_{B_{\infty}}\bigr]
+2 \mathbb{P}_{x}(A^{c}_{t})\supnorm{f}\,.
\end{align}
Since for any initial condition
$x=(\alpha,\rho)$ the dynamics on $B_{\infty}$ evolve according to
\begin{equation*}
\rho{(t)} = \varrho(t;\alpha,\rho)
\df u(\alpha) - (1-\epsilon)^{t} \bigl(u(\alpha)-\rho\bigr)\,,
\end{equation*}
the continuity of $f$
(which is necessarily uniform since $\cX$ is compact) yields
\begin{multline}\label{E-P3.2b}
\sup_{t'\ge0}\;\sup_{(\alpha,\rho)\in\cX}\;
\Exp_{(\alpha,\rho)}\bigl[\abs{f(X_{t})-f(X_{t+t'})}\,
\Ind_{B_{\infty}}\bigr]\\
=\sup_{t'\ge0}\;\sup_{(\alpha,\rho)\in\cX}\;
\babs{f\bigl(\alpha,\varrho(t;\alpha,\rho)\bigr)
-f\bigl(\alpha,\varrho(t+t';\alpha,\rho)\bigr)}
\xrightarrow[t\to\infty]{}0\,.
\end{multline}
By \eqref{E-P3.2a}--\eqref{E-P3.2b} and
Proposition~\ref{P3.1} we obtain
\begin{equation*}
\sup_{t'>0}\; \supnorm{P^{2t}f-P^{2t+t'}f}
\xrightarrow[t\to\infty]{}0\,.
\end{equation*}
Therefore, the sequence
$\{P^{t}f\,,\;t\in\NN\}$ is Cauchy in
$\bigl(\Cc(\cX),\supnorm{\,\cdot\,}\bigr)$, and hence converges in $\Cc(\cX)$.
Let $\varphi(f)(x)\df \lim_{t\to\infty}\; P^{t}f(x)$.
Then for each $x$, $f\mapsto \varphi(f)(x)$ defines a bounded linear
functional on $\Cc(\cX)$.
It is a positive functional since $\varphi(f)(x)\ge 0$, for $f\ge0$,
and if $\bm{1}$ denotes the constant function equal to $1$,
$\varphi(\bm{1})(x)=1$.
Then, by the Riesz representation theorem,
$\varphi(f)(x)$ is a Borel probability measure on $\cX$ for each $x$.
Denote this by $\Pi(x,\cdot)$.
Since $\varphi:\Cc(\cX)\to\Cc(\cX)$,
it follows that $\Pi$ has the Feller property.
Also, by the definition of $\Pi$, we have
\begin{equation}\label{E-P3.2c}
\supnorm{P^{t}f - \Pi{f}}\xrightarrow[t\to\infty]{} 0 \qquad \forall
f\in\Cc(\cX)\,.
\end{equation}
This proves (i).

Next using a triangle inequality, we have for each $T>0$,
\begin{equation*}
\supnorm{R_{\lambda} f - \Pi{f}}
\le \varphi(\lambda)\sum_{t=0}^{T-1}(1-\varphi(\lambda))^{t}
\supnorm{P^{t} f- \Pi{f}}
+ (1-\varphi(\lambda))^{T}\sup_{t\ge T}\;\supnorm{P^{t} f- \Pi{f}}\,.
\end{equation*}
Letting $\lambda\downarrow0$, we obtain
\begin{equation*}
\supnorm{R_{\lambda} f - \Pi{f}}
\le \sup_{t\ge T}\;\supnorm{P^{t} f- \Pi{f}}\qquad \forall T>0\,,
\end{equation*}
and (ii) follows by \eqref{E-P3.2c}.
\end{proof}

We can decompose the transition probability function of the perturbed
process as
\begin{equation}\label{E-Plambda}
P_{\lambda}=(1-\varphi(\lambda))P+\varphi(\lambda)Q_{\lambda}\,,
\qquad \varphi(\lambda)\df 1-(1-\lambda)^{n}\,,
\end{equation}
where $\varphi(\lambda)$ is the probability that at least one agent trembles,
and satisfies $\varphi(\lambda)\downarrow{0}$ as $\lambda\downarrow{0}$.
Also, define the ``lifted" transition probability function:
\begin{equation*}
P_{\lambda}^{L} \df \varphi(\lambda)\sum_{t=0}^{\infty}(1-\varphi(\lambda))^{t}
Q_{\lambda}P^{t} = Q_{\lambda} R_{\lambda}\,,
\end{equation*}
where $R_{\lambda}$ was defined in Proposition~\ref{P3.2}
(the equality on the right-hand side is evident by Fubini).
Similarly we decompose $Q_{\lambda}$ as
\begin{equation*}
Q_{\lambda}=(1-\psi(\lambda))Q+\psi(\lambda)Q^*\,,\qquad
\psi(\lambda) \df 1-\frac{n\lambda(1-\lambda)^{n-1}}{1-(1-\lambda)^{n}}.
\end{equation*}
Here $Q$ is the transition probability function induced by aspiration
learning where exactly one player trembles, and $Q^*$ is the transition
probability function where at least two players tremble simultaneously.



We have the following proposition.

\begin{proposition}\label{P3.3}
The following hold,
\begin{itemize}
\item[\upshape{(i)}]
For $f\in\Cc(\cX)$,
$\lim_{\lambda\rightarrow{0}}\;\supnorm{P_{\lambda}^{L}f-Q\Pi{f}}=0$.

\item[\upshape{(ii)}]
Any invariant distribution $\mu_{\lambda}$ of $P_{\lambda}$
is also an invariant distribution of $P_{\lambda}^{L}$.

\item[\upshape{(iii)}]
Any weak limit point in $\Pm(\cX)$ of $\mu_{\lambda}$,
as $\lambda\downarrow{0}$, is an invariant probability measure of $Q\Pi$.
\end{itemize}
\end{proposition}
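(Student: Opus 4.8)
The plan is to prove the three parts in the order stated, feeding Proposition~\ref{P3.2} into each and repeatedly using that every Markov kernel acts as a sup-norm contraction on $\Cc(\cX)$ (i.e.\ $\supnorm{Kg}\le\supnorm{g}$). For (i), I would start from $P_{\lambda}^{L}=Q_{\lambda}R_{\lambda}$ and insert the intermediate term $Q_{\lambda}\Pi f$:
$$P_{\lambda}^{L}f - Q\Pi f = Q_{\lambda}\bigl(R_{\lambda}f-\Pi f\bigr) + \bigl(Q_{\lambda}-Q\bigr)\Pi f\,.$$
The first summand obeys $\supnorm{Q_{\lambda}(R_{\lambda}f-\Pi f)}\le\supnorm{R_{\lambda}f-\Pi f}\to0$ by Proposition~\ref{P3.2}(ii). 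For the second, the decomposition $Q_{\lambda}=(1-\psi(\lambda))Q+\psi(\lambda)Q^{*}$ gives $(Q_{\lambda}-Q)\Pi f=\psi(\lambda)(Q^{*}-Q)\Pi f$, whence $\supnorm{(Q_{\lambda}-Q)\Pi f}\le2\psi(\lambda)\supnorm{f}$; since a short expansion of $\psi(\lambda)=1-\tfrac{n\lambda(1-\lambda)^{n-1}}{1-(1-\lambda)^{n}}$ shows $\psi(\lambda)\downarrow0$ as $\lambda\downarrow0$, both terms vanish and (i) follows.

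For (ii), I would exploit the resolvent algebra. Because $R_{\lambda}=\varphi(\lambda)\sum_{t\ge0}(1-\varphi(\lambda))^{t}P^{t}$ is a power series in $P$, it commutes with $P$ and telescopes to
$$\bigl(I-(1-\varphi(\lambda))P\bigr)R_{\lambda}=R_{\lambda}\bigl(I-(1-\varphi(\lambda))P\bigr)=\varphi(\lambda)\,I\,.$$
Now $\mu_{\lambda}P_{\lambda}=\mu_{\lambda}$ together with $P_{\lambda}=(1-\varphi(\lambda))P+\varphi(\lambda)Q_{\lambda}$ rearranges to $\varphi(\lambda)\,\mu_{\lambda}Q_{\lambda}=\mu_{\lambda}\bigl(I-(1-\varphi(\lambda))P\bigr)$. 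Composing both sides on the right with $R_{\lambda}$ and applying the telescoping identity yields
$$\varphi(\lambda)\,\mu_{\lambda}Q_{\lambda}R_{\lambda}=\mu_{\lambda}\bigl(I-(1-\varphi(\lambda))P\bigr)R_{\lambda}=\varphi(\lambda)\,\mu_{\lambda}\,,$$
and cancelling $\varphi(\lambda)>0$ gives $\mu_{\lambda}P_{\lambda}^{L}=\mu_{\lambda}$, which is (ii).

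For (iii), let $\mu^{*}$ be a weak limit point, say $\mu_{\lambda_{k}}\to\mu^{*}$ weakly with $\lambda_{k}\downarrow0$ (such points exist by compactness of $\Pm(\cX)$, though existence is not needed for the claim). Fix $f\in\Cc(\cX)$. By (ii), $\mu_{\lambda_{k}}[P_{\lambda_{k}}^{L}f]=\mu_{\lambda_{k}}[f]$ for every $k$, and the right-hand side tends to $\mu^{*}[f]$ by weak convergence. For the left-hand side I would split
$$\mu_{\lambda_{k}}[P_{\lambda_{k}}^{L}f]-\mu^{*}[Q\Pi f]=\mu_{\lambda_{k}}\bigl[P_{\lambda_{k}}^{L}f-Q\Pi f\bigr]+\bigl(\mu_{\lambda_{k}}[Q\Pi f]-\mu^{*}[Q\Pi f]\bigr)\,.$$
The first bracket is bounded in absolute value by $\supnorm{P_{\lambda_{k}}^{L}f-Q\Pi f}\to0$ by part (i); the second tends to $0$ by weak convergence, provided $Q\Pi f\in\Cc(\cX)$, which holds because $\Pi$ is Feller (Proposition~\ref{P3.2}) and $Q$ inherits the Feller property from the verification already made for $P_{\lambda}$. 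Passing to the limit gives $\mu^{*}[Q\Pi f]=\mu^{*}[f]$ for all $f\in\Cc(\cX)$, i.e.\ $\mu^{*}$ is invariant for the Markov kernel $Q\Pi$.

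The main obstacle is the limit interchange in (iii): since $P_{\lambda_{k}}^{L}$ itself varies with $k$, one cannot apply weak convergence of $\mu_{\lambda_{k}}$ to it directly, and one must first trade the $k$-dependent function $P_{\lambda_{k}}^{L}f$ for the fixed continuous function $Q\Pi f$ using the \emph{uniform} estimate of part (i). Thus part (i) is not a mere preliminary but the quantitative engine that legitimizes the limiting argument; the only remaining point requiring care is confirming the Feller property of $Q$ so that $Q\Pi f$ is an admissible test function for weak convergence.
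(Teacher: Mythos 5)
Your proposal is correct and follows essentially the same route as the paper's proof: the same splitting of $P_{\lambda}^{L}f-Q\Pi f$ through $Q_{\lambda}\Pi f$ for (i), the same resolvent identity $P_{\lambda}R_{\lambda}=R_{\lambda}-\varphi(\lambda)I+\varphi(\lambda)P_{\lambda}^{L}$ (in equivalent algebraic form) for (ii), and the same three-term decomposition combining (i), (ii), and weak convergence for (iii). Your added remarks on $\psi(\lambda)\downarrow 0$ and on the Feller property of $Q$ (needed so that $Q\Pi f\in\Cc(\cX)$ is a legitimate test function) make explicit two points the paper leaves implicit, but they do not change the argument.
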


\begin{proof}
(i) We have
\begin{align}\label{E-P3.3a}
\supnorm{P_{\lambda}^{L}f-Q\Pi{f}} &\le
\supnorm{Q_{\lambda}(R_{\lambda} f-\Pi{f})}
+\supnorm{Q_{\lambda}\Pi{f} - Q \Pi{f}}\notag\\[5pt]
&\le\supnorm{R_{\lambda} f-\Pi{f}}
+\supnorm{Q_{\lambda}\Pi{f} - Q \Pi{f}}\,.
\end{align}
The first term on the right hand side of \eqref{E-P3.3a}
tends to $0$ as $\lambda\downarrow0$ by
Proposition~\ref{P3.2}, while the second
term does the same by the definition of $Q_{\lambda}$.

(ii) Multiplying both sides of \eqref{E-Plambda}
by $R_{\lambda}$, we have
\begin{equation} \label{E-P3.3b}
P_{\lambda} R_{\lambda} = R_{\lambda} - \varphi(\lambda)I
+ \varphi(\lambda)P_{\lambda}^{L},
\end{equation}
where $I$ denotes the identity operator. Let $\mu_{\lambda}$ denote an
invariant distribution of $P_{\lambda}$. Hence, by \eqref{E-P3.3b},
we have
$$\mu_{\lambda} R_{\lambda} = \mu_{\lambda} R_{\lambda}
- \varphi(\lambda)\mu_{\lambda} +
\varphi(\lambda) \mu_{\lambda} P_{\lambda}^{L}\,,$$
and the second claim follows.

(iii)
Let $\Hat{\mu}$ be a limit point of $\mu_{\lambda}$ as
$\lambda\downarrow0$.
For any $f\in\Cc(\cX)$, we have
\begin{equation*}
\Hat{\mu}[f] - (\Hat{\mu} Q\Pi)[f] =
\bigl(\Hat{\mu}[f] - \mu_{\lambda}[f]\bigr)
+ \mu_{\lambda}\bigl[P_{\lambda}^{L} f - Q\Pi{f}\bigr]
+ \bigl(\mu_{\lambda}\bigl[Q\Pi{f}\bigr] - \Hat{\mu}\bigl[Q\Pi{f}\bigr]\bigr)\,.
\end{equation*}
The first and the third terms on the right hand side tend to $0$
as $\lambda\downarrow0$ along some sequence,
by the weak convergence $\mu_{\lambda}$ to $\Hat{\mu}$,
while the second term is dominated by
$\supnorm{P_{\lambda}^{L}[f] - Q\Pi[f]}$ that also tends to $0$
by part (i).
\end{proof}

For $s\in\cS$ let $N_{\varepsilon}(s)$ denote the
open $\varepsilon$-neighborhood of $s$ in $\cX$.
For any two pure strategy
states, $s,s'\in\cS$, define
$$\Hat{P}_{ss'} \df \lim_{t\rightarrow\infty}QP^{t}(s,N_{\varepsilon}(s'))$$
for some $\varepsilon>0$ sufficiently small.
By Proposition~\ref{P3.1}, $\Hat{P}_{ss'}$ is independent of the selection of
$\varepsilon$.
Define also the $\abs{\cS}\times\abs{\cS}$
stochastic matrix $\Hat{P}\df[\Hat{P}_{ss'}]$.

\begin{proposition}\label{P3.4}
There exists a unique invariant probability
measure $\Hat{\mu}$ of $Q\Pi$.
It satisfies
\begin{equation}\label{E-P3.4a}
\Hat{\mu}(\cdot) = \sum_{s\in\cS}\pi_{s}\delta_{s}(\cdot)
\end{equation}
for some constants $\pi_{s}\geq{0}$, $s\in\cS$.
Moreover, $\pi=(\pi_{1},\dotsc,\pi_{\abs{\cS}})$
is an invariant distribution of $\Hat{P}$, i.e.,
$\pi = \pi\Hat{P}$.
\end{proposition}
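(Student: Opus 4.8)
The plan is to collapse the study of $Q\Pi$, which acts on measures over the uncountable space $\cX$, onto the finite stochastic matrix $\Hat{P}$, and then to extract existence, the Dirac form, and the identity $\pi=\pi\Hat{P}$ from elementary properties of $\Hat{P}$. The first task is to identify $\Hat{P}$ with the restriction of $Q\Pi$ to $\cS$, that is, to establish $\Hat{P}_{ss'}=Q\Pi(s,\{s'\})$. Since $\cS$ is a finite set of pairwise separated points of $\cX$, I would fix $\varepsilon>0$ small enough that $N_{\varepsilon}(s')\cap\cS=\{s'\}$ and the sphere $\partial N_{\varepsilon}(s')$ meets no point of $\cS$ (all but countably many $\varepsilon$ qualify). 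By Proposition~\ref{P3.2}(i), $P^{t}(y,\cdot)$ converges weakly to $\Pi(y,\cdot)$; because $\Pi(y,\cdot)$ is supported on $\cS$ it charges neither $\partial N_{\varepsilon}(s')$ nor $N_{\varepsilon}(s')\setminus\{s'\}$, so the Portmanteau theorem gives $P^{t}(y,N_{\varepsilon}(s'))\to\Pi(y,\{s'\})$ for every $y$. Integrating against $Q(s,\D y)$ and passing to the limit by dominated convergence (the integrands are bounded by $1$) yields $\Hat{P}_{ss'}=Q\Pi(s,\{s'\})$, which simultaneously confirms the $\varepsilon$-independence asserted before the proposition.

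Existence and the Dirac form come next. Existence is already available: invariant measures $\mu_{\lambda}$ of $P_{\lambda}$ exist, $\Pm(\cX)$ is weak$^{*}$ compact, and by Proposition~\ref{P3.3}(iii) every weak limit point of $\mu_{\lambda}$ is invariant for $Q\Pi$. For the form, note that since $\Pi(y,\cdot)$ lives on $\cS$, $Q\Pi(x,A)=\int_{\cX}Q(x,\D y)\,\Pi(y,A\cap\cS)$ depends only on $A\cap\cS$, so $Q\Pi(x,\cdot)$ is supported on $\cS$ for every $x$. Consequently, if $\Hat{\mu}=\Hat{\mu}Q\Pi$ and $A\cap\cS=\varnothing$, then $\Hat{\mu}(A)=\int_{\cX}\Hat{\mu}(\D x)\,Q\Pi(x,A)=0$, whence $\Hat{\mu}$ is supported on the finite set $\cS$ and must take the form $\Hat{\mu}=\sum_{s\in\cS}\pi_{s}\delta_{s}$ with $\pi_{s}\ge0$ and $\sum_{s}\pi_{s}=1$. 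Evaluating the invariance identity on the singleton $\{s'\}$ and invoking Step~1, $\pi_{s'}=(\Hat{\mu}Q\Pi)(\{s'\})=\sum_{s\in\cS}\pi_{s}\,Q\Pi(s,\{s'\})=\sum_{s\in\cS}\pi_{s}\Hat{P}_{ss'}$, i.e.\ $\pi=\pi\Hat{P}$. Running the computation in reverse shows every invariant distribution of $\Hat{P}$ lifts to a $Q\Pi$-invariant measure, so the invariant measures of $Q\Pi$ correspond bijectively to the invariant distributions of $\Hat{P}$.

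What remains is uniqueness, which I expect to be the main obstacle and which I would obtain by proving $\Hat{P}$ irreducible. The crucial estimate is that from a pure strategy state $s=(\alpha,u(\alpha))$ one has $\Hat{P}_{ss''}>0$ whenever the profile of $s''$ is $(\alpha_{i}',\alpha_{-i})$, differing from $\alpha$ in a single coordinate. Under $Q$ exactly player $i$ trembles; with positive probability the tremble pushes $\rho_{i}$ above $u_{i}(\alpha)$, so at the ensuing unperturbed step player~$i$ is dissatisfied and, with positive probability, experiments to $\alpha_{i}'$ while every other player, still satisfied, stays put. From the resulting configuration with profile $(\alpha_{i}',\alpha_{-i})$, Proposition~\ref{P3.1} guarantees that the probability of $B_{\infty}$ is at least $\inf_{x}\Prob_{x}(B_{\infty})>0$; on that event no action ever changes again and the unperturbed dynamics are absorbed at $s''$. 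Intersecting these positive-probability events (legitimate by the Markov property) bounds $\liminf_{t}QP^{t}(s,N_{\varepsilon}(s''))$ below by a positive constant, so $\Hat{P}_{ss''}>0$. Since single-coordinate changes connect every pair of profiles in $\cA$, $\Hat{P}$ is irreducible on the finite set $\cS$ and therefore has a unique invariant distribution $\pi$; through the bijection above, $\Hat{\mu}=\sum_{s\in\cS}\pi_{s}\delta_{s}$ is the unique invariant probability measure of $Q\Pi$. The two delicate points are the measure-theoretic matching in Step~1, where an open neighborhood must be replaced by a singleton in the limit, and making the reachability argument quantitative enough to deliver irreducibility.
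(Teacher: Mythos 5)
Your proposal is correct and follows essentially the same route as the paper's proof: identify $\Hat{P}_{ss'}$ with $Q\Pi(s,\{s'\})$ via Proposition~\ref{P3.2} and the Portmanteau theorem, deduce the Dirac form \eqref{E-P3.4a} from the fact that $Q\Pi(x,\cdot)$ is supported on $\cS$ for every $x$, read off $\pi=\pi\Hat{P}$ by evaluating the invariance identity on singletons, and obtain uniqueness from irreducibility of the finite chain using a tremble followed by Proposition~\ref{P3.1}. The only cosmetic differences are that the paper gets existence from the Feller property of $Q\Pi$ on the compact space $\cX$ rather than from weak limit points of $\mu_{\lambda}$ via Proposition~\ref{P3.3}(iii), and your two-step account of the tremble-then-experiment mechanism (tremble under $Q$ raises $\rho_{i}$, the dissatisfied player switches at the next unperturbed step) is in fact a more careful rendering of the paper's shorthand claim that $Q(\alpha,(\alpha_{i}',\alpha_{-i}))>0$.
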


\begin{proof}
By Proposition~\ref{P3.2}, the support of
$\Pi$ is $\cS$, and so is the support of $Q\Pi$.
Thus, for any sufficiently small $\varepsilon>0$,
$Q\Pi(s,s') = Q\Pi(s,N_{\varepsilon}(s'))\,.$
Since $Q\Pi$ is a Feller transition function
it admits an invariant probability measure, say $\Hat{\mu}$.
The support of $\Hat{\mu}$ is also $\cS$, and, therefore,
it has the form of \eqref{E-P3.4a} for some constants $\pi_{s}\geq{0}$,
$s\in\cS$.

Note also that $N_\varepsilon(s')$ is a continuity set of $Q\Pi(s,\cdot)$, i.e.,
$Q\Pi(s,\partial{N_\varepsilon}(s'))=0$.
Therefore, by the Portmanteau theorem,
$$Q\Pi(s,N_\varepsilon(s'))=\lim_{t\rightarrow\infty}
QP^{t}(s,N_\varepsilon(s'))= \Hat{P}_{ss'}\,.$$
If we also define $\pi_{s} \df \Hat{\mu}(N_{\varepsilon}(s))$, then
\begin{equation*}
\pi_{s'} = \Hat{\mu}(N_{\varepsilon}(s')) = \sum_{s\in\cS}\pi_{s}
Q\Pi(s,N_{\varepsilon}(s')) = \sum_{s\in\cS}\pi_{s} \Hat{P}_{ss'}\,,
\end{equation*}
which shows that $\pi$ is an invariant distribution of
$\Hat{P}$, i.e., $\pi = \pi\Hat{P}$.

To establish the uniqueness of the invariant distribution of $Q\Pi$,
recall the definition of $Q$. Since $\cS$ is isomorphic with $\cA$,
we can identify $s\in\cS$ with an element $\alpha\in\cA$. If agent
$i$ trembles, then all actions in $\cA_{i}$ have positive
probability of being selected, i.e.,
$Q(\alpha,(\alpha_{i}',\alpha_{-i}))>0$ for all
$\alpha_i'\in\mathcal{A}_i$ and $i\in\cI$. It follows by
Proposition~\ref{P3.1} that
$Q\Pi(\alpha,(\alpha_{i}',\alpha_{-i}))>0$ for all
$\alpha_{i}'\in\cA_{i}$ and $i\in\cI$. Finite induction then shows
that $(Q\Pi)^{n}(\alpha,\alpha')>0$ for all $\alpha$,
$\alpha'\in\cA$. It follows that if we restrict the domain of $Q\Pi$
to $\cS$, then $Q\Pi$ defines an irreducible stochastic matrix.
Therefore, $Q\Pi$ has a unique invariant distribution.
\end{proof}

Theorem~\ref{T3.1} follows from Propositions~\ref{P3.3} and \ref{P3.4}.
Moreover, Proposition~\ref{P3.4} shows that
the unique invariant probability measure of $Q\Pi$ agrees with
the unique invariant probability distribution of the finite
stochastic matrix $\Hat{P}$.

\begin{remark}
A similar result to Proposition~\ref{P3.3}(i), based on which
Theorem~\ref{T3.1} was shown, has also been derived in
\cite[Theorem~2]{Karandikar98}. 
The result in \cite{Karandikar98} though assumes incorrectly that the process
$Q$ satisfies the strong Feller property.
Note that the  proof of Proposition~\ref{P3.3} does not make use of any
such assumption and provides a corrected analysis for the asymptotic behavior
of the aspiration learning scheme presented in \cite{Karandikar98}.
\end{remark}

In the forthcoming sections, we demonstrate the importance of Theorem~\ref{T3.1}
in characterizing the asymptotic behavior of aspiration learning in
large coordination games.
Note that prior analysis of this type of aspiration learning, 
e.g., in \cite{Karandikar98,ChoMatsui05}, was only restricted to two
player and two action games.

\section{Efficiency in Coordination Games} \label{S4}

In this section, we study the asymptotic behavior of the invariant
distribution $\pi$ of $\Hat{P}$ in strict coordination games when
the step size $\epsilon$ approaches zero. The aim is to characterize
the states in $\cS$ that are stochastically stable with respect to
the parameter $\epsilon$. To this end, first denote $\oS$ as the set
of pure strategy states that correspond to $\oA$. Clearly, $\oS$ is
isomorphic to $\oA$. Also, denote by $\cS^*$ the set of pure
strategy states that correspond to the set of Nash action profiles
$\cA^*$.

We define two constants that are important in the analysis:
\begin{equation*}
\begin{split}
\Delta_{\text{min}}&\df \min_{i\in\cI}\;
\min_{\alpha\in\oA\,,\alpha'\notin\oA}\;
\bigl\{u_{i}(\alpha) - u_{i}(\alpha')\bigr\}\\[5pt]
\Delta_{\text{max}} &\df \max_{i\in\cI}\;
\max_{\alpha\neq\alpha'}\;\abs{u_{i}(\alpha') - u_{i}(\alpha)}\,.
\end{split}
\end{equation*}
For strict coordination games $\Delta_{\text{min}}>0$,
and it is the smallest possible payoff decrease
from the dominant payoff due to any deviation from the set of actions in $\oA$.

To facilitate the analysis
we let $\Tilde{\Prob}_{x}$ and $\Tilde{\Exp}_{x}$ denote the probability and
expectation operator, respectively,
on the path space of a Markov process $X_{t}$ starting at $x\in\cX$
at $t=0$, and governed
by the family of transition probabilities $\{QP^{t} : t\ge0\}$.
In other words $\Tilde{\Prob}_{x}(X_{t}\in A)=QP^{t-1}(x,A)$ for
any $A\in\Bor(\cX)$.

\subsection{Two Technical Lemmas}

Lemma~\ref{L4.1} below introduces two new hypotheses.
The first hypothesis corresponds to the case at which payoff
differences within the same action profile are smaller than payoff
differences between dominant and non-dominant action profiles.
The second hypothesis corresponds to the case where each player
receives a unique payoff within $\oA$.

\begin{lemma}\label{L4.1}
Let $\mathscr{G}$ be a strict coordination game
satisfying either one of the following two hypotheses:
\begin{enumerate}
\item[\upshape{(H1)}]
$\delta^{*}\df\max_{i\ne j}\;\max_{\alpha\in\mathcal{A}}\;
\abs{u_{i}(\alpha)-u_{j}(\alpha)}<\Delta_{\text{min}}$.
\item[\upshape{(H2)}]
$\oA\equiv\{\Bar{\alpha}\in\cA:u_{i}(\Bar{\alpha})=\max_{\alpha\in\cA}\;
u_{i}(\alpha)~~\forall i\in\cI\}\,$.
\end{enumerate}
Then, there exists a constant
$C_{0}=C_{0}(\delta^{*},\Delta_{\text{min}},\Delta_{\text{max}})$
such that if $\zeta< C_{0}$ then
\begin{equation*}
\Hat{P}_{\Bar{s}s} \xrightarrow[\varepsilon\downarrow0]{}0
\qquad\text{~for all~} \Bar{s}\in\oS\,,~s\in\cS\setminus\oS\,.
\end{equation*}
\end{lemma}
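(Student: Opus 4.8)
The plan is to use Proposition~\ref{P3.4}, by which $\Hat{P}_{\Bar{s}s}=Q\Pi(\Bar{s},s)$ is the probability that, starting from the dominant pure strategy state $\Bar{s}=(\Bar\alpha,u(\Bar\alpha))$, a single tremble (the application of $Q$) followed by the unperturbed dynamics settles at the non-dominant state $s$. First I would record the effect of the tremble at $\Bar s$: every agent is satisfied, so the joint action is unchanged and exactly one aspiration $\rho_{i}$ is displaced by at most $\zeta$; a downward displacement keeps every agent satisfied and the process returns to $\Bar s$ almost surely. Hence it suffices to bound, uniformly over post-tremble states $y=(\alpha,\rho)$ whose aspiration vector lies within $O(\zeta)$ of the dominant vector $u(\Bar\alpha)$, the probability $\Pi(y,s)$ that the unperturbed process converges to $s\in\cS\setminus\oS$.

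The structural input is a \emph{persistent-dissatisfaction} estimate. Since $\mathscr{G}$ is a strict coordination game, $\Delta_{\text{min}}>0$ and every agent's payoff at any profile outside $\oA$ is at least $\Delta_{\text{min}}$ below its dominant value. Thus, as long as the aspiration vector stays within a fixed margin of $u(\Bar\alpha)$, every agent is dissatisfied at every non-dominant profile and experiments with probability bounded below by a constant $p_{0}=p_{0}(h,c,\Delta_{\text{min}})>0$ independent of the step size $\epsilon$. Combining this with Claim~\ref{CL2.1} (and Definition~\ref{D2.2}(c) to leave non-dominant Nash profiles), from any non-dominant profile there is a bounded-length experimentation path reaching $\oA$; I would package this as: from any near-dominant state sitting at a non-dominant profile, within a block of $\abs{\cA}$ steps the process reaches $\oA$ and locks in there, converging to $\oS$, with probability at least some $p_{1}>0$, uniformly in $\epsilon$.

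The quantitative heart is a \emph{descent barrier}. To converge to $s=(\alpha_{s},u(\alpha_{s}))$ with $\alpha_{s}\notin\oA$, the entire aspiration vector must fall from its near-dominant value to $u(\alpha_{s})$, a drop of at least $\Delta_{\text{min}}$. Because the aspiration recursion moves $\rho$ at rate $\epsilon$, even the initial portion of this descent, until the aspirations have dropped by $\Delta_{\text{min}}/2$, takes at least $c_{1}/\epsilon$ steps, and throughout it the relevant agents remain dissatisfied at non-dominant profiles by a margin bounded below. The block estimate above then applies, so the probability of avoiding absorption in $\oS$ over this window is at most $(1-p_{1})^{c_{1}/(\epsilon\abs{\cA})}$, which tends to $0$ as $\epsilon\downarrow0$. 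This yields $\Pi(y,s)\to0$, hence $\Hat{P}_{\Bar{s}s}\to0$. The two hypotheses enter to guarantee that arrival at $\oA$ genuinely stabilizes the process: under (H2) each payoff is bounded by the agent's global maximum, so $\rho_{i}(t)\le u_{i}(\Bar\alpha)+\zeta$ and the dissatisfaction at $\oA$ is at most $\zeta$; under (H1) the contraction $\abs{\rho_{i}(t+1)-\rho_{j}(t+1)}\le(1-\epsilon)\abs{\rho_{i}(t)-\rho_{j}(t)}+\epsilon\delta^{*}$ confines the aspiration vector to a band of width $\delta^{*}+\zeta$, so it cannot become satisfied at $\alpha_{s}$ before the whole band has descended below the dominant level. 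Taking $C_{0}$ of order $\Delta_{\text{min}}-\delta^{*}$ under (H1), and of order $\Delta_{\text{min}}$ under (H2), keeps the single tremble smaller than the dissatisfaction margin, so persistent dissatisfaction survives it.

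The step I expect to be the main obstacle is making the descent barrier rigorous against the ``gradual-lowering'' scenario, in which the aspirations are brought down not at a single profile but along a long excursion through many non-dominant profiles, so that $\alpha_{s}$ is reached only after the aspirations have already descended. Ruling this out requires controlling the coupled action/aspiration dynamics along arbitrary sample paths and showing that the $\Omega(1/\epsilon)$-step descent and the avoidance of absorption in $\oS$ cannot coexist; this is exactly where the band estimate under (H1) and the global-maximum bound under (H2) are indispensable, together with the uniform-in-$\epsilon$ lower bound $p_{1}$ furnished by Claim~\ref{CL2.1} and Definition~\ref{D2.2}(c).
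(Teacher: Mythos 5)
Your proposal is correct and follows essentially the same route as the paper's proof: the paper formalizes your ``descent barrier'' through the sets $D_{0}$, $D_{1}$, $\Gamma$ and the stopping times $\uptau_{0},\uptau_{1}$ (descending by $\Delta_{\text{min}}/4$ takes at least $\lfloor\Delta_{\text{min}}/(4\epsilon\Delta_{\text{max}})\rfloor$ steps, during each of which there is a probability $\gamma>0$, uniform in $\epsilon$, of playing a profile in $\oA$ and locking in, giving the bound $(1-\gamma)^{t_{0}}\to0$), and it uses exactly your two mechanisms---the contraction bound on $\abs{\rho_{i}-\rho_{j}}$ under (H1), and under (H2) the global-maximum property together with an initial phase of $N^{*}(\epsilon)=\lfloor\zeta/(\epsilon\Delta_{\text{min}})\rfloor$ non-dominant plays that brings the trembled agent's aspiration below its dominant payoff---to keep the aspiration vector in the band where that dichotomy holds. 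The only cosmetic difference is that your detour through Claim~\ref{CL2.1} and Definition~\ref{D2.2}(c) is unnecessary here: inside the band every agent is dissatisfied at every non-dominant profile, so a single step of simultaneous experimentation reaches $\oA$ with probability bounded away from zero, which is the one-step estimate the paper uses (those tools are instead what drive Lemma~\ref{L4.2} and Theorem~\ref{T4.1}).
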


\begin{proof}
Suppose (H1) holds. Select
$\zeta<\frac{1}{2}(\Delta_{\text{min}}-\delta^{*})$. Let
$x(0)=\Bar{s}\equiv(\Bar{\alpha},\Bar{\rho})\in\oS$. Without loss of
generality suppose agent $1$ trembles. If $r_{1}(0)<0$ the process
clearly converges to $\Bar{s}$ as $t\to\infty$ with probability $1$.
Therefore, suppose $r_{1}(0)>0$. Note that for $t\ge0$ we have
\begin{align}\label{E-L4.1a}
\abs{\rho_{i}(t+1)-\rho_{j}(t+1)} &\le
(1-\epsilon)\abs{\rho_{i}(t)-\rho_{j}(t)}
+\epsilon\abs{u_{i}(\alpha(t))-u_{j}(\alpha(t))}\notag\\[5pt]
&\le (1-\epsilon)\abs{\rho_{i}(t)-\rho_{j}(t)} +\epsilon\delta^{*}
\qquad\text{~for all~} i,j\in\cI\,,
\end{align}
and since $\zeta<\frac{1}{2}(\Delta_{\text{min}}-\delta^{*})$ by a straightforward
induction argument using \eqref{E-L4.1a} we obtain
\begin{equation}\label{E-L4.1b}
\max_{i,j\in\cI}\;\abs{\rho_{i}(t)-\rho_{j}(t)}\le
\frac{\Delta_{\text{min}}+\delta^{*}}{2}\qquad\forall t\ge0\,.
\end{equation}
For $i\in\cI$ define
\begin{equation*}
\Breve{\rho}_{i} \df\min_{\Bar{\alpha}\in\oA}\;u_{i}(\Bar{\alpha})
\qquad\text{and}\qquad
\Hat{\rho}_{i} \df\max_{\alpha\in\cA\setminus\oA}\;u_{i}(\alpha)\,,
\end{equation*}
and for $k=0,1$ define the sets
\begin{equation*}
D_{k} \df \left\{(\alpha,\rho)\in\cX : \rho_{i} \le
\frac{\Breve{\rho}_{i} + (2k+1) \Hat{\rho}_{i}}{2k+2}
+ \frac{\Delta_{\text{min}}+\delta^{*}}{4}\,,~i\in\cI\right\}\,.
\end{equation*}
Let also
\begin{equation*}
\Gamma \df \left\{(\alpha,\rho)\in\cX :
\min(\Breve{\rho}_{i}-\rho_{i},\rho_{i}-\Hat{\rho}_{i})
\ge\frac{1}{4}(\Delta_{\text{min}}-\delta^{*})\,,~i\in\cI\right\}\,,
\end{equation*}
and
\begin{equation*}
\Bar{\Gamma} \df\left\{(\alpha,\rho)\in\Gamma : \alpha \in \oA\right\}\,.
\end{equation*}
Recall the definition of $\uptau$ in \eqref{E-uptau} and in order
to simplify the notation let $\uptau_{k}\df\uptau(D_{k})$, for $k=0,1$.
Note the following:
Firstly, using \eqref{E-L4.1b}, we obtain
\begin{equation}\label{E-L4.1c}
\Gamma\subset D_{0}\setminus D_{1}\,.
\end{equation}
Secondly, since $\abs{\rho_{i}(t+1)-\rho_{i}(t)}\le \epsilon\Delta_{\text{max}}$,
we obtain
\begin{equation}\label{E-L4.1d}
\left(\uptau_{1}-\uptau_{0} - \frac{\Delta_{\text{min}}}
{4\epsilon\Delta_{\text{max}}}\right)\Ind_{\{\tau_{0}<\infty\}}\ge0
\qquad \Tilde{\Prob}_{\Bar{s}}\text{-a.s.}
\end{equation}
It is also evident that
\begin{equation}\label{E-obvious}
\Bigl\{ \limsup_{t\to\infty}\; d_{\cS}(X_{t},\cS\setminus\oS)=0\Bigr\}
\subset \{\uptau_{1}<\infty\}
\qquad \Tilde{\Prob}_{\Bar{s}}\text{-a.s.}\,,
\end{equation}
where $d_{\cS}$ is a metric in $\cS$.
It is clear from the definition of $P$
that if $x\in\Gamma$ there are two possibilities:
If a profile $\alpha\in\cA\setminus\oA$ is played, then
$\rho_{i}$ decreases in value for all $i\in\cI$, or in other words,
that $P(x,\Gamma)=1$ for all $x\in(\Gamma\cap D_{1}^{c})\setminus\Bar{\Gamma}$.
Otherwise, if a profile in $\oA$ is played, then the sample path
gets trapped in the domain of attraction of $\oS$.
This means that if $x\in\Bar{\Gamma}$ then $\Prob_{x}(\uptau_{1}<\infty)=0$,
where $\Prob_{x}$ is the probability measure induced by $P$ defined
in Section~\ref{S3}.
In this case, and by \eqref{E-L4.1c}, we also have
\begin{equation*}
P(x,\Bar{\Gamma})\ge
\min\;\left\{\frac{c}{4}(\Delta_{\text{min}}-\delta^{*}), 1-h\right\}\df \gamma
\qquad \forall x\in\Gamma\cap D_{1}^{c}\,.
\end{equation*}
Thus, using the Markov property we obtain, with
$t_{0}\df\left\lfloor\frac{\Delta_{\text{min}}}
{4\epsilon\Delta_{\text{max}}}\right\rfloor$,
\begin{equation}\label{E-L4.1f}
P^{t_{0}}(x,\Gamma\setminus\Bar\Gamma) \le (1-\gamma)^{t_{0}}
\qquad \forall x\in\Gamma\cap D_{1}^{c}\,.
\end{equation}
Conditioning on $\sF_{\uptau_{0}}$ and using
the strong Markov property, 
\eqref{E-L4.1d}, \eqref{E-L4.1f} and the foregoing, we obtain
\begin{align}\label{E-L4.1g}
\Tilde{\Prob}_{\Bar{s}}(\uptau_{1}<\infty)
&\le \Tilde{\Exp}_{\Bar{s}} \left[\Tilde{\Exp}_{\Bar{s}}
\left[\Ind_{\{\uptau_{1}<\infty\}}\mid \sF_{\uptau_{0}}\right]\right]
\notag\\[5pt]
&\le \Tilde{\Exp}_{\Bar{s}}
\left[\Prob_{X_{\uptau_{0}}}(\uptau_{1}<\infty) \right]
\notag\\[5pt]
&\le \sup_{x\in\Gamma\cap D_{1}^{c}}\;
\Prob_{x}(\uptau_{1}<\infty)\notag\\[5pt]
&\le \sup_{x\in\Gamma\cap D_{1}^{c}}\;
P^{t_{0}}(x,\Gamma\setminus\Bar\Gamma)\notag\\[5pt]
&\le \exp\left(
\left\lfloor\frac{\Delta_{\text{min}}}
{4\epsilon\Delta_{\text{max}}}\right\rfloor\log(1-\gamma)\right)\,.
\end{align}
The result then follows by \eqref{E-obvious} and \eqref{E-L4.1g}.

Next, suppose (H2) holds.
Note that in this case
$\Breve{\rho}_{i} \equiv u_{i}(\Bar{\alpha})$ for all $\Bar{\alpha}\in\oA\,.$
Pick any $\zeta < \frac{\Delta_{\text{min}}^2}{4\Delta_{\text{max}}}$.
As before we may suppose that agent $1$ trembles.
Let
$N^*(\epsilon)\triangleq\lfloor{\nicefrac{\zeta}
{\epsilon\Delta_{\text{min}}}}\rfloor$.
Let $\Breve{\uptau}$ be the first time that an action profile
in $\cA\setminus\oA$ has been played at least $N^{*}(\epsilon)$ times.
Then, at time $\Breve{\uptau}$ the aspiration level of the
initially perturbed agent $1$ satisfies:
\begin{equation*}
\rho_{1}(\Breve{\uptau}) \le \Breve{\rho}_1 + \zeta
-\epsilon\Delta_{\text{min}}N^{*}(\epsilon) \le \Breve{\rho}_{1}\,,
\end{equation*}
while the aspiration level of any agent $i\in\cI$ satisfies
\begin{equation*}
\rho_{i}(\Breve{\uptau}) \ge \Breve{\rho}_{i} - \epsilon \Delta_{\text{max}}
\left\lfloor\frac{\zeta}{\epsilon\Delta_{\text{min}}}\right\rfloor
\ge \Breve{\rho}_{i} - \epsilon \Delta_{\text{max}}
\frac{\zeta}{\epsilon\Delta_{\text{min}}} >
\Breve{\rho}_{i} - \frac{\Delta_{\text{min}}}{4}\,.
\end{equation*}
For $k=0,1$ define the sets
\begin{equation*}
\Tilde{D}_{k} \df \left\{(\alpha,\rho)\in\cX : \rho_{i} \le
\frac{\Breve{\rho}_{i} + (2k+1) \Hat{\rho}_{i}}{2k+2}\,,~i\in\cI\right\}\,,
\end{equation*}
and let $\Tilde{\uptau}_{k}\df\uptau(\Tilde{D}_{k})$, for $k=0,1$.
Also define
\begin{equation*}
\Tilde{\Gamma} \df \left\{(\alpha,\rho)\in\cX : \rho_{i} \le \Breve{\rho}_{i}
- \frac{\Delta^{2}_{\text{min}}}{4\Delta_{\text{max}}}\,,~i\in\cI\right\}\,.
\end{equation*}
It is straightforward to show that
$\Tilde{\Prob}_{\Bar{s}}(X_{\Tilde{\uptau}_{0}}\in\Tilde{\Gamma})=1$.
From this point on, we proceed as in the previous case.
\end{proof}

For the lemma that follows we need to define the following constant.
For each $\alpha^*\in\cA^*\setminus\oA$, select any $\Tilde{\alpha}\in\cA$ and
$\{j_{1},\dotsc,j_{n-1}\}\subset\cI$ which
satisfy Definition~\ref{D2.2}~(c), and define
\begin{equation*}
\Delta_{0} \df \frac{1}{2}\;\min_{\alpha^*\in\cA^*\setminus\oA}\;
\min_{ 1\le \ell \le n-1}\;
\min_{i\in\{j_{1},\dotsc,j_{\ell+1}\}}\;
\left\{u_{i}(\alpha^*)
- u_{i}\left(\Tilde{\alpha}_{j_{1}},\dotsc,\Tilde{\alpha}_{j_{\ell}},
\alpha^*_{-\{j_{1},\dotsc,j_{\ell}\}}\right)\right\}\,.
\end{equation*}
By Definition~\ref{D2.2}~(c), $\Delta_{0}>0$.

\begin{lemma} \label{L4.2}
Suppose
\begin{equation}\label{E-L4.2A}
\epsilon < \frac{\Delta_{0}\wedge\Delta_{\text{min}}}{n\Delta_{\text{max}}}\,.
\end{equation}
Then, for any strict coordination game $\mathscr{G}$ for which
$\cA^*\setminus\oA\neq\varnothing$,
there exists
a constant $M_{0}=M_{0}(h,\abs{\cA})>0$ such that
\begin{equation*}
\Hat{P}_{s^{*}\Bar{s}}\ge \frac{M_{0}}{c\,\zeta\wedge(1-h)}\qquad
\text{~for all~} s^{*}\in\cS^{*}\setminus\oS\,,~\Bar{s}\in\oS\,.
\end{equation*}
\end{lemma}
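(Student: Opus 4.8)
The plan is to construct, starting from the non-dominant Nash state $s^{*}=(\alpha^{*},\rho^{*})$ (so that $\rho^{*}_{i}=u_{i}(\alpha^{*})$ for all $i$), an explicit finite sample path under the kernels $\{QP^{t}\}$ that is ignited by the single tremble carried by $Q$, drags every agent into dissatisfaction by following the payoff-decreasing sequence furnished by Definition~\ref{D2.2}(c), and is then absorbed into $\oS$ because $\oA$ payoff-dominates $\cA\setminus\oA$. For the given $\alpha^{*}\in\cA^{*}\setminus\oA$ I fix once and for all the auxiliary profile $\Tilde{\alpha}$ and the distinct agents $j_{1},\dotsc,j_{n-1}$ attaining $\Delta_{0}$, and I use the freedom in $Q$ to place the trembling agent at $j_{1}$: with probability bounded below by a constant depending only on $h$ and $\abs{\cA}$, agent $j_{1}$ trembles with $r_{j_{1}}\in(0,\zeta]$, which lifts $\rho_{j_{1}}$ above $u_{j_{1}}(\alpha^{*})$ and renders $j_{1}$ dissatisfied, the injected dissatisfaction being at most $\zeta$.

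Before building the path I would record two quantitative facts that keep the construction robust over its bounded horizon. First, since $\abs{\rho_{i}(t+1)-\rho_{i}(t)}\le\epsilon\Delta_{\text{max}}$, over the at most $n$ steps of the cascade each aspiration drifts by at most $n\epsilon\Delta_{\text{max}}$, which by \eqref{E-L4.2A} is strictly smaller than $\Delta_{0}\wedge\Delta_{\text{min}}$; consequently the payoff gaps of Definition~\ref{D2.2}(c) (each at least $2\Delta_{0}$) and the domination gaps of $\oA$ (each at least $\Delta_{\text{min}}$) are never closed while the cascade runs. Second, from $\phi(z)=\max(h,1+cz)$ for $z<0$ one reads off that an agent whose current dissatisfaction is $\delta>0$ abandons its action with probability $1-\phi(-\delta)=\min\{1-h,c\delta\}$, and upon switching chooses any prescribed alternative with probability at least $1/(\abs{\cA}-1)$, while when it stays it does so with probability at least $h$.

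The path is then assembled in three phases. In the triggering step agent $j_{1}$ is dissatisfied by an amount of order $\zeta$, so it leaves $\alpha^{*}_{j_{1}}$ and selects $\Tilde{\alpha}_{j_{1}}$; the probability of this single transition is governed by the quantity $c\,\zeta\wedge(1-h)$, the dissatisfaction driving it being capped by the tremble size $\zeta$. This move lowers $u_{j_{2}}$ by at least $2\Delta_{0}$, so at the next step $j_{2}$ is dissatisfied by at least $2\Delta_{0}$; holding $j_{1}$ fixed and letting $j_{2}$ switch to $\Tilde{\alpha}_{j_{2}}$ has probability bounded below by a positive constant involving only $h$, $\abs{\cA}$ and the fixed game data. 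Iterating through $j_{1},\dotsc,j_{n-1}$ brings the process, in at most $n$ steps, to a configuration at which \emph{every} agent is simultaneously dissatisfied. From such a configuration a single further randomization sends the joint action into the prescribed $\Bar{\alpha}\in\oA$ with probability at least $\prod_{i}\min\{1-h,c\delta_{i}\}/(\abs{\cA}-1)$, again a positive constant; once in $\oA$ every agent is satisfied, the path is trapped in the domain of attraction of $\oS$ (as in Proposition~\ref{P3.1}), and it converges into $N_{\varepsilon}(\Bar{s})$ irrespective of the small $\varepsilon$ used in the definition of $\Hat{P}_{s^{*}\Bar{s}}$. Assembling the per-step lower bounds and absorbing the $h$- and $\abs{\cA}$-dependent constants into $M_{0}=M_{0}(h,\abs{\cA})$ while tracking the triggering quantity $c\,\zeta\wedge(1-h)$ yields the asserted estimate $\Hat{P}_{s^{*}\Bar{s}}\ge M_{0}/(c\,\zeta\wedge(1-h))$.

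The step I expect to be the main obstacle is the bookkeeping of the cascade: one must check that each successive agent $j_{\ell+1}$ is genuinely dissatisfied at the instant it is required to move and that the already-moved agents $j_{1},\dotsc,j_{\ell}$ stay put, despite the relentless relaxation of aspirations toward current payoffs. This is precisely where hypothesis \eqref{E-L4.2A} is indispensable, since it guarantees that over the horizon of at most $n$ steps the cumulative aspiration drift $n\epsilon\Delta_{\text{max}}$ cannot erode any of the gaps $2\Delta_{0}$ or $\Delta_{\text{min}}$ on which the successive dissatisfactions and the final domination rely; a secondary point is the $\varepsilon$-independence of the post-cascade absorption into $\oS$, which follows from Proposition~\ref{P3.1}.
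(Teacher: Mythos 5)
Your construction coincides with the paper's own proof in every essential respect: the same single positive tremble placed at $j_{1}$ (probability of order $\tfrac{1}{2n}$, with the resulting dissatisfaction of order $\zeta$ giving the switch probability $c\,\zeta\wedge(1-h)$), the same cascade along the sequence $j_{1},\dotsc,j_{n-1}$ from Definition~\ref{D2.2}~(c) with already-moved agents held in place at cost $h$ per agent per step, the same use of \eqref{E-L4.2A} to ensure the cumulative drift $n\epsilon\Delta_{\text{max}}$ never erodes the $2\Delta_{0}$ dissatisfaction gaps or the $\Delta_{\text{min}}$ domination gap, and the same terminal step in which all $n$ simultaneously dissatisfied agents jointly select $\Bar{\alpha}$ and are then absorbed into $\Bar{s}$ (the paper's inequality \eqref{E-L4.2d} and $\Pi\bigl(s(n-1),\Bar{s}\bigr)\ge P\bigl(s(n-1),\Bar{s}\bigr)$, which your appeal to Proposition~\ref{P3.1} replicates). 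One remark: your assembled product of per-step bounds — exactly like the paper's chain \eqref{E-L4.2a}--\eqref{E-L4.2c} — yields the multiplicative form $\Hat{P}_{s^{*}\Bar{s}}\ge M_{0}\,\bigl(c\,\zeta\wedge(1-h)\bigr)$, so the division appearing in the lemma's display is a typo in the statement that neither your derivation nor the paper's actually produces (and the positive, $\epsilon$-independent lower bound is all that Theorem~\ref{T4.1} uses).
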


\begin{proof}
Let $s^{*}=(\alpha^{*},\rho^{*})\in\cS^{*}\setminus\oS$,
$\Bar{s}=(\Bar{\alpha},\Bar{\rho})\in\oS$.
Suppose $\Tilde{\alpha}\in\cA$ and $\{j_{1},\dotsc,j_{n-1}\}\subset\cI$
are the action profile and sequence of agents, respectively,
corresponding to $\alpha^{*}$
used in the calculation of $\Delta_{0}$.
Consider the sample paths
$s(t)=\bigl(\alpha(t),\rho(t)\bigr)$ satisfying $s(0)=s^{*}$,
$\rho_{j_{1}}(1)\in(\rho_{j_1}^*,\rho_{j_1}^*+\zeta)$, 
$\rho_{-j_{1}}(1) = \rho^{*}_{-j_{1}}$,
and
$\alpha(t)=\left(\Tilde{\alpha}_{j_{1}},\dotsc,\Tilde{\alpha}_{j_{t}},
\alpha^*_{-\{j_{1},\dotsc,j_{t}\}}\right)$, for $0<t<n$.
We have
\begin{equation}\label{E-L4.2a}
Q\bigl(s(0),s(1)\bigr)\ge\frac{1}{2n}
\frac{\bigl(c\,\zeta\wedge(1-h)\bigr)}{\abs{\cA_{j_{1}}}}\,.
\end{equation}
By \eqref{E-L4.2A}, $\rho_i^{*}-\rho_{i}(t)\le \Delta_{0}$
for all $i\in\cI$ and $t\le n$.
Therefore,
\begin{equation*}
\rho_{i}(t)-u_{i}(\alpha(t))\ge \Delta_{0}\qquad\text{~for all~}
i\in\{j_{1},\dotsc,j_{t+1}\}\,,
\end{equation*}
for $0\le t <n$ and hence we obtain
\begin{equation}\label{E-L4.2b}
P\bigl(s(t-1),s(t)\bigr) \ge h^{n-1}
\frac{\bigl(c\Delta_{0}\wedge(1-h)\bigr)}
{\abs{\cA_{j_{t+1}}}}\,,\quad 1 < t < n\,,
\end{equation}
and
\begin{equation}\label{E-L4.2c}
P\bigl(s(n-1),\Bar{s}\bigr)\ge
\frac{\bigl(c\Delta_{0}\wedge(1-h)\bigr)^{n}}
{\abs{\cA}}\,.
\end{equation}
By \eqref{E-L4.2A}, we have
\begin{equation}\label{E-L4.2d}
\Bar{\rho}_{i}-\rho_{i}(n)\ge\Delta_{\text{min}}
+\rho^{*}_{i}-\rho_{i}(n)>0\qquad\forall i\in\cI\,.
\end{equation}
By \eqref{E-L4.2d},
$\Pi\bigl(s(n-1),\Bar{s}\bigr)\ge P\bigl(s(n-1),\Bar{s}\bigr)$.
Consequently, the result follows by \eqref{E-L4.2a}--\eqref{E-L4.2c}.
\end{proof}

\subsection{Main Result}

We define inductively the following collection of sets
\begin{equation*}
\cS_{k} \df \left\{s=(\alpha,\rho)\in\bigcup_{j=0}^{k-1}
(\cS_{j})^c:
\exists i\in\cI, \alpha_{i}'\in\mathrm{BR}_{i}(\alpha)
\text{~satisfying~} \eqref{E-CG2} \text{~and~}
(\alpha_{i}',\alpha_{-i})\in\cS_{k-1}\right\}
\end{equation*}
for $\cS_{0}=\cS^*\cup\oS$.
For example, $\cS_{1}$ includes all
pure strategy states for which there exist an agent $i$ and an action
$\alpha_{i}'\in\mathrm{BR}_{i}(\alpha)$ which satisfies
\eqref{E-CG2}
(i.e., makes no other player worse off) and also
$\alpha'=(\alpha_{i}',\alpha_{-i})\in\cS_{0}$.
Let also $K$ denote the maximum $k$ for which $\cS_{k}$ is
non-empty, i.e.,
$K\df\max\;\{k\in\mathbb{N}:\cS_{k}\neq\varnothing\}\,.$
Such $K$ is well-defined since the set of action profiles $\cA$ is finite.

\begin{lemma}	\label{L4.3}
In any coordination game, the collection of sets $\{\cS_{k}\}_{k=0}^{K}$
forms a partition of $\cS$.
\end{lemma}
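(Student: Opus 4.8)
The plan is to verify the two defining properties of a partition separately: pairwise disjointness of the blocks $\cS_{k}$, and exhaustiveness (that their union is all of $\cS$). Disjointness is essentially built into the recursion, so the real substance is the covering argument, which I would reduce to Claim~\ref{CL2.1}.

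For disjointness, note that by construction each $s\in\cS_{k}$ is chosen from states not already placed in some $\cS_{j}$ with $j<k$; hence $\cS_{k}\cap\cS_{j}=\varnothing$ for every $j<k$, and pairwise disjointness of the whole family follows at once. Alongside this I would record two bookkeeping facts that drop straight out of the recursion. First, $\cS_{0}=\cS^{*}\cup\oS\supseteq\oS\neq\varnothing$. Second, if $\cS_{k-1}=\varnothing$ then no profile can admit a better reply satisfying \eqref{E-CG2} into $\cS_{k-1}$, so $\cS_{k}=\varnothing$; contrapositively, $\cS_{j}\neq\varnothing$ for all $0\le j\le K$. Thus $\{\cS_{k}\}_{k=0}^{K}$ has no empty blocks, as required for a genuine partition.

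The heart of the proof is exhaustiveness. I would set $\cR\df\bigcup_{k\ge0}\cS_{k}$ and show $\cR=\cS$ by proving that $\cR$ contains every profile from which $\cA^{*}\cup\oA$ is reachable through a finite chain of better replies, each step satisfying \eqref{E-CG2}. The key implication is: \emph{if $\alpha$ admits such a chain of length $\ell$ to $\cS_{0}$, then $\alpha\in\cR$}. I would argue this by induction on $\ell$. The base case $\ell=0$ is $\alpha\in\cS_{0}\subseteq\cR$. For the step, the second profile $\beta$ in the chain admits a chain of length $\ell-1$, so by the inductive hypothesis $\beta\in\cS_{j'}$ for some $j'$; since $\alpha$ has a \eqref{E-CG2}-better reply into $\cS_{j'}$, either $\alpha$ already lies in some $\cS_{j}$ with $j\le j'$, or, not being in $\bigcup_{j\le j'}\cS_{j}$, it meets the defining condition of $\cS_{j'+1}$. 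In either case $\alpha\in\cR$.

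To finish, I would invoke Claim~\ref{CL2.1}, which furnishes, for each $\alpha\notin\cA^{*}\cup\oA$, a finite better-reply chain terminating in $\cA^{*}\cup\oA$; together with $\cS_{0}\subseteq\cR$ this yields $\cR=\cS$. The one point I must not gloss over — and the only genuine obstacle — is that the recursion for $\cS_{k}$ requires better replies satisfying \eqref{E-CG2} (no other agent is made worse off), whereas the \emph{statement} of Claim~\ref{CL2.1} asserts only that each step is \emph{some} better reply. Here I would appeal to the construction inside the proof of Claim~\ref{CL2.1}, which at every step selects an agent $i$ and $\alpha_{i}^{k}\in\mathrm{BR}_{i}(\alpha^{k-1})$ with $u_{s}(\alpha_{i}^{k},\alpha_{-i}^{k-1})\ge u_{s}(\alpha^{k-1})$ for all $s\neq i$ — which is precisely \eqref{E-CG2}. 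Hence the chain it produces is admissible for the recursion, the displayed implication applies, and exhaustiveness follows, completing the proof that $\{\cS_{k}\}_{k=0}^{K}$ partitions $\cS$.
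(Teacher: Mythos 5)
Your proof is correct and follows essentially the same route as the paper's: disjointness is taken as immediate from the recursion, and exhaustiveness is reduced to the better-reply chain furnished by Claim~\ref{CL2.1}, walked back to the starting profile. Your write-up is in fact somewhat more careful than the paper's on two points --- the paper asserts $s^{j^*-1}\in\cS_{1},\dotsc,s^{0}\in\cS_{j^*}$ without addressing the possibility that intermediate states already lie in lower-indexed blocks (your dichotomy handles this), and it does not remark that the chain from Claim~\ref{CL2.1} consists of better replies satisfying \eqref{E-CG2}, a gap you explicitly close by appealing to the construction inside that claim's proof.
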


\begin{proof}
By definition of the collection $\{\cS_{k}\}_{k=0}^{K}$, the sets
$\cS_k$ are mutually disjoint. It remains to show that their union
coincides with $\cS$. Assume not, i.e., assume that
there exists $s\in\cS$ such that
$s=(\alpha,\rho)\notin\bigcup_{k=1}^{K}\cS_{k}$.
According to the definition of a coordination game and
Claim~\ref{CL2.1}, there
exists a sequence of action profiles $\{\alpha^{j}\}$,
such that $\alpha^{0}=\alpha$ and $\alpha^{j}=\mathrm{BR}_{i}(\alpha^{j-1})$
for some $i\in\cI$ terminates in
$\cA^*\cup\oA$.
Let $\{s^{j}\}$ denote the sequence of pure strategy states
which corresponds to $\{\alpha^{j}\}$. Then, for some $j^*$ we
have $s^{j^*}\in\cS^*\cup\oS$,
i.e., $s^{j^*}\in\cS_{0}$.
Since $s^{j^*}\in\cS_{0}$, then we should also have
that $s^{j^*-1}\in\cS_{1},\dotsc,s^{0}=s\in\cS_{j^*}$.
However, this conclusion contradicts our assumption that
$s\notin\bigcup_{k=1}^{K}\cS_{k}$.
Thus, $\bigcup_{k=1}^{K}\cS_{k}=\cS$ and therefore the
collection of sets $\{\cS_{k}\}_{k=0}^{K}$ defines a partition
for $\cS$.
\end{proof}

\begin{theorem}\label{T4.1}
Let $\mathscr{G}$ be a strict coordination game that satisfies
either one of the hypotheses (H1) or (H2) in Lemma~\ref{L4.1},
and suppose that $\zeta<C_{0}$.
Then $\pi_{s_{i}}\to0$
as $\epsilon\downarrow{0}$ for all $s_{i}\notin\oS$.
\end{theorem}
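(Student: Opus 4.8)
The plan is to analyze the asymptotic behavior of the invariant distribution $\pi$ of $\Hat{P}$ as $\epsilon\downarrow0$ by exploiting the partition $\{\cS_{k}\}_{k=0}^{K}$ established in Lemma~\ref{L4.3} together with the two transition estimates from Lemmas~\ref{L4.1} and \ref{L4.2}. Since $\pi=\pi\Hat{P}$ characterizes the stationary distribution of a finite irreducible Markov chain on $\cS$, the strategy is to show that the "mass" concentrates on $\oS$ by comparing the rates at which $\Hat{P}$ transitions \emph{into} $\oS$ versus \emph{out of} $\oS$ as $\epsilon\to0$. The key qualitative fact is that, by Lemma~\ref{L4.1}, the probability of escaping $\oS$ to any state outside $\oS$ vanishes as $\epsilon\downarrow0$, i.e.\ $\Hat{P}_{\Bar{s}s}\to0$ for $\Bar{s}\in\oS$, $s\notin\oS$. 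So transitions out of the dominant set become increasingly rare.

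The main work is to show that transitions \emph{toward} $\oS$ do not vanish at a comparable rate, so that the balance equations force $\pi_{s_i}\to0$ off $\oS$. Here I would use the partition structure: for a state $s\in\cS_{k}$ with $k\ge1$, there is a better-reply move satisfying \eqref{E-CG2} leading into $\cS_{k-1}$. First I would argue that along such a better-reply chain the one-step transition probabilities of $\Hat{P}$ stay bounded below by a constant independent of $\epsilon$ (or at worst decaying much more slowly than the escape probabilities from $\oS$), analogously to the lower bounds derived in the proof of Lemma~\ref{L4.2}. This gives a "downhill" drift through the partition levels $\cS_K\to\cS_{K-1}\to\dotsb\to\cS_0$, and then Lemma~\ref{L4.2} handles the final hop from the Nash states $\cS^{*}\setminus\oS$ into $\oS$, ensuring $\oS$ is reached from $\cS_0$ as well. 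The idea is that $\oS$ acts as an almost-absorbing set in the $\epsilon\downarrow0$ limit.

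Concretely, I would set up a comparison argument on the finite chain. Writing the stationarity condition $\pi_{s'}=\sum_{s}\pi_s\Hat{P}_{ss'}$, I would bound the stationary mass of any $s_i\notin\oS$ by the ratio of incoming-to-outgoing flow. For $s_i\in\cS\setminus\oS$ one expects the outflow toward lower-index sets to be $\Omega(1)$ (bounded below uniformly in $\epsilon$), while the only way to feed mass back into $s_i$ from $\oS$ is through the transitions $\Hat{P}_{\Bar{s}s}$ that Lemma~\ref{L4.1} shows tend to $0$. A clean way to formalize this is to construct a Lyapunov/potential function on $\cS$ that assigns higher value to $\oS$, show it is a supermartingale-in-expectation under $\Hat{P}$ up to an $\epsilon$-vanishing error, and conclude via the standard stationary-distribution estimate that $\pi$ concentrates on the maximizing set $\oS$.

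The hard part will be controlling the interplay of the two small parameters $\epsilon$ and $\zeta$ (and the fixed $h$) so that the escape estimate of Lemma~\ref{L4.1}, which is exponentially small in $1/\epsilon$, genuinely dominates the polynomially-or-better lower bounds on the return path, uniformly over the finitely many states. In particular one must ensure that the better-reply chains through the partition, whose transition probabilities involve factors like $c\Delta_0\wedge(1-h)$ from Lemma~\ref{L4.2}, remain bounded away from zero as $\epsilon\downarrow0$ for \emph{fixed} $\zeta<C_0$, so that the ratio of escape probability to return probability vanishes. Handling the bookkeeping of flows across all partition levels simultaneously—rather than just the single escape step—will be the principal technical obstacle, and I would expect the proof to proceed by induction on the partition index $k$, peeling off $\pi_{s}\to0$ for $s\in\cS_k$ from $k=K$ downward.
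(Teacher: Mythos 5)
Your proposal follows the same skeleton as the paper's proof: stationarity (flow-balance) equations for the finite chain $\Hat{P}$ written in block form over the partition $\{\cS_{k}\}_{k=0}^{K}$ of Lemma~\ref{L4.3}, with Lemma~\ref{L4.1} supplying the vanishing outflow from $\oS$, Lemma~\ref{L4.2} supplying an $\epsilon$-independent lower bound on the hop from $\Tilde\cS^{*}\df\cS^{*}\setminus\oS$ into $\oS$, and Definition~\ref{D2.2}~(b) supplying an $\epsilon$-independent lower bound $\Hat{\delta}$ on the better-reply transitions from $\cS_{k+1}$ into $\cS_{k}$. However, there is a genuine gap in how you close the argument: you propose to peel off the levels ``from $k=K$ downward,'' and in the per-state version you assert that the only way to feed mass into a state $s_{i}\notin\oS$ is through the vanishing transitions $\Hat{P}_{\Bar{s}s}$ out of $\oS$. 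Neither step survives scrutiny. The balance equation for the top level reads $\pi_{\cS_{K}}\Hat{P}_{\cS_{K}\cS_{K}^{c}}\bm{1}=\pi_{\cS_{K}^{c}}\Hat{P}_{\cS_{K}^{c}\cS_{K}}\bm{1}$, and the right-hand side contains inflow not only from $\oS$ but also from $\Tilde\cS^{*}$ and from every intermediate level $\cS_{j}$, $1\le j<K$; nothing in the hypotheses makes those transition probabilities small (after a tremble the unperturbed process can settle anywhere), and at the start of a top-down induction you have no control over the masses $\pi_{\cS_{j}}$, $j<K$. So the first step of your induction cannot be established, and the same circularity defeats the single-state incoming/outgoing ratio bound. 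Your fallback Lyapunov/supermartingale idea has a similar problem: the lower bound $\Hat{\delta}$ concerns one favorable transition, not the expected change of a potential, since a state in $\cS_{k}$ may also jump to higher levels with non-vanishing probability, so a drift condition need not hold.

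The paper runs the induction in the opposite direction, outward from $\oS$, and this ordering is what makes the argument work. First, the flow from $\Tilde\cS^{*}$ into $\oS$ is at least $\Tilde{\delta}\,\pi_{\Tilde\cS^{*}}\bm{1}$ (Lemma~\ref{L4.2}) and at most the total outflow from $\oS$, which vanishes by Lemma~\ref{L4.1}; hence $\pi_{\Tilde\cS^{*}}\to0$. Next, the flow from $\cS_{1}$ into $\cS_{0}=\oS\cup\Tilde\cS^{*}$ is at least $\Hat{\delta}\,\pi_{\cS_{1}}\bm{1}$ and at most the total outflow from $\cS_{0}$, which is bounded by the vanishing outflow from $\oS$ plus $\pi_{\Tilde\cS^{*}}\bm{1}\to0$; hence $\pi_{\cS_{1}}\to0$. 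Inductively, the flow from $\cS_{k+1}$ into $\cS_{k}$ is at most the total flow into $\cS_{k}$, which equals the total flow out of $\cS_{k}$, which is at most $\pi_{\cS_{k}}\bm{1}\to0$ by the inductive hypothesis; hence $\pi_{\cS_{k+1}}\to0$, and Lemma~\ref{L4.3} finishes the proof. The essential asymmetry is that what propagates is vanishing \emph{mass}, not vanishing transition probabilities, and it can only propagate outward from $\oS$ because the outflow of a set is bounded by its mass. A final minor point: your concern about an ``exponentially small in $1/\epsilon$'' escape rate versus ``polynomially'' decaying return rates is unnecessary; for fixed $\zeta<C_{0}$ the return bounds $\Tilde{\delta}$ and $\Hat{\delta}$ are constants independent of $\epsilon$, so the comparison is simply vanishing versus bounded below.
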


\begin{proof}
Consider the partition of $\cS$ defined by the
family of sets $\{\cS_{k}\}_{k=0}^{K}$.
Let $\Hat{P}_{\cS_{i}\cS_{j}}$ denote the sub-stochastic matrix composed
of the transition probabilities $\Hat{P}_{s_{i}s_{j}}$ for
$s_{i}\in\cS_{i}$ and $s_{j}\in\cS_{j}$.
In other words $\bigl[\Hat{P}_{\cS_{i}\cS_{j}}\bigr]$ is the block decomposition
of $\Hat{P}$ subordinate to the partition $\{\cS_{0},\cS_{1}\dotsc,\cS_{K}\}$.
Similarly, we define $\Tilde\cS^{*}\df\cS^{*}\setminus\oS$, and let
$$\begin{pmatrix}\Hat{P}_{\oS\oS} & \Hat{P}_{\oS\Tilde\cS^{*}_{\phantom{0}}}\\
\Hat{P}_{\Tilde\cS^{*}_{\phantom{0}}\oS} &
\Hat{P}_{\Tilde\cS^{*}_{\phantom{0}}\Tilde\cS^{*}_{\phantom{0}}}
\end{pmatrix}$$
denote the block decomposition of $\Hat{P}_{\cS_{0}\cS_{0}}$
subordinate to the partition $(\oS,\Tilde\cS^{*})$ of $\cS_{0}$.
From:
\begin{equation*}
\pi_{\oS} =
\pi_{\oS}\Hat{P}_{\oS\oS} +
\pi_{\oS^{c}_{\phantom{0}}}\Hat{P}_{\oS^{c}_{\phantom{0}}\oS}\,,
\end{equation*}
we obtain
\begin{equation*}
\pi_{\oS}(I-\Hat{P}_{\oS\oS})=
\pi_{\oS}\Hat{P}_{\oS\oS^{c}_{\phantom{0}}}=
\pi_{\oS^{c}_{\phantom{0}}}\Hat{P}_{\oS^{c}\oS}\,.
\end{equation*}
By Lemma~\ref{L4.1},
$\Hat{P}_{\oS\oS^{c}_{\phantom{0}}}\to0$ as $\epsilon\to0$, while by
Lemma~\ref{L4.2} for
some positive constant $\Tilde{\delta}$, which does not depend on
$\epsilon$, we have $\Hat{P}_{\Tilde\cS^{*}_{\phantom{0}}\oS}\bm{1}
\ge \Tilde{\delta}\bm{1}$.
Thus,
\begin{equation*}
\Tilde{\delta}\,\pi_{\Tilde\cS^{*}_{\phantom{0}}}\bm{1} \le
\pi_{\Tilde\cS^{*}_{\phantom{0}}}
\Hat{P}_{\Tilde\cS^{*}_{\phantom{0}}\oS}\bm{1} \le
\pi_{\oS}\Hat{P}_{\oS\oS^{c}_{\phantom{0}}}\bm{1}
=\pi_{\oS^{c}_{\phantom{0}}}\Hat{P}_{\oS^{c}_{\phantom{0}}\oS}\bm{1}
\xrightarrow[\epsilon\to0]{}0\,,
\end{equation*}
and we obtain
\begin{equation}\label{E-pi1}
\pi_{\Tilde\cS^{*}_{\phantom{0}}}\to0\quad\text{~as~} \epsilon\to0\,.
\end{equation}
Similarly, from the equation $\pi_{\cS_{0}} =
\pi_{\cS_{0}}\Hat{P}_{\cS_{0}\cS_{0}} +
\pi_{\cS^{c}_{0}}\Hat{P}_{\cS_{0}^{c}\cS_{0}}$, we obtain
$\pi_{\cS_{0}}\Hat{P}_{\cS_{0}\cS_{0}^{c}}\bm{1} =
\pi_{\cS^{c}_{0}}\Hat{P}_{\cS^{c}_{0}\cS_{0}}\bm{1}$. It is
straightforward to show, using Definition~\ref{D2.2}~(b), that for some
positive constant $\Hat{\delta}$, which does not depend on $\epsilon$,
we have $\Hat{P}_{\cS_{k}\cS_{k+1}}\bm{1} \ge \Hat{\delta}\bm{1}$ for
all $k\ge0$. Combining the equations above we get:
\begin{align*}
\Hat{\delta}\,\pi_{\cS_{0}}\bm{1}
\le \pi_{\cS_{0}}\Hat{P}_{\cS_{0}\cS_{1}}\bm{1}
&\le \pi_{\cS_{0}}\Hat{P}_{\cS_{0}\cS_{0}^{c}}\bm{1}\\[5pt]
&= \pi_{\cS^{c}_{0}}\Hat{P}_{\cS^{c}_{0}\cS_{0}}\bm{1}\\[5pt]
&=\pi_{\oS}\Hat{P}_{\oS\cS_{0}}\bm{1}
+\pi_{\Tilde\cS^{*}_{\phantom{0}}}
\Hat{P}_{\Tilde\cS^{*}_{\phantom{0}}\cS_{0}}\bm{1}
\xrightarrow[\epsilon\to0]{}0\,,
\end{align*}
where in the last line we used Lemma~\ref{L4.1} and \eqref{E-pi1}.
Thus, we have shown that $\pi_{\cS_{0}}\to0$ as $\epsilon\to0$. We
proceed by induction. Suppose $\pi_{\cS_{k}}\to0$ as $\epsilon\to0$.
Then,
\begin{equation*}
\Hat{\delta}\,\pi_{\cS_{k+1}}\bm{1} \le
\pi_{\cS_{k+1}}\Hat{P}_{\cS_{k+1}\cS_{k}}\bm{1}
\le
\pi_{\cS_{k}}\bm{1}\xrightarrow[\epsilon\to0]{}0\,,
\end{equation*}
which shows that $\pi_{\cS_{k+1}}\to0$ as $\epsilon\to0$.
By Lemma~\ref{L4.3}, the proof is complete.
\end{proof}

Theorem~\ref{T4.1} combined with
Theorem~\ref{T3.1} provides a complete
characterization of the time average asymptotic behavior of
aspiration learning in strict coordination games.

\subsection{Simulations in Network Formation Games}

In this section, we demonstrate the asymptotic behavior of
aspiration learning in coordination games as described by
Theorems~\ref{T3.1}--\ref{T4.1}.
Consider the network formation game of
Section~\ref{Sec:NFG} which, according to
Claim~\ref{CL2.2}, is a (non-strict)
coordination game. Although
Theorem~\ref{T4.1} was only shown for
strict coordination games, our intention here is to demonstrate that
it also applies to the larger class of (non-strict) coordination
games.

\begin{figure}[htbp]\centering{
\begin{minipage}{0.58\textwidth}\centering
\includegraphics[width=3.in]
{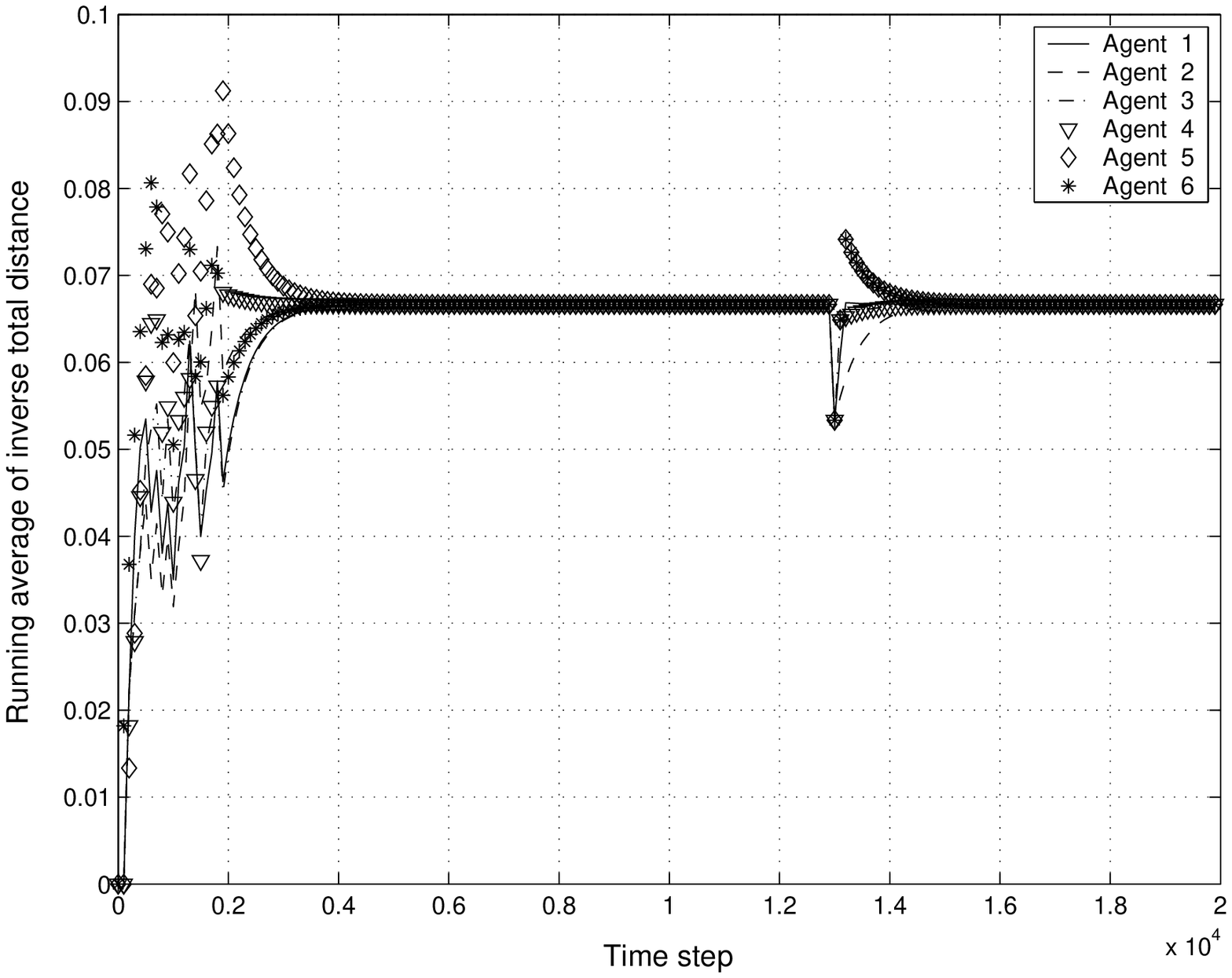}
\end{minipage}
\begin{minipage}{0.40\textwidth}
\includegraphics[width=2.in]
{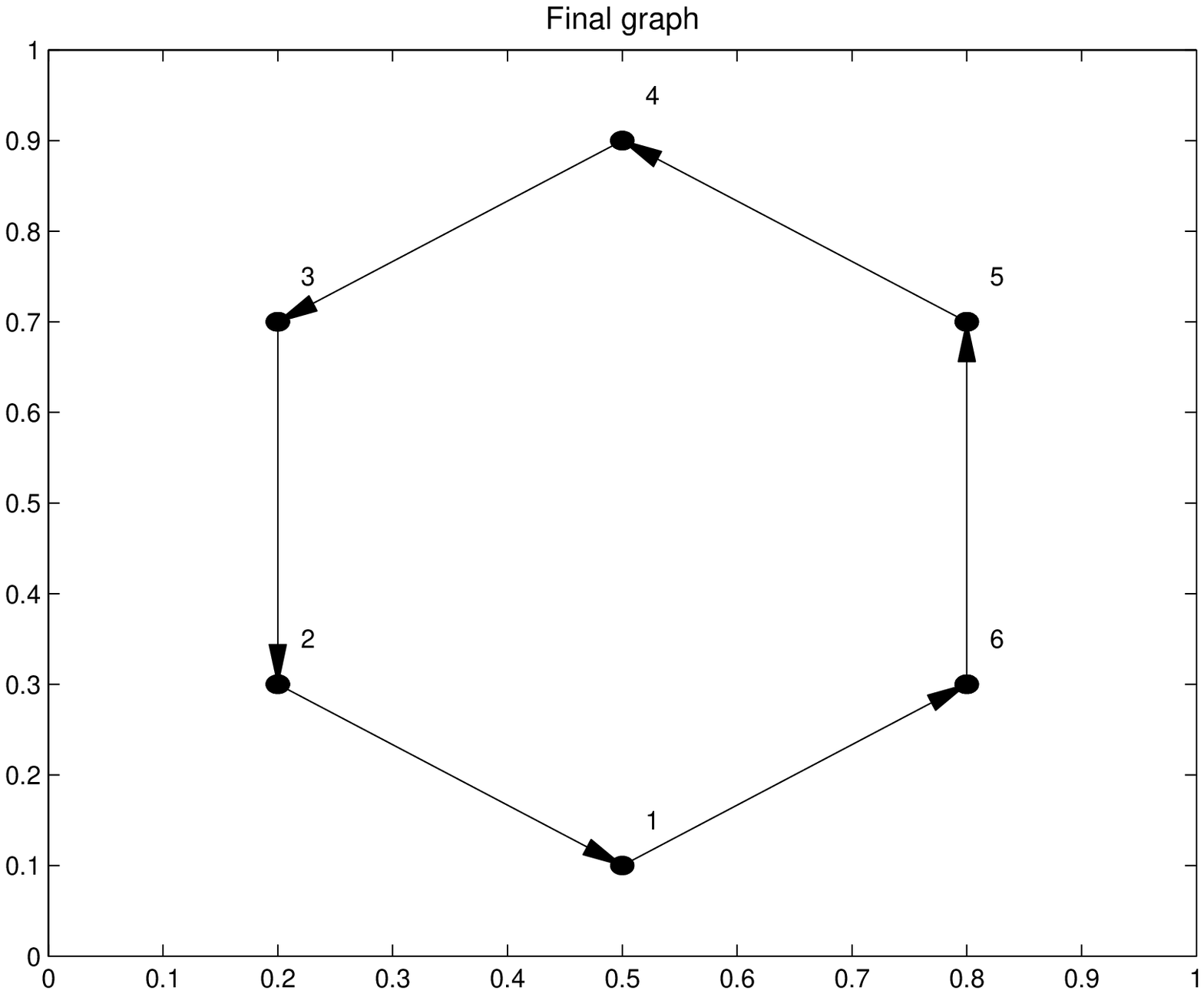}\\[10pt]
\includegraphics[width=2.in]
{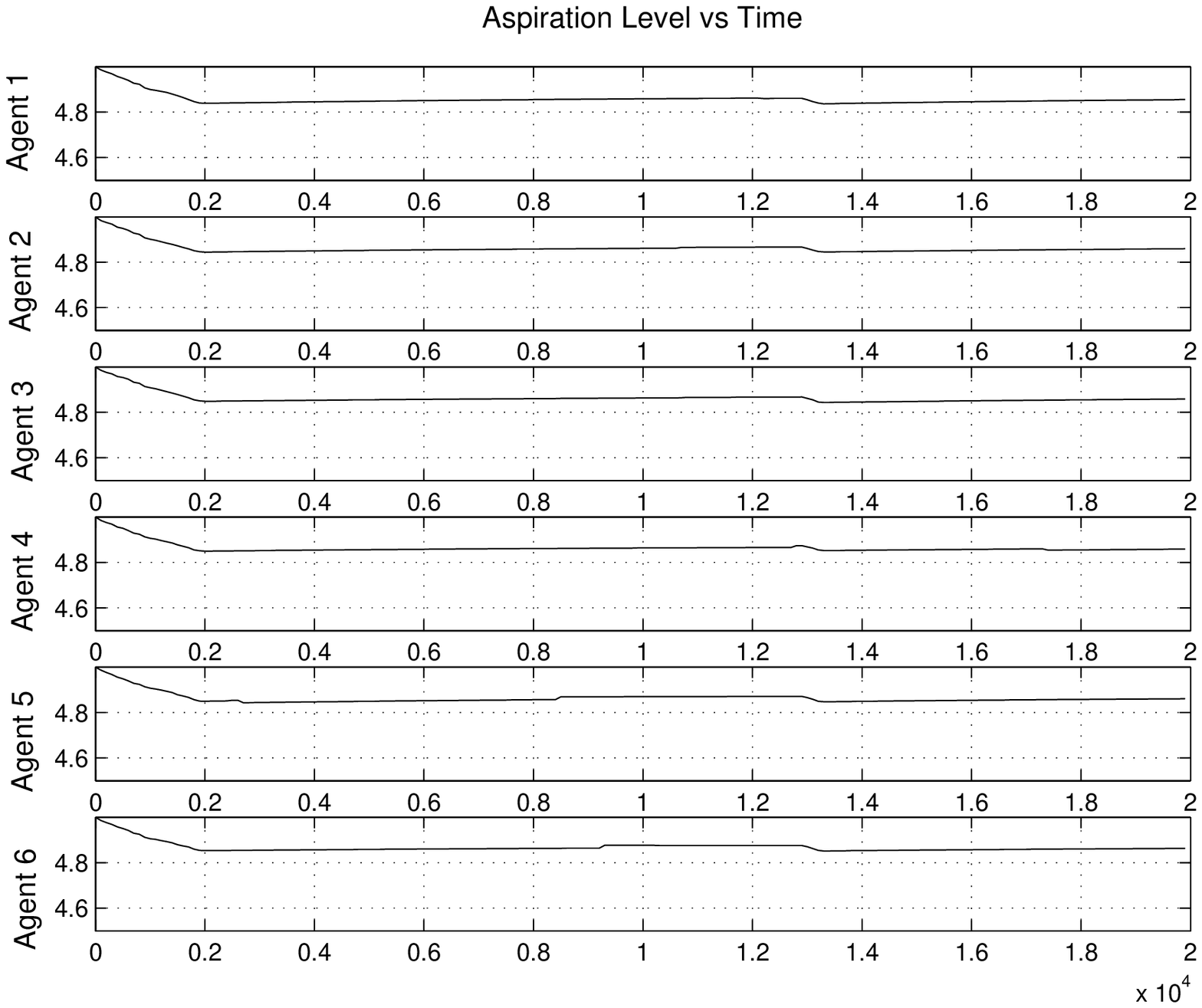}
\end{minipage}}
\caption{A typical response of aspiration learning in the network
formation game.}
\label{fig:NFGsim}
\end{figure}

We consider a set of six nodes deployed on the plane, so that the
neighbors of each node are the two immediate nodes (e.g.,
$\mathcal{N}_{1}=\{2,6\}$). Note that a payoff-dominant set of networks exists and 
corresponds to the wheel networks, where each node has a single link.
We pick the set $\oA$ of desirable 
networks as the set of wheel networks.
Note that the set $\oA$ satisfies hypothesis (H2) of Lemma~\ref{L4.1}. 

In order for the average
behavior to be observed $\lambda$ and $\epsilon$ need to be
sufficiently small. We choose: $h=0.01$, $c=0.2$, $\zeta=0.01$,
$\epsilon=\lambda=10e-4$, and $c=\nicefrac{1}{8}$. In
\figurename~\ref{fig:NFGsim}, we have plotted a typical response of
aspiration learning for this setup, where the final graph and the
aspiration level as a function of time are shown.

To illustrate better the response of aspiration learning, define the
distance from node $j$ to node $i$, denoted
$\mathrm{dist}_{G}(j,i)$, as the minimum number of hops from $j$ to
$i$. We also adopt the convention $\mathrm{dist}_G(i,i)=0$ and
$\mathrm{dist}_G(j,i)=\infty$ if there is no path from $j$ to $i$ in
$G$. The last graph in \figurename~\ref{fig:NFGsim} plots, for each
node, the running average of the inverse total distance from all
other nodes, i.e., $\nicefrac{1}{\sum_{j\in
\cI}\mathrm{dist}_{G}(j,i)}$. This number is zero if the node is
disconnected from any other node.

We observe that the payoff-dominant profile (wheel network) is
played with frequency that approaches one. In fact, the aspiration
level converges to $(n-1)-c=4.875$ and the inverse total distance
converges to $\nicefrac{1}{15}\approx 0.067$, both of which correspond
to the wheel network.

\section{Fairness in Symmetric and Coordination Games}\label{S5}

In several coordination games, establishing convergence (in the way 
defined by Theorem~\ref{T3.1}) to the set of desirable states $\oS$
(as Theorem~\ref{T4.1} showed) 
may not be sufficient. For example,
in common-pool games of Section~\ref{Sec:CPG}, convergence to $\oS$
does not guarantee that all agents get access to the common resource
in a \emph{fair} schedule. In the remainder of this section, we establish
conditions under which \emph{fairness} is also established.

\subsection{A Property of Finite Markov Chains}

In this section, we provide an approach on characterizing explicitly
the invariant distribution of a finite-state, irreducible and
aperiodic Markov chain.
We use a characterization introduced by \cite{FreidlinWentzell84}, which 
has been extensively used for showing stochastic stability arguments for several 
learning dynamics, see, e.g., \cite{Young93,Marden09}.
In particular, for finite Markov chains an invariant distribution can
be expressed as the ratio of sums of products consisting of
transition probabilities. These products can be described
conveniently by means of graphs on the set of states of the chain.

Let $\cS$ be a finite set of states, whose elements are
denoted by $s_k$, $s_{\ell}$, etc., and let a subset $\mathcal{W}$ of
$\cS$.

\begin{definition}
($\mathcal{W}$-graph) A graph consisting of arrows
$s_k\rightarrow{s_{\ell}}$
($s_k\in{\cS\setminus{\mathcal{W}}},s_{\ell}\in\cS,s_{\ell}\neq{s_k}$)
is called a $\mathcal{W}$-graph if it satisfies the following
conditions:
\begin{enumerate}
\item
every point $k\in{\cS\setminus{\mathcal{W}}}$ is the initial point of
exactly one arrow;

\item
there are no closed cycles in the graph; or, equivalently,
for any point $s_k\in{\cS\setminus{\mathcal{W}}}$ there
exists a sequence of arrows leading from it to some point
$s_{\ell}\in\mathcal{W}$.
\end{enumerate}
\end{definition}

We denote by $\mathcal{G}\{\mathcal{W}\}$ the set of
$\mathcal{W}$-graphs; we shall use the letter $g$ to denote graphs.

If $\Hat{P}_{s_ks_{\ell}}$ are nonnegative numbers, where
$s_k,s_{\ell}\in\cS$, define the product
\begin{equation*}
\varpi(g) \df \prod_{(s_k\rightarrow{s_{\ell}})\in{g}}\Hat{P}_{s_ks_{\ell}}\,.
\end{equation*}

The following Lemma holds:

\begin{lemma}[Lemma~6.3.1 in \cite{FreidlinWentzell84}]
\label{L5.1}
Let us consider a Markov chain with a finite set of states
$\cS$ and transition probabilities $\{\Hat{P}_{s_ks_{\ell}}\}$
and assume that every state can be reached from any other state in a
finite number of steps. Then the stationary distribution of the
chain is $\pi = [\pi_{s}]$, where
\begin{equation*}
\pi_{s} = \frac{R_{s}}{\sum_{s_{i}\in\cS}R_{s_{i}}}\,,
\quad s\in\cS
\end{equation*}
and $R_{s} \df \sum_{g\in{\mathcal{G}}\{s\}}\varpi(g)$.
\end{lemma}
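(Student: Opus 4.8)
The plan is to prove this classical Markov chain tree theorem in a self-contained combinatorial way rather than leaning on the matrix-tree theorem (though that is a viable alternative, sketched at the end). First I would reinterpret the objects: an $s$-graph in $\mathcal{G}\{s\}$ is precisely a spanning tree on $\cS$ in which every state other than $s$ is the initial point of exactly one arrow and all arrows eventually lead into $s$ --- i.e.\ a directed spanning tree rooted at $s$ --- and $R_{s}=\sum_{g\in\mathcal{G}\{s\}}\varpi(g)$ is the total weight of such trees. It then suffices to prove three things: (a) the row vector $R=(R_{s})_{s\in\cS}$ satisfies $R=R\Hat{P}$; (b) $R_{s}>0$ for every $s$; and (c) $\Hat{P}$ has a unique stationary distribution. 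Given these, $\pi_{s}\df R_{s}/\sum_{i}R_{s_{i}}$ is a well-defined probability vector fixed by $\Hat{P}$, hence the stationary distribution, which is the assertion.

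The heart of the argument is step~(a). Using stochasticity, $R=R\Hat{P}$ is equivalent, coordinate by coordinate, to the loop-free balance identity
\[
\sum_{k\ne\ell}R_{s_{k}}\,\Hat{P}_{s_{k}s_{\ell}}
=R_{s_{\ell}}\sum_{m\ne\ell}\Hat{P}_{s_{\ell}s_{m}}\,,
\]
and I would establish it by exhibiting the same combinatorial sum on both sides. The key observation is that adjoining one arrow to a rooted tree produces a \emph{functional graph} --- every vertex has out-degree exactly one --- that contains exactly one cycle. Concretely, a typical term $\varpi(g)\Hat{P}_{s_{k}s_{\ell}}$ on the left comes from a tree $g$ rooted at $s_{k}$ to which we add the arrow $s_{k}\to s_{\ell}$; the result $h$ has a unique cycle, necessarily passing through $s_{\ell}$, whose arrow into $s_{\ell}$ is exactly $s_{k}\to s_{\ell}$. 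Conversely, any functional graph $h$ whose unique cycle passes through $s_{\ell}$ is recovered exactly once this way, by deleting the cycle-arrow entering $s_{\ell}$ (which leaves a tree rooted at that arrow's tail). Thus the left side equals $\sum_{h}\varpi(h)$ over all functional graphs with a unique cycle through $s_{\ell}$. The right side admits the symmetric decomposition: $\varpi(g')\Hat{P}_{s_{\ell}s_{m}}$ comes from a tree $g'$ rooted at $s_{\ell}$ plus the arrow $s_{\ell}\to s_{m}$, and every such $h$ is recovered exactly once by deleting the cycle-arrow \emph{leaving} $s_{\ell}$. Hence both sides equal the same weighted enumeration, proving the identity.

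For step~(b), irreducibility (``every state can be reached from any other in finitely many steps'') guarantees, for each $s$, at least one rooted tree built from strictly positive transition probabilities, so $R_{s}>0$; and step~(c) is the standard fact that a finite irreducible stochastic matrix has a one-dimensional space of invariant vectors. I expect the main obstacle to be making the bijection in step~(a) fully rigorous --- verifying that adjoining an arrow to a rooted tree yields precisely one cycle, that deletion inverts adjunction, and in particular handling the bookkeeping that isolates the diagonal terms $\Hat{P}_{s_{\ell}s_{\ell}}$ (which never appear in any graph) into the factor $\sum_{m\ne\ell}\Hat{P}_{s_{\ell}s_{m}}$. An entirely algebraic alternative would set $L\df I-\Hat{P}$, note that $\pi$ is stationary iff $\pi L=0$, and invoke the weighted directed matrix-tree theorem to identify the cofactors of $L$ with the $R_{s}$; this trades the combinatorial bijection for a determinantal identity but reaches the same conclusion.
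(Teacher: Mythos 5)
Your proposal is correct, but there is nothing in the paper to compare it against: the paper does not prove this lemma at all --- it is imported verbatim as Lemma~6.3.1 of \cite{FreidlinWentzell84}, and the authors use it as a black box in Section~5. What you have written is a self-contained proof of the classical Markov chain tree theorem, and it is the standard one: reinterpret $\mathcal{G}\{s\}$ as the set of spanning trees directed into the root $s$, and verify the balance identity
\[
\sum_{k\ne\ell}R_{s_{k}}\,\Hat{P}_{s_{k}s_{\ell}}
=R_{s_{\ell}}\sum_{m\ne\ell}\Hat{P}_{s_{\ell}s_{m}}
\]
by showing that both sides enumerate, with weights $\varpi(\cdot)$, the same family of objects: weakly connected graphs on $\cS$ in which every vertex has out-degree one and whose unique cycle passes through $s_{\ell}$ --- the left side by deleting the cycle-arrow \emph{entering} $s_{\ell}$, the right side by deleting the (necessarily cyclic) arrow \emph{leaving} $s_{\ell}$. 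Your bijection is sound, including the two points that need care: adjoining one arrow to a rooted tree creates exactly one cycle because the tree is acyclic and weakly connected, and deleting the appropriate cycle-arrow leaves a graph with a single out-degree-zero vertex and no cycles, hence a tree rooted there. Combined with row-stochasticity (which isolates the diagonal terms exactly as you say), this gives $R=R\Hat{P}$; irreducibility gives $R_{s}>0$ (the one detail worth writing out is the greedy construction: from each vertex outside the current tree, follow a positive-probability path to $s$ until it first hits the tree, and graft that initial segment) and the uniqueness of the invariant distribution, so $\pi_{s}=R_{s}\big/\sum_{i}R_{s_{i}}$ is the stationary distribution. The trade-off is the expected one: the citation keeps the paper short, while your argument (or the matrix-tree alternative you sketch) makes the result self-contained and makes explicit where irreducibility enters --- both in the positivity of the $R_{s}$ and in the uniqueness claim implicit in the phrase ``the stationary distribution.''
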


\subsection{Fairness in Symmetric Games}

In this section, using Theorem~\ref{T3.1} and
Lemma~\ref{L5.1} we establish fairness in
\emph{symmetric} games, defined as follows:

\begin{definition}[Symmetric game]
A game $\mathscr{G}$ characterized by the action profile set $\cA$ is
symmetric if, for any two agents $i,j\in\cI$ and any action
profile $\alpha\in\cA$, the following hold: a) if
$\alpha_{i}=\alpha_{j}$, then $u_{i}(\alpha)=u_{j}(\alpha)$, and b) if
$\alpha_{i}\neq\alpha_{j}$, then there exists an action profile
$\alpha'\in\cA\setminus\{\alpha\}$, such that the following
two conditions are satisfied:
\begin{enumerate}
\item{$\alpha_{i}'=\alpha_{j}$, $\alpha_{i}=\alpha_{j}'$ and
$\alpha_k'=\alpha_k$ for all $k\neq{i,j}$;}
\item{$u_{i}(\alpha')=u_{j}(\alpha)$, $u_{i}(\alpha)=u_{j}(\alpha')$ and
$u_k(\alpha')=u_k(\alpha)$ for any $k\neq{i,j}$.}
\end{enumerate}
\end{definition}

Define the
following equivalence relation between states in $\cS$:
\begin{definition}[State equivalence]\label{D5.3}
For any two pure-strategy states $s,s'\in\cS$ such that
$s\neq s'$, let $\alpha$ and $\alpha'$ denote the corresponding action profiles.
We write $s\sim s'$ if there exist $i,j\in\cI$,
$i\neq{j}$, such that the following two conditions are satisfied:
\begin{enumerate}
\item{$\alpha_{i}'=\alpha_{j}$, $\alpha_{i}=\alpha_{j}'$ and
$\alpha_k'=\alpha_k$ for all $k\neq{i,j}$;}
\item{$u_{i}(\alpha')=u_{j}(\alpha)$, $u_{i}(\alpha)=u_{j}(\alpha')$ and
$u_k(\alpha')=u_k(\alpha)$ for any $k\neq{i,j}$.}
\end{enumerate}
\end{definition}
Since there is a one-to-one correspondence between $\cS$ and 
$\cA$, we also say that two action
profiles $\alpha$ and $\alpha'$ are equivalent, if the conditions of
Definition~\ref{D5.3} are satisfied.

\begin{lemma}\label{L5.2}
For any symmetric game and for
any two pure-strategy states $s,s'\in\cS$ such that
$s\sim{s'}$, $\pi_{s}=\pi_{s'}$.
\end{lemma}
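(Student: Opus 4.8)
The plan is to reduce the claim to the Freidlin--Wentzell representation of Lemma~\ref{L5.1} and to exploit a permutation symmetry of the chain $\Hat{P}$ inherited from the symmetry of the game. By Definition~\ref{D5.3}, $s\sim s'$ means there are players $i\neq j$ such that $s'$ is obtained from $s$ by interchanging the action and aspiration coordinates of $i$ and $j$. Let $\sigma$ be the involution of $\cI$ that swaps $i$ and $j$ and fixes every other player, let $\Sigma_{\cA}\colon\cA\to\cA$ be the induced action permutation $(\Sigma_{\cA}\alpha)_k=\alpha_{\sigma(k)}$, and let $\Sigma\colon\cX\to\cX$ be given by $\Sigma(\alpha,\rho)\df(\Sigma_{\cA}\alpha,\rho\circ\sigma)$ with $(\rho\circ\sigma)_k=\rho_{\sigma(k)}$. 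The symmetric-game hypothesis gives the key identity $u_k(\Sigma_{\cA}\alpha)=u_{\sigma(k)}(\alpha)$ for every $k\in\cI$ and every $\alpha\in\cA$: this is condition~(b) of a symmetric game when $\alpha_i\neq\alpha_j$, and condition~(a) when $\alpha_i=\alpha_j$, in which case $\Sigma_{\cA}\alpha=\alpha$. In particular $\Sigma$ is an isometry of $\cX$ that maps $\cS$ bijectively onto itself, and $\Sigma s=s'$.

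First I would establish that $\Hat{P}$ is equivariant under $\Sigma$, i.e.\ $\Hat{P}_{\Sigma s_k\,\Sigma s_\ell}=\Hat{P}_{s_k s_\ell}$ for all $s_k,s_\ell\in\cS$. Recall from Section~\ref{S3} that $\Hat{P}_{s_k s_\ell}=\lim_{t\to\infty}QP^{t}(s_k,N_{\varepsilon}(s_\ell))$, so it suffices to show that the kernels $P$ and $Q$ commute with $\Sigma$, namely $P(\Sigma x,\Sigma A)=P(x,A)$ and $Q(\Sigma x,\Sigma A)=Q(x,A)$ for all $x\in\cX$, $A\in\Bor(\cX)$. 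This is where the bulk of the work lies, and it is the step I expect to be the main obstacle: one must verify the aspiration and action updates of Table~\ref{Tb:AL} directly. Because every agent uses the same saturation map, the same $\phi$, and the same tremble law for $r_k(t)$, and because the update of coordinate $k$ depends on the state only through $u_k(\alpha(t))$ and $\rho_k(t)$, relabelling players by $\sigma$ together with the identity $u_k(\Sigma_{\cA}\alpha)=u_{\sigma(k)}(\alpha)$ shows that coordinate $k$ of the $\Sigma$-transformed process evolves exactly as coordinate $\sigma(k)$ of the original; the single-tremble kernel $Q$ is handled the same way, a tremble of player $k$ corresponding to a tremble of player $\sigma(k)$. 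A change-of-variables argument then propagates the symmetry through compositions, giving $QP^{t}(\Sigma x,\Sigma A)=QP^{t}(x,A)$, and since $\Sigma$ is an isometry we have $\Sigma N_{\varepsilon}(s_\ell)=N_{\varepsilon}(\Sigma s_\ell)$; letting $t\to\infty$ yields the claimed equivariance.

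Finally I would combine this with Lemma~\ref{L5.1}. By Proposition~\ref{P3.4} the matrix $\Hat{P}$ is irreducible on the finite set $\cS$, so Lemma~\ref{L5.1} applies and $\pi_s=R_s\big/\sum_{s_i\in\cS}R_{s_i}$ with $R_s=\sum_{g\in\mathcal{G}\{s\}}\varpi(g)$. The map $g\mapsto\Sigma g$, which sends each arrow $s_k\to s_\ell$ to $\Sigma s_k\to\Sigma s_\ell$, is a bijection from $\mathcal{G}\{s\}$ onto $\mathcal{G}\{\Sigma s\}=\mathcal{G}\{s'\}$, since the two defining conditions of a $\mathcal{W}$-graph are preserved when the vertices are relabelled by the bijection $\Sigma$. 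Moreover, by the equivariance just proved, $\varpi(\Sigma g)=\prod_{(s_k\to s_\ell)\in g}\Hat{P}_{\Sigma s_k\,\Sigma s_\ell}=\prod_{(s_k\to s_\ell)\in g}\Hat{P}_{s_k s_\ell}=\varpi(g)$. Hence $R_{s'}=R_{\Sigma s}=\sum_{g\in\mathcal{G}\{s\}}\varpi(\Sigma g)=R_s$, and since the normalizing denominator in Lemma~\ref{L5.1} is common to all states, $\pi_s=\pi_{s'}$. Once the kernel-level symmetry of the second paragraph is in hand, this graph-theoretic conclusion is routine.
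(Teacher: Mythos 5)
Your proof is correct and follows essentially the same route as the paper's: both invoke the Freidlin--Wentzell representation of Lemma~\ref{L5.1} and construct a product-preserving bijection between $\mathcal{G}\{s\}$ and $\mathcal{G}\{s'\}$ induced by the swap of players $i$ and $j$, concluding $R_s=R_{s'}$ and hence $\pi_s=\pi_{s'}$. The only difference is one of rigor, not of method: you prove the kernel-level equivariance of $P$ and $Q$ under $\Sigma$ explicitly (which is where the identity $u_k(\Sigma_{\cA}\alpha)=u_{\sigma(k)}(\alpha)$ enters), whereas the paper simply asserts that symmetric sequences of action profiles have equal probabilities, so that $\Hat{P}$ is preserved under the relabeling.
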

\begin{proof}
Let us consider any two pure strategy states $s,s'\in\cS$
such that $s\sim{s'}$.
Let also consider any $\{s\}$-graph $g$, i.e., $g\in\mathcal{G}\{s\}$.
Such a graph can be identified as a collection of paths, i.e., for some
$M\geq{1}$, we have
$g =\bigcup_{m=1}^{M}g_{m}\,,$
where
\begin{equation*}
g_{m} = \bigcup_{\ell=1}^{L(m)-1}
\left(s_{\kappa_{m}(\ell)}\rightarrow s_{\kappa_{m}(\ell+1)}\right)
\end{equation*}
for some $L(m)\geq{1}$.
In the above expression, the function $\kappa_{m}$ provides an enumeration
of the states that belong to the path $g_{m}$.
Note that due to the definition of $\mathcal{G}\{s\}$-graphs, we should have that
$s_{\kappa_{m}(L(m))}=s$ for all $m=1,\dotsc,M$.
Moreover, if $M>1$, we should also have
$$\bigcap_{m=1}^{M}
\left\{s_{\kappa_{m}(1)},\dotsc,s_{\kappa_{m}(L(m)-1)}\right\}=\varnothing\,,$$
i.e., the collection of paths $\{g_{m}\}$ do not cross each other,
except at node $s$.

Let us consider any other state $s'\in\cS$ such that $s'\sim{s}$.
Since the game is symmetric, for any graph $g\in\mathcal{G}\{s\}$,
there exists a unique graph $g'\in\mathcal{G}\{s'\}$ which satisfies
$g' = \bigcup_{m=1}^{M}g_{m}'\,,$
where
\begin{equation*}
g_{m}' = \bigcup_{\ell=1}^{L(m)-1}\bigl(s_{\kappa_{m}(\ell)}'
\to s_{\kappa_{m}(\ell+1)}'\bigr)
\end{equation*}
and $s_{\kappa_{m}(\ell)}\sim s_{\kappa_{m}(\ell)}'$, $\ell=1,\dotsc,L(m)$, for
all $m\in\{1,\dotsc,M\}$.

The transition probability between any two states is a sum of
probabilities of sequences of action profiles. Since the game is symmetric,
for any such sequence of action profiles which leads, for instance, from
$s_{\kappa_{m}(\ell)}$ to $s_{\kappa_{m}(\ell+1)}$, there exists an
equivalent sequence of action profiles which leads from
$s_{\kappa_{m}(\ell)}'$ to $s_{\kappa_{m}(\ell+1)}'$. Therefore, we
should have that:
\begin{equation*}
\Hat{P}_{s_{\kappa_{m}(\ell)}s_{\kappa_{m}(\ell+1)}} =
\Hat{P}_{s_{\kappa_{m}(\ell)}'s_{\kappa_{m}(\ell+1)}'}
\end{equation*}
for any $m=1,\dotsc,M$, and hence, $\varpi(g')=\varpi(g)\,.$ In
other words, there exists an isomorphism between the graphs in the
sets $\mathcal{G}\{s\}$ and $\mathcal{G}\{s'\}$, such that any two
isomorphic graphs have the same transition probability. Thus, we
have $\pi_{s} = \pi_{s}'$ for any two states $s,s'$ such that $s\sim s'$.
\end{proof}

Lemma~\ref{L5.2} can be used to provide a more explicit
characterization of the invariant distribution $\pi$ in several
classes of coordination games which are also symmetric, e.g.,
common-pool games.

\subsection{Fairness in Common-Pool Games}

First, recall that in common-pool games we define the set of
``desirable'' or ``successful'' action profiles $\oA$ as in
\eqref{E-CPG}. To characterize more explicitly the invariant
distribution $\pi$, we define  the subset of pure-strategy states
$\oS_{i}$ that correspond to ``successful'' states for agent $i$ by
$$\oS_{i} \df \{s\in\cS: \alpha_{i} > \alpha_{j},~\forall j\ne i\}\,.$$
In other words, $\oS_{i}$ corresponds to the set of pure-strategy
states in which the action of agent $i$ is strictly larger than the
action of any other agent $j\neq{i}$.
We also define $\oS\df \bigcup_{i\in\cI}\oS_{i}$.

Note that the equivalence relation $\sim$ defines an isomorphism
among the states of any two sets $\oS_{i}$ and
$\oS_{j}$ for any $i\neq{j}$. This is due to the
fact that for any state $s_{i}\in\oS_{i}$, there
exists a unique state $s_{j}\in\oS_{j}$ such that
$s_{i}\sim s_{j}$.

\begin{lemma}\label{L5.3}
For any common-pool game,
$\pi_{\oS_{1}}=\dotsb=\pi_{\oS_{n}}.$
\end{lemma}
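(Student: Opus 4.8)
The plan is to show that $\pi_{\oS_1}=\dotsb=\pi_{\oS_n}$ by establishing a bijection between the successful sets $\oS_i$ and $\oS_j$ that preserves invariant-distribution weight, and then summing. The key observation, already noted just before the statement, is that the equivalence relation $\sim$ of Definition~\ref{D5.3} induces an isomorphism between any two sets $\oS_i$ and $\oS_j$: given $s_i=(\alpha,\rho)\in\oS_i$, the state $s_j$ obtained by swapping the actions (and payoffs) of agents $i$ and $j$ lies in $\oS_j$ and satisfies $s_i\sim s_j$, and conversely. First I would verify carefully that this swap is well-defined and bijective on the successful sets, using the symmetry conditions in Definition~\ref{D5.3}~(1)--(2): swapping $\alpha_i$ and $\alpha_j$ sends the unique maximizer from coordinate $i$ to coordinate $j$, so the image indeed lies in $\oS_j$, and applying the same swap again recovers $s_i$.

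The central tool is Lemma~\ref{L5.2}, which asserts that whenever $s\sim s'$ we have $\pi_s=\pi_{s'}$. Applying this to each matched pair $s_i\in\oS_i$ and its image $s_j\in\oS_j$ under the swap, I would conclude $\pi_{s_i}=\pi_{s_j}$ pointwise across the bijection. Summing over all $s\in\oS_i$ then gives
\begin{equation*}
\pi_{\oS_i}=\sum_{s\in\oS_i}\pi_s=\sum_{s'\in\oS_j}\pi_{s'}=\pi_{\oS_j}\,,
\end{equation*}
where the middle equality uses that the swap is a weight-preserving bijection from $\oS_i$ onto $\oS_j$. Since $i$ and $j$ were arbitrary, this yields the claimed chain of equalities.

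The step I expect to require the most care is confirming that common-pool games genuinely fall under the hypotheses of Lemma~\ref{L5.2}, i.e.\ that they are \emph{symmetric} in the precise sense of the symmetric-game definition. The payoff structure of a common-pool game depends on each agent's action only through its own index $p_j$ and whether that agent is the strict maximizer, and this structure is manifestly invariant under permuting agents; I would check condition~(a) (equal actions yield equal payoffs) and condition~(b) (the existence of the transposed profile $\alpha'$ with the matched payoffs) directly against the piecewise definition of $u_i$. Once symmetry is established, the rest is the bookkeeping of the bijection described above, which is routine. The only subtlety in that bookkeeping is ensuring the bijection respects the \emph{strict} maximizer condition defining $\oS_i$, so that no successful state is left unmatched and the sums over $\oS_i$ and $\oS_j$ range over exactly corresponding index sets.
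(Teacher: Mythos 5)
Your proposal is correct and follows essentially the same route as the paper: both use the fact that the equivalence relation $\sim$ of Definition~\ref{D5.3} induces a weight-preserving bijection between $\oS_{i}$ and $\oS_{j}$, apply Lemma~\ref{L5.2} pointwise to each matched pair, and sum. The extra checks you flag (that common-pool games are symmetric and that the swap respects the strict-maximizer condition) are details the paper's proof treats as immediate, so your version is simply a more explicit rendering of the same argument.
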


\begin{proof}
As already mentioned, for any $i,j\in\cI$ such that $i\neq{j}$ and
for any state $s_{i}\in\oS_{i}$, there exists a unique state
$s_{j}\in\oS_{j}$ such that $s_{j}\sim s_{i}$. Therefore, the sets
$\oS_{i}$ and $\oS_{j}$ are isomorphic with respect to the
equivalence relation $\sim$. Since a common-pool game is symmetric,
from Lemma~\ref{L5.2}, we conclude that
$\pi_{\oS_{1}}=\dotsb=\pi_{\oS_{n}}$.
\end{proof}

\begin{theorem}\label{T5.1}
Let $\mathscr{G}$ be a common-pool game which satisfies hypothesis (H1) of
Lemma~\ref{L4.1}. There exists a constant $C_0>0$ such that for any
$\zeta<C_0$, $\pi_{\oS_{i}}
\xrightarrow[\epsilon\downarrow{0}]{}\frac{1}{n},$ for all
$i\in\cI\,.$
\end{theorem}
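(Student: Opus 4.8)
The plan is to combine the efficiency result of Theorem~\ref{T4.1} with the symmetry-based equipartition of Lemma~\ref{L5.3}. First I would verify that the hypotheses of Theorem~\ref{T4.1} are met: a common-pool game is a strict coordination game (shown earlier in the excerpt), and we are assuming hypothesis (H1) of Lemma~\ref{L4.1} together with $\zeta<C_0$. Therefore Theorem~\ref{T4.1} applies and gives $\pi_{s}\to0$ as $\epsilon\downarrow0$ for every $s\notin\oS$. Since $\{\pi_s\}_{s\in\cS}$ is a probability vector, this forces $\sum_{s\in\oS}\pi_{s}\to1$ as $\epsilon\downarrow0$. Because $\oS=\bigcup_{i\in\cI}\oS_{i}$ and the sets $\oS_i$ are pairwise disjoint (in a successful profile exactly one agent strictly dominates), we have $\sum_{i\in\cI}\pi_{\oS_{i}}\to1$.

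Second, I would invoke Lemma~\ref{L5.3}, which asserts $\pi_{\oS_{1}}=\dotsb=\pi_{\oS_{n}}$ for any common-pool game. Combining the equipartition with the limit $\sum_{i\in\cI}\pi_{\oS_{i}}\to1$ yields $n\,\pi_{\oS_{i}}\to1$, hence $\pi_{\oS_{i}}\to\nicefrac{1}{n}$ for each $i\in\cI$, which is exactly the claim.

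The two ingredients are essentially already in hand, so the work is in checking their applicability rather than proving new estimates. The main subtlety I anticipate is ensuring the logical chain is tight: Lemma~\ref{L5.3} supplies the exact equality $\pi_{\oS_1}=\dotsb=\pi_{\oS_n}$ for each fixed $\epsilon>0$ (and each $\zeta$), independently of taking the limit, while Theorem~\ref{T4.1} supplies the mass concentration only in the limit $\epsilon\downarrow0$. Thus the equipartition holds along the entire family $\{\pi(\epsilon)\}$, and it is the vanishing of $\pi_s$ for $s\notin\oS$ that must be passed to the limit; one should confirm that the constant $C_0$ of Theorem~\ref{T4.1} is the same $C_0$ appearing in the statement here, so that a single choice $\zeta<C_0$ simultaneously guarantees both the equipartition (which needs only symmetry) and the concentration on $\oS$. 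No further obstacle arises once these are aligned, and the conclusion follows directly.
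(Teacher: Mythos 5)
Your proposal is correct and follows essentially the same route as the paper: invoke Theorem~\ref{T4.1} (applicable since any common-pool game is a strict coordination game and (H1) plus $\zeta<C_0$ are assumed) to get $\pi_{\oS}=\sum_{i}\pi_{\oS_i}\to1$ as $\epsilon\downarrow0$, then apply the symmetry-based equipartition of Lemma~\ref{L5.3} to conclude $\pi_{\oS_i}\to\nicefrac{1}{n}$. Your remark about aligning the constant $C_0$ is handled in the paper by taking $C_0=\frac{1}{2}(\Delta_{\min}-\delta^*)$, the explicit constant from the (H1) case of Lemma~\ref{L4.1}.
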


\begin{proof}
First, recognize that the sets $\{\oS_{i}\}$ are mutually disjoint, and
$\bigcup_{i=1}^{n}\oS_{i} =\oS\,.$
Then, by Theorem~\ref{T4.1}, and for any
$\zeta<\frac{1}{2}(\Delta_{\min}-\delta^*)$, we have
$\pi_{\oS} = \sum_{i=1}^{n}\pi_{\oS_{i}} \rightarrow 1$
as $\epsilon\rightarrow{0}\,.$
Lastly, by Lemma~\ref{L5.3}, the conclusion follows.
\end{proof}

In other words, we have shown that the invariant distribution $\pi$
puts equal weight on either agent ``succeeding,'' which
establishes a form of \emph{fairness} over time.
Moreover, it puts zero
weight on states outside $\oS$ (i.e., states
which correspond to collisions) as $\epsilon\to{0}$.


\subsection{Simulations in Common-Pool Games}

Theorems~\ref{T3.1} and \ref{T5.1} provide a characterization of the
asymptotic behavior of aspiration learning in common-pool games as
$\lambda$ and $\epsilon$ approach zero. In fact, according to
Remark~\ref{Rm:Ergodicity}, the expected
percentage of time that the aspiration learning spends in any one of
the pure strategy sets $\oS_{i}$ should be equal as the perturbation
probability $\lambda\rightarrow{0}$ and $t\rightarrow\infty$ (i.e.,
fairness is established).
Moreover, the expected percentage of ``failures'' (i.e., states outside $\oS$)
approaches zero as $t\rightarrow\infty$.

We consider the following setup for aspiration learning:
$\lambda=0.001,$ $\epsilon=0.001,$ $h=0.01,$ $c=0.05,$ and $\zeta=0.05\,.$
Also, we consider a common-pool game of 2 players
and 4 actions, where $c_{0}=0, c_{1}=0.1, c_{2}=0.2$, $c_3=0.3$ and
$\tau_{0}=\tau_{1}=\tau_{2}=\tau_3=0.8$.
Note that the maximum
payoff difference within the same action profile is $\delta^*=0.1$,
and the minimum payoff difference between 
$\oA$ and $\mathcal{A}\setminus \oA$ is
$\Delta_{\min}=0.6$.
Therefore, the hypotheses of Theorem~\ref{T5.1}
are clearly satisfied since $\delta^*<\Delta_{\min}$ and $\zeta <
\frac{1}{2}(\Delta_{\min}-\delta^*)$.
Under this setup, \figurename~\ref{fig:CPGsim} demonstrates the response of
aspiration learning.
We observe, as Theorem~\ref{T5.1} predicts, that the
frequency with which either agent succeeds
approaches $\nicefrac{1}{2}$ as time increases.
Also, the frequency of collisions (i.e., the joint actions in which neither agent
succeeds) approaches zero as time increases.

\begin{figure}[htbp]
\begin{minipage}{0.58\textwidth}
\centering
 \includegraphics[width=3.in]{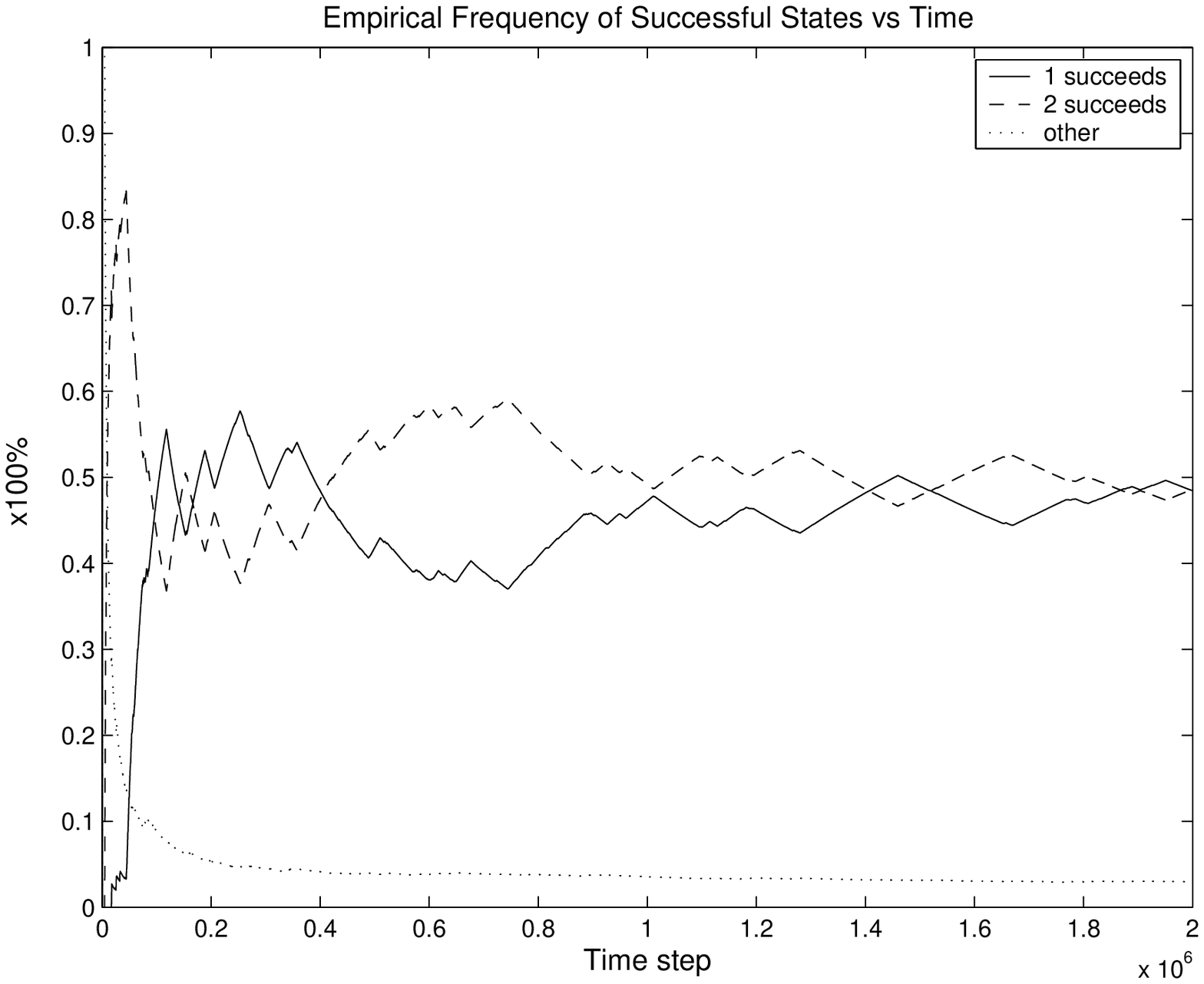}
\end{minipage}
\begin{minipage}{0.40\textwidth}
\centering
\includegraphics[width=2.in]{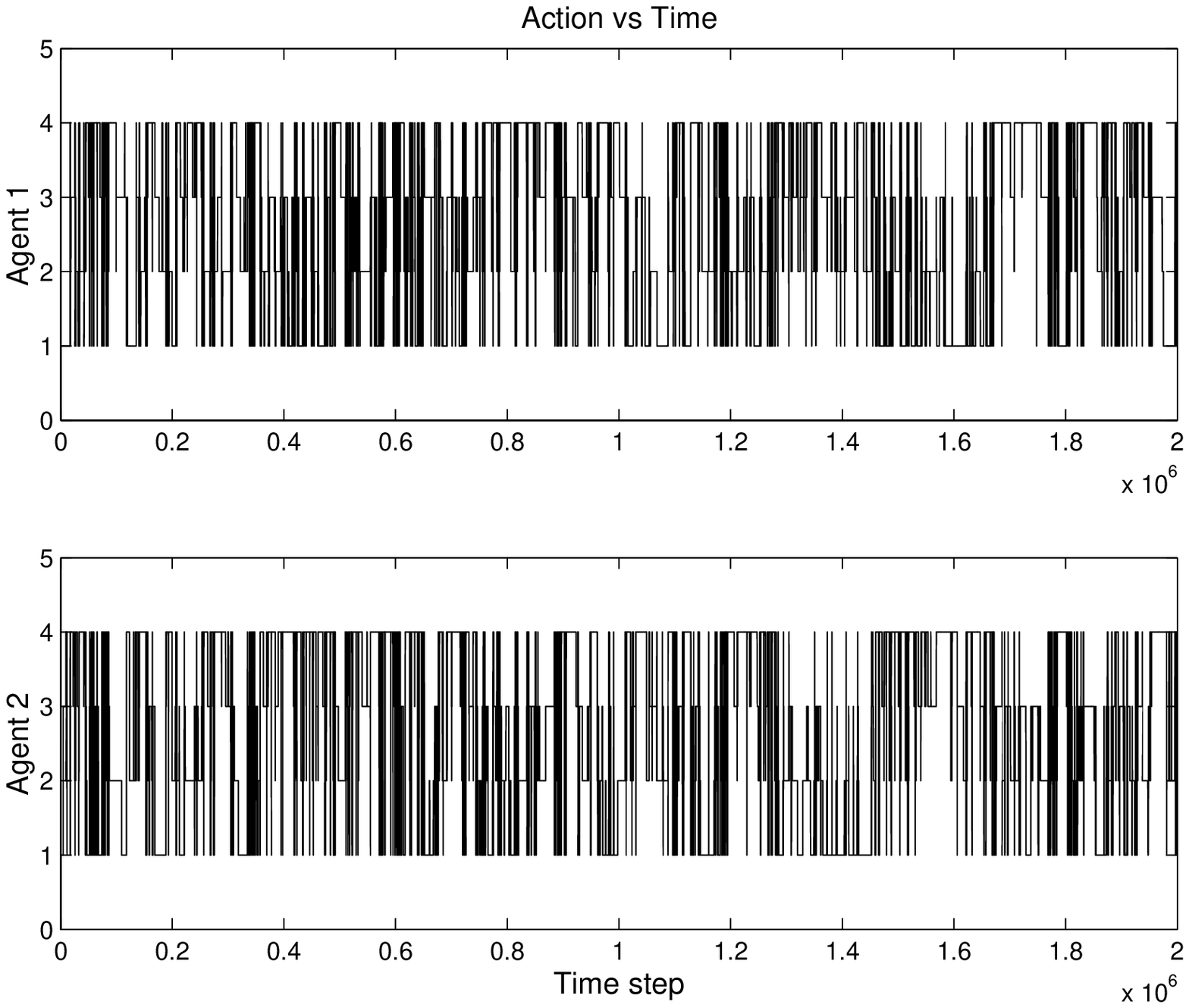}\\[10pt]
\includegraphics[width=2.in]
{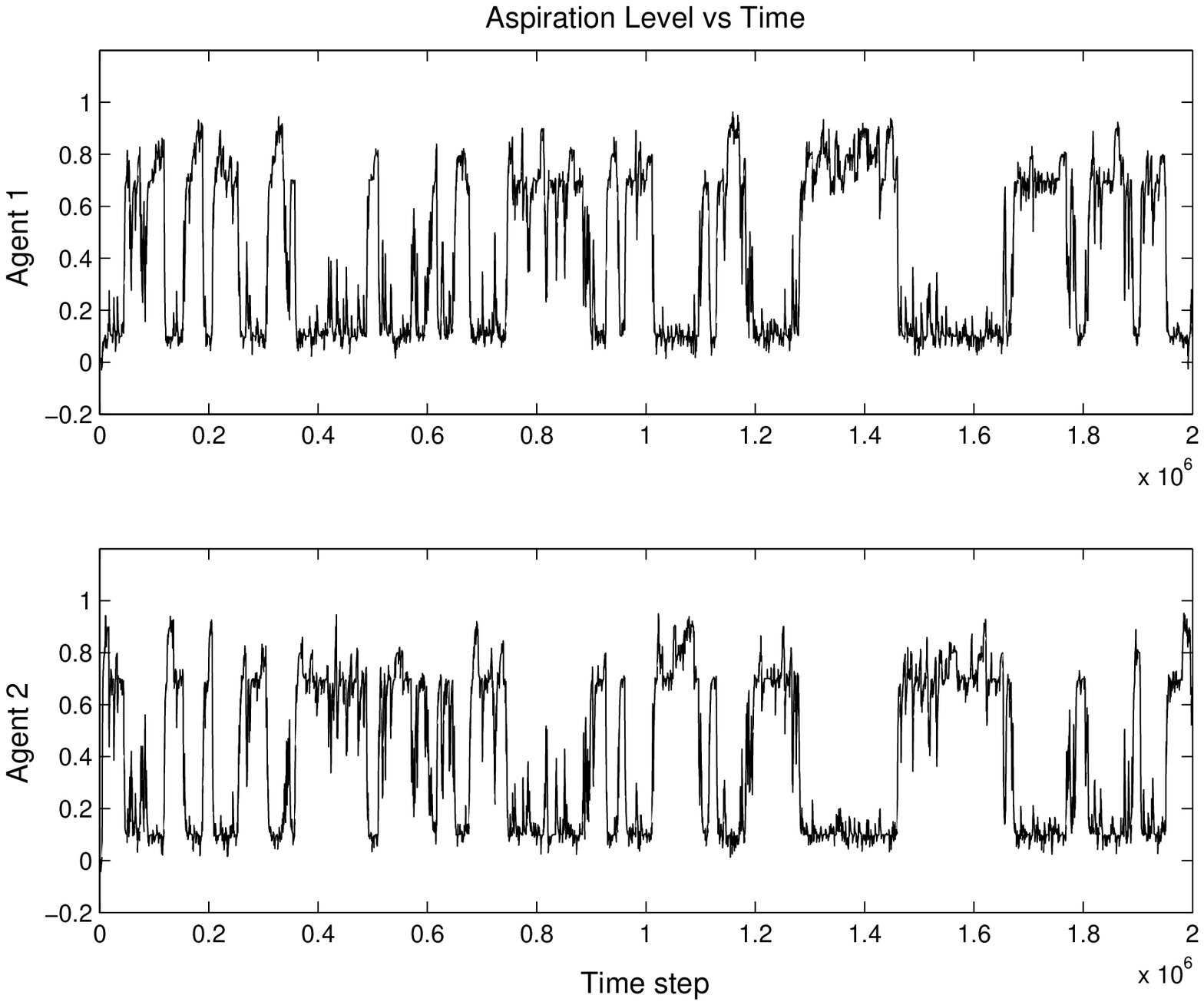}
\end{minipage}
\caption{A typical response of aspiration learning in a common-pool
game with 2 players and 4 actions.} \label{fig:CPGsim}
\end{figure}

\section{Conclusions}\label{S6}

We introduced an aspiration learning algorithm and analyzed its
asymptotic behavior in games of multiple players and actions. The
main contribution of this analysis was the establishment of a
relation between the time average behavior of the induced
infinite-state Markov chain with the invariant distribution of a
finite-state Markov chain. The establishment of this relation
allowed for characterizing the asymptotic properties of aspiration
learning when applied to generic coordination games. In particular,
we showed that over time, the efficient payoff profiles are
played (in expectation) with a frequency that can become arbitrarily
large. This analysis extended (and corrected) prior results on aspiration learning
which primarily focused on games of two players and two actions. We
further demonstrated these results through simulations on network
formation games, where distributed convergence to efficient networks
is of particular interest. Finally, we provided conditions under
which fair outcomes can be established in symmetric coordination games where
coincidence of interest among players is not so strong. For example,
we showed that in common-pool games, where multiple players compete
over utilizing a limited resource, the expected frequency at which
the common resource is exploited successfully is equally divided
among players as time increases, which establishes a form of
fairness.


\begin{thebibliography}{10}

\bibitem{Abramson70}
{\sc N.~Abramson}, {\em The {A}loha system - another alternative for computer
  communications}, in Proc. 1970 Fall Joint Computer Conference, AFIPS Press,
  ed., 1970, pp.~281--285.

\bibitem{ArieliBabichenko11}
{\sc I.~Arieli and Y.~Babichenko}, {\em Average testing and the efficient
  boundary}, Discussion paper, Department of Economics, University of Oxford
  and Hebrew University,  (2011).

\bibitem{Bala00}
{\sc V.~Bala and S.~Goyal}, {\em A noncooperative model of network formation},
  Econometrica, 68 (2000), pp.~1181--1229.

\bibitem{ChasparisShamma_CDC08}
{\sc G.C. Chasparis and J.S. Shamma}, {\em Efficient network formation by
  distributed reinforcement}, in IEEE 47th Conference on Decision and Control,
  Cancun, Mexico, Dec. 2008, pp.~4711--4715.

\bibitem{ChoMatsui05}
{\sc I.~K. Cho and A.~Matsui}, {\em Learning aspiration in repeated games},
  Journal of Economic Theory, 124 (2005), pp.~171--201.

\bibitem{Chun04}
{\sc B.~G. Chun, R.~Fonseca, I.~Stoica, and J.~Kubiatowicz}, {\em
  Characterizing selfishly constructed overlay routing networks}, in Proc. of
  IEEE INFOCOM 04, Hong-Kong, 2004.

\bibitem{FreidlinWentzell84}
{\sc M.~I. Freidlin and A.~D. Wentzell}, {\em Random perturbations of dynamical
  systems}, Springer-Verlag, New York, NY, 1984.

\bibitem{FudenbergLevine98}
{\sc D.~Fudenberg and D.~K. Levine}, {\em The Theory of Learning in Games}, MIT
  Press, Cambridge, MA, 1998.

\bibitem{Han08}
{\sc Z.~Han and K.J.~Ray Liu}, {\em Resource Allocation for Wireless Networks},
  Cambridge University Press, 2008.

\bibitem{Lerma03}
{\sc O.~Hernandez-Lerma and J.~B. Lasserre}, {\em Markov Chains and Invariant
  Probabilities}, Birkhauser Verlag, 2003.

\bibitem{Inaltekin05}
{\sc H.~Inaltekin and S.~Wicker}, {\em A one-shot random access game for
  wireless networks}, in International Conference on Wireless Networks,
  Communications and Mobile Computing, 2005.

\bibitem{JacksonWolinsky96}
{\sc M.~Jackson and A.~Wolinsky}, {\em A strategic model of social and economic
  networks}, Journal of Economic Theory, 71 (1996), pp.~44--74.

\bibitem{Karandikar98}
{\sc R.~Karandikar, D.~Mookherjee, and D.~Ray}, {\em Evolving aspirations and
  cooperation}, Journal of Economic Theory, 80 (1998), pp.~292--331.

\bibitem{Kim99}
{\sc Y.~Kim}, {\em Satisficing and optimality in $2\times{2}$ common interest
  games}, Economic Theory, 13 (1999), pp.~365--375.

\bibitem{Komali08}
{\sc R.~Komali, A.~B. MacKenzie, and R.~P. Gilles}, {\em Effect of selfish node
  behavior on efficient topology design}, IEEE Transactions on Mobile
  Computing, 7 (2008), pp.~1057--1070.

\bibitem{Lewis02}
{\sc D.~Lewis}, {\em Convention: A Philosophical Study}, Blackwell Publishing,
  2002.

\bibitem{MardenYoungPao11}
{\sc J.~Marden, H.~P. Young, and L.~Y. Pao}, {\em Achieving {P}areto optimality
  through distributed learning}, Discussion paper, Department of Economics,
  University of Oxford,  (2011).

\bibitem{Marden09}
{\sc J.~R. Marden, H.~P. Young, G.~Arslan, and J.~S. Shamma}, {\em Payoff-based
  dynamics for multi-player weakly acyclic games}, SIAM Journal on Control and
  Optimization, 48 (2009), pp.~373--396.

\bibitem{Meinhardt99}
{\sc H.~Meinhardt}, {\em Common pool games are convex games}, Journal of Public
  Economic Theory, 1 (1999), pp.~247--270.

\bibitem{Osborne94}
{\sc M.~J. Osborne and A.~Rubinstein}, {\em A Course in Game Theory}, MIT
  Press, Cambridge, MA, 1994.

\bibitem{Pazgal95}
{\sc A.~Pazgal}, {\em Satisficing leads to cooperation in mutual interest
  games}, Int J Game Theory, 26 (1997), pp.~698--712.

\bibitem{Posch99}
{\sc M.~Posch, A.~Pichler, and K.~Sigmund}, {\em The efficiency of adapting
  aspiration levels}, Biological Sciences, 266 (1999), pp.~1427--1435.

\bibitem{Sandholm10}
{\sc William~H. Sandholm}, {\em Population Games and Evolutionary Dynamics},
  The MIT Press, Cambridge, MA, 2010.

\bibitem{Santi05}
{\sc P.~Santi}, {\em Topology Control in Wireless Ad Hoc and Sensor Networks},
  Wiley, 2005.

\bibitem{Simon55}
{\sc H.~A. Simon}, {\em A behavioural model of rational choice}, Quarterly
  Journal of Economics, 69 (1955), pp.~99--118.

\bibitem{Tembine09}
{\sc H.~Tembine, E.~Altman, R.~ElAzouri, and Y.~Hayel}, {\em Correlated
  evolutionary stable strategies in random medium access control}, in
  International Conference on Game Theory for Networks, 2009, pp.~212--221.

\bibitem{Vanderschraaf01}
{\sc P.~Vanderschraaf}, {\em Learning and Coordination}, Routledge, New York,
  NY, 2001.

\bibitem{Young93}
{\sc H.~P. Young}, {\em The evolution of conventions}, Econometrica, 61 (1993),
  pp.~57--84.

\bibitem{Young04}
\leavevmode\vrule height 2pt depth -1.6pt width 23pt, {\em Strategic Learning
  and Its Limits}, Oxford University Press, New York, NY, 2004.

\end{thebibliography}
\newpage

\end{document}